\newcommand\cancel[1]{\tikz[baseline=0]\draw[red,semithick,line cap=round]node[rectangle,minimum width=0pt,minimum height=0pt,text height=1.7ex,draw=none,text=black!75,node font=\small,anchor=base,inner sep=0pt,behind path](s){$#1$}(s.185)--(s.5);}
\pgfplotsset{compat=1.16}
\newcommand{\stretchedhat}[2]{\stackon[-4.6pt]{#2}{\makebox[0pt]{\hstretch{#1}{\widehat{\phantom{\;\;\;\;\;\;\;\;}}}}}}
\global\long\def\defEq{\operatorname{\coloneqq}}%
\global\long\def\varname#1{\mathsf{#1}}%
\global\long\def\ccsStop{\mathbf{0}}%
\global\long\def\ccsPrefix{\ldotp\!}%
\global\long\def\ccsChoice{+}%
\global\long\def\ccs{\mathsf{CCS}}%
\global\long\def\ccsIdentifier#1{\mathsf{#1}}%
\global\long\def\ccsInm#1{#1}%
\global\long\def\rel#1{\mathcal{#1}}%
\global\long\def\bigo{\mathcal{O}}%
\global\long\def\relSize#1{\lvert\mathord{#1}\rvert}%
\newcommand{\powerSet}[1]{\mathbf{2}^{#1}}
\newcommand{\bellNumber}[1]{\mathbf{B}(#1)}
\newcommand{\gameMoveX}[1]{\mathrel{\smash{\xrightarrowtail{#1}}}}%
\newcommand{\gameMove}{\gameMoveX{\hspace*{0.5em}}}%
\newcommand{\game}{\mathcal{G}}%
\newcommand{\gameSpectroscopy}{\mathcal{G}_\triangle}%
\newcommand{\attackerPos}[1]{{(#1)}_\mathtt{a}}
\newcommand{\defenderPos}[1]{{(#1)}_\mathtt{d}}
\newcommand{\partition}[1]{\mathscr{#1}}
\newcommand{\attackerSubscript}{{\operatorname{a}}}
\newcommand{\defenderSubscript}{{\operatorname{d}}}
\newcommand{\proc}{\mathcal{P}}
\newcommand{\system}{\mathcal{S}}%
\newcommand{\action}[1]{\mathit{#1}}
\newcommand{\actions}{\Sigma}%
\newcommand{\step}[1]{\mathrel{\smash{\xrightarrow{#1}}}}%
\newcommand{\hml}{\mathsf{HML}}
\newcommand{\hmlA}{\hml[\actions]}
\newcommand{\hmlObs}[1]{\langle#1\rangle}
\newcommand{\hmlObsI}[1]{\hmlObs{\ccsInm{#1}}}
\newcommand{\hmlAnd}[2]{{\bigwedge_{#1 \in #2}}}
\newcommand{\hmlAndS}{{\bigwedge}}
\newcommand{\hmlTrue}{\mathsf{T}}
\newcommand{\hmlNeg}{\neg}
\newcommand{\hmlSemantics}[3]{{\llbracket #1 \rrbracket}^{#2}_{#3}}
\newcommand{\hmlStrategies}{\mathsf{Strat}}
\newcommand{\hmlPrune}{\mathsf{prune\_dominated}}
\newcommand{\expr}{\mathsf{expr}}
\newcommand{\prices}{\mathbf{Pr}}
\let\obs\undefined
\newcommand{\obs}[1]{\mathcal{O}_{\mathrm{#1}}}
\newcommand{\bEquiv}[1]{\sim_{\mathrm{#1}}}
\newcommand{\bPreord}[1]{\preceq_{\mathrm{#1}}}
\newcommand{\refDef}[1]{Definition~\ref{#1}}
\newcommand{\refExample}[1]{Example~\ref{#1}}
\newcommand{\refRem}[1]{Remark~\ref{#1}}
\newcommand{\refThm}[1]{Theorem~\ref{#1}}
\newcommand{\refLem}[1]{Lemma~\ref{#1}}
\newcommand{\refCor}[1]{Corollary~\ref{#1}}
\newcommand{\refSec}[1]{Section~\ref{#1}}
\newcommand{\refSubsec}[1]{Subsection~\ref{#1}}
\newcommand{\refFig}[1]{Figure~\ref{#1}}
\newcommand{\refClaim}[1]{Claim~\eqref{#1}}
\global\long\def\ie{\text{i.e.\,}}%
\colorlet{darkgreen}{green!60!black}
\colorlet{darkred}{red!50!black}
\newbox\xrat@below
\newbox\xrat@above
\newcommand{\xrightarrowtail}[2][]{%
  \setbox\xrat@below=\hbox{\ensuremath{\scriptstyle #1}}%
  \setbox\xrat@above=\hbox{\ensuremath{\scriptstyle #2}}%
  \pgfmathsetlengthmacro{\xrat@len}{max(\wd\xrat@below,\wd\xrat@above)+.6em}%
  \mathrel{\tikz [>->,baseline=-.58ex,line width=0.43pt]
                 \draw (0,0) -- node[below=-2pt] {\box\xrat@below}
                                node[above=-2pt] {\box\xrat@above}
                       (\xrat@len,0) ;}} 
\newcommand{\notand}{{\vphantom{\land}\smash{\text{\raisebox{-1pt}{\reflectbox{\rotatebox{90}{$\ngtr$}}}}}}}
\begin{document}
\title[Linear-Time--Branching-Time Spectroscopy]{\texorpdfstring{\vspace*{0.5em}}{}Deciding All Behavioral Equivalences at Once:\texorpdfstring{\\}{ }A Game for Linear-Time--Branching-Time Spectroscopy\texorpdfstring{\rsuper*}{}}%
\titlecomment{{\lsuper*}This is an extended and revised version of ``A Game for Linear-Time--Branching-Time Spectroscopy''~\cite{bisping2021ltbtsTacas}. It contains an important correction to the algorithm and a correctness proof for the revised algorithm.}
\author[B.~Bisping]{Benjamin Bisping\lmcsorcid{0000-0002-0637-0171}}[a]
\author[D.N.~Jansen]{David N. Jansen\lmcsorcid{0000-0002-6636-3301}}[b,c]
\author[U.~Nestmann]{Uwe Nestmann\lmcsorcid{0000-0002-8520-5448}}[a]
\address{Technische Universit\"at Berlin, Berlin, Germany}
\email{\{benjamin.bisping, uwe.nestmann\}@tu-berlin.de}
\address{State Key Laboratory of Computer Science, Institute of Software, Chinese Academy of Sciences, Beijing, China}
\email{dnjansen@ios.ac.cn}
\address{University of Chinese Academy of Sciences, Beijing, China}

\keywords{
  Process equivalence spectrum,
  Distinguishing formulas,
  Bisimulation games.}
\begin{abstract}

  We introduce a generalization of the bisimulation game
  that finds distinguishing Hennessy--Milner logic formulas
  from every finitary, subformula-closed language in van Glabbeek's linear-time--branching-time spectrum
  between two finite-state processes.
  We identify the relevant dimensions that measure expressive power
  to yield formulas belonging to the coarsest distinguishing behavioral preorders and equivalences;
  the compared processes are equivalent in each coarser behavioral equivalence from the spectrum.
  We prove that the induced algorithm can determine the best fit of (in)equivalences for a pair of processes.

\end{abstract}
\maketitle
\section{Introduction}

Have you ever looked at two system models and wondered what would be the finest notions of behavioral equivalence to equate them---or, conversely:
the coarsest ones to distinguish them?
We often run into this situation when analyzing models and, especially, when devising examples for teaching.
We then find ourselves fiddling around with whiteboards and various tools, each implementing different equivalence checkers.
Would it not be nice to \emph{decide all equivalences at once?}

\begin{figure}[t]
  \begin{tikzpicture}[->,auto,node distance=2cm, rel/.style={dashed,font=\it, blue}, ext/.style={line width=1pt}]
    \node (P1){$\ccsIdentifier{P_1}$};
    \node (P2) [right of=P1, node distance=5cm] {$\ccsIdentifier{P_2}$};

    \node (BC) [below left of=P1] {$\ccsInm{b}\ccsChoice\ccsInm{c}$};
    \node (D) [below right of=P1] {$\ccsInm{d}$};

    \node (BD) [below left of=P2] {$\ccsInm{b}\ccsChoice\ccsInm{d}$};
    \node (CD) [below right of=P2] {$\ccsInm{c}\ccsChoice\ccsInm{d}$};

    \node (Stop1) [below right of=BC] {$\ccsStop$};
    \node (Stop2) [below right of=BD] {$\ccsStop$};

    \path
    (P1) edge[swap] node {$\action{a}$} (BC)
    (P1) edge node {$\action{a}$} (D)
    (BC) edge[swap] node {$\action{b}, \action{c}$} (Stop1)
    (D) edge node {$\action{d}$} (Stop1)
    ;

    \path
    (P2) edge[swap] node {$\action{a}$} (BD)
    (P2) edge node {$\action{a}$} (CD)
    (BD) edge[swap] node {$\action{b}, \action{d}$} (Stop2)
    (CD) edge node {$\action{c}, \action{d}$} (Stop2)
    ;
  \end{tikzpicture}
  \caption{Processes for \refExample{ex:ccs-p1p2}.}%
  \label{fig:ts-example}
\end{figure}

\begin{exa}%
  \label{ex:ccs-p1p2}
  Consider the $\ccs$ process $\ccsIdentifier{P_1} = \ccsInm{a}\ccsPrefix(\ccsInm{b}\ccsChoice\ccsInm{c})\ccsChoice\ccsInm{a}\ccsPrefix\ccsInm{d}$,
  shown in \refFig{fig:ts-example}.
  It describes a machine that can be activated ($\ccsInm a$) and then either is in a state where one can choose from $\ccsInm{b}$ and \ccsInm{c} or where it can only be deactivated again ($\ccsInm d$).
  $\ccsIdentifier{P_1}$ shares a lot of properties with $\ccsIdentifier{P_2} = \ccsInm{a}\ccsPrefix(\ccsInm{b}\ccsChoice\ccsInm{d})\ccsChoice\ccsInm{a}\ccsPrefix(\ccsInm{c}\ccsChoice\ccsInm{d})$.
  For example, they have the same traces (and the same completed traces).
  Thus, they are \emph{(completed) trace equivalent.}

  But they also have differences.
  For instance, $\ccsIdentifier{P_1}$ has a run where it executes $\ccsInm{a}$
  and then cannot do $\ccsInm{d}$,
  while $\ccsIdentifier{P_2}$ does not have such a run.
  Hence, they are \emph{not failure equivalent.}
  Moreover, $\ccsIdentifier{P_1}$ may perform $\ccsInm{a}$ and then choose from $\ccsInm{b}$ and $\ccsInm{c}$,
  and $\ccsIdentifier{P_2}$ cannot.
  This renders the two processes also \emph{not simulation equivalent.}
  Failure equivalence and simulation equivalence are incomparable---that is, neither one follows from the other one.
  \emph{Both} are maximally coarse ways of telling the processes apart.
  Other inequivalences, like bisimulation inequivalence, are implied by both.
  Together, they make (completed) trace equivalence the finest notion to equate the processes.
\end{exa}

In the following, we present a uniform game-based way of finding the most fitting notions of (in)equivalence for process pairs like in \refExample{ex:ccs-p1p2}.

Our approach is based on the fact that notions of process equivalence can be characterized by two-player games.
The defender's winning region in the game corresponds to pairs of equivalent states,
and the attacker's winning strategies correspond to distinguishing formulas in Hennessy--Milner logic (HML).

Each notion of equivalence in van Glabbeek's famous linear-time--branch\-ing-time spectrum~\cite{glabbeek1990ltbt1} can be characterized by a subset of HML with specific distinguishing power.
Some of the notions are incomparable.
So, often a process pair that is equivalent with respect to one equivalence is distinguished by a set of slightly coarser or incomparable equivalences,
without any one of them alone being \emph{the} coarsest way to distinguish the pair.
As with the spectrum of light where a mix of wave lengths shows to us as a color,
there is a ``mix'' of distinguishing capabilities involved in establishing whether a specific equivalence is finest.
We present an algorithm that is meant to analyze what is in the mix.

\subsection*{Contributions} More precisely, this paper makes the following contributions:
\begin{itemize}
\item We \emph{rechart the linear-time--branching-time spectrum of observation languages using ``formula prices''} that capture six dimensions of expressive capabilities used in HML formulas (\refSubsec{subsec:priced-formulas}).
\item We introduce a \emph{special bisimulation game} that neatly characterizes the distinguishing formulas of HML for pairs of states in finite transition systems (\refSubsec{subsec:spectroscopy-game}).
\item We show how to \emph{enumerate the relevant distinguishing formulas} using the attacker's winning region in this game (\refSubsec{subsec:building-formulas}).
\item We define an algorithm that constructs a \emph{finite set of distinguishing formulas guaranteed to contain observations of the weakest possible observation languages,}
  which can be seen as a ``spectroscopy'' of the differences between two processes
  (\refSubsec{subsec:cheapest-formulas}).
\item We present a small \emph{web tool that is able to run the algorithm on finite-state processes} and output a visual representation of the results (\refSubsec{subsec:webtool}). We also report on the distinctions it finds for all the finitary
examples from the report version of the linear-time--branching-time spectrum~\cite{glabbeek2001ltbtsiReport}.
\item Additionally, we quickly report on a \emph{browser computer game based on the spectroscopy mechanics} (\refSubsec{subsec:spectro-browser-game}).
\end{itemize}
We frame the contributions by a roundtrip through the basics of HML, games and the spectrum (\refSec{sec:preliminaries}), a discussion of related work (\refSec{sec:related}), and concluding remarks on future lines of research (\refSec{sec:conclusion}).

Compared to the original conference version of the paper~\cite{bisping2021ltbtsTacas}, this paper includes an important correction of the spectroscopy game in \refSubsec{subsec:spectroscopy-game}. This allows us to now prove correctness of the algorithm in \refSubsec{subsec:algo-correctness}. Why the change has been necessary is discussed in \refSubsec{subsec:difference-tacas}. On the road to establishing the new correctness result, the journal version pays more attention to the fine points of how we assign formulas expressiveness prices. We are also glad to devote more space to presenting more examples, proofs and subalgorithms, and to also cover the enabledness preorder.

\section{Preliminaries: HML, Games, and the Spectrum}%
\label{sec:preliminaries}

We use the concepts of transition systems, games, observations, and notions of equivalence, largely due to the wake of Hennessy and Milner's seminal paper~\cite{hm1980hml}.

\subsection{Transition Systems and Hennessy--Milner Logic}%
\label{subsec:LTS-HML}

\emph{Labeled transition systems} capture a discrete world view, where there is a current state and a branching structure of possible state changes to future states.
\begin{defi}[Labeled transition system]%
  \label{def:transition-system}
  A \emph{labeled transition system} (LTS) is a triple $(\proc,\actions,\mathord{\step{}})$
  where $\proc$ is the set of \emph{states,}
  $\actions$ the set of \emph{actions,}
  and $\mathord{\step{}}\subseteq \proc\times\actions\times \proc$ the \emph{transition relation.}
\end{defi}
Figure~\ref{fig:ts-example} displays the labeled transition system for \refExample{ex:ccs-p1p2}.
In our examples, we use a small fragment of $\ccs$ to describe certain transition system states.
This fragment is called $\mathsf{BCCSP}$ in~\cite{glabbeek2001ltbtsiReport}.
It contains just \emph{action prefixing} ($\ccsInm{a} \ccsPrefix P$),
\emph{summation} ($P_1 \ccsChoice P_2$),
and the \emph{completed process} ($\ccsStop$).
(Note that LTSs are more powerful than the simple examples;
in particular, our tool is also able to handle recursively defined $\ccs$ terms.)
This fragment only needs two rules for its semantics:
\begin{enumerate}
  \item $\ccsInm{a} \ccsPrefix P \; \step{\action{a}} \; P$ \quad for $P \in \ccs$ and $\action{a} \in \actions$, and
  \item $P_1 \ccsChoice P_2 \; \step{a} \; P_i^\prime$ \quad if there is $i \in \{1,2\}$ such that $P_i \step{a} P_i^\prime$.
\end{enumerate}
Silently assuming commutativity and associativity, we sometimes write $P_1 \ccsChoice P_2 \ccsChoice \cdots \ccsChoice P_n$.

\emph{Hennessy--Milner logic}~\cite{hm1980hml} describes \emph{observations} (or ``tests'') 
on such a system.

\begin{defi}[Hennessy--Milner logic]%
  \label{def:hml}
  Given an alphabet $\actions$, the syntax of \emph{Hennessy--Milner logic} formulas, $\hmlA$, is inductively defined as follows:
  \begin{description}[leftmargin=!,labelwidth=\widthof{\bfseries Conjunctions:}]
    \item[Observations] If $\varphi \in \hmlA$ and $a \in \actions$, then $\hmlObs{a} \varphi \in \hmlA$.
    \item[Conjunctions] If $\varphi_i \in \hmlA$ for all $i$ from an index set $I$, then $\hmlAnd{i}{I}\varphi_i \in \hmlA$.
    \item[Negations] If $\varphi \in \hmlA$, then $\hmlNeg \varphi \in \hmlA$.
  \end{description}
\end{defi}
\noindent
Intuitively, $\hmlObs{\action a}\varphi$ means that one can observe a system transition labeled by $\action a$ and then continue to make observation(s) $\varphi$. Conjunction and negation work as known from propositional logic. So, $\hmlObs{a}\hmlNeg\hmlObs{d}\hmlTrue$ can be read as ``one can observe $a$ in such a way that afterwards one cannot observe a $d$.'' We will provide a common game semantics for $\hml$ in the following subsection.

We often write $\hmlAndS\{\varphi_0, \varphi_1, \dots\}$ for $\hmlAnd{i}{I}\varphi_i$.
$\hmlTrue$ denotes $\hmlAndS \varnothing$, the nil-element of the syntax tree,
and $\hmlObs{a}$ is a short-hand for $\hmlObs{a}\hmlTrue$.
We also implicitly assume that formulas are flattened
in the sense that conjunctions do not contain other conjunctions as immediate subformulas.
We will sometimes talk about the syntax tree height of a formula and consider the height of $\hmlTrue$ to equal $0$.

In principle, \refDef{def:hml} can also be read to be infinitary with respect to branching degree or recursion depth.
Allowing infinite index sets $I$ enables formulas like $\hmlAnd{n}{\mathbb{N}}{\hmlObs{a}^n}$, ``one can observe an arbitrary number of $a$s.''
A coinductive reading makes formulas like $\hmlObs{a}^\omega$, ``one can observe an infinite sequence of $a$s,'' possible.
Note that the formulas both have infinite height.
Obviously, this might add a lot of expressiveness to $\hml$.
For the scope of this paper, we are concerned with finite formulas only.
The prices we are going to define do not distinguish properly between different infinitely branching or infinitely deep formulas.

\subsection{Game Semantics of HML\label{subsec:hml-game}}

Let us fix some notions for \emph{Gale--Stewart-style reachability games} where the defender wins all infinite plays.

\begin{defi}[Games]%
  \label{def:simple-game}
  A \emph{reachability game} $\game[g_{0}] = (G, G_\defenderSubscript, \gameMove, g_0)$ is played on a directed graph consisting of
  \begin{itemize}
    \item a set of \emph{game positions} (vertices) $G$, partitioned into
    \begin{itemize}
      \item a set of \emph{defender positions} $G_\defenderSubscript\subseteq G$
      \item a set of \emph{attacker positions} $G_\attackerSubscript\defEq G \setminus G_\defenderSubscript$,
    \end{itemize}
    \item a set of \emph{game moves} (edges) $\operatorname{\gameMove}\subseteq G\times G$, and
    \item an \emph{initial position} $g_{0}\in G$.
  \end{itemize}
\end{defi}

\begin{defi}[Plays and wins]%
  \label{def:plays-wins}
  We call the paths $g_{0}g_{1}\ldots\in G^{\infty}$ with $g_{i}\gameMove g_{i+1}$ \emph{plays} of $\game[g_{0}]$. They may be finite or infinite. The defender \emph{wins} infinite plays. If a finite play $g_{0}\dots g_{n}\!\centernot{\gameMove}$ is stuck, the stuck player loses: The defender wins if $g_{n}\in G_\attackerSubscript$, and the attacker wins if $g_{n}\in G_\defenderSubscript$.
\end{defi}
\begin{defi}[Strategies and winning strategies]%
  \label{def:strategies}
  A (positional, non-deterministic) \emph{attacker strategy} is a subset of the moves starting in attacker states,
  $F \subseteq (G_\attackerSubscript \times G) \cap \mathord{\gameMove}$.
  Similarly, a \emph{defender strategy} is a subset of the moves starting in defender states,
  $F \subseteq (G_\defenderSubscript \times G) \cap \mathord{\gameMove}$.
  If (fairly) picking elements of strategy $F$ ensures a player to win,
  $F$ is called a \emph{winning strategy} for this player.
  The player with a winning strategy for $\game[g_{0}]$ is said to \emph{win} $\game[g_{0}]$.
  If $F$ is a function on $G_\attackerSubscript$ or $G_\defenderSubscript$, respectively, we call it a \emph{deterministic} strategy.
\end{defi}
The games we use in this paper essentially are parity games only colored by 0. So they are positionally determined. This means, for each possible initial position, exactly one of the two players has a positional deterministic winning strategy $F$. We call this partitioning of the game positions the winning regions.

\begin{defi}[Winning regions]%
  \label{def:winning-region}
  The set $W_\attackerSubscript \subseteq G$ of all positions $g$ where the attacker wins $\game[g]$ is called the \emph{attacker winning region.}
  (The defender winning region $W_\defenderSubscript$ is defined analogously.)
\end{defi}

It is well-known that winning regions of \emph{finite} reachability games can be computed in linear time of the number of game moves. (An algorithm for this is discussed in \refSubsec{subsec:deciding-game}.) This is why the \emph{spectroscopy game} that we introduce in \refSubsec{subsec:spectroscopy-game} can easily be used in algorithms. It derives from the following semantics game for HML, where the defender tries to prove a formula and the attacker tries to falsify it.

\begin{defi}[$\hml$ game]%
  \label{def:hml-game}
  For a transition system $\system=(\proc,\actions,\step{})$,
  the \emph{$\hml$ game} $\game_{\hml}^\system[g_{0}]=(G,G_\defenderSubscript,\gameMove,g_{0})$ is played on $G = \proc \times \hmlA$,
  where the defender controls observations and negated conjunctions, that is $(p, \hmlObs{a}\varphi) \in G_\defenderSubscript$ and $(p,\hmlNeg\hmlAnd{i}{I}\varphi_i) \in G_\defenderSubscript$ (for all $\varphi,p,I$), and the attacker controls the rest.

  \medskip
  \noindent
  \begin{tabularx}{\linewidth}{@{}c@{\hspace{\labelsep}}c@{\hspace{\labelsep}}l@{\;}c@{\;}l@{\quad}X@{}}
  \textbullet & \multicolumn{5}{@{}l}{The defender can perform the moves:} \\
  & \textbf{--} & $(p, \hmlObs{a}\varphi)$                       & $\gameMove$ & $(p^\prime, \varphi)$        & if $p \step{a} p^\prime$ and \\
  & \textbf{--} & $(p, \hmlNeg{\hmlAnd{i}{I}\varphi_i})$         & $\gameMove$ & $(p, \hmlNeg\varphi_i)$      & with $i \in I$; \\
  \end{tabularx} \\ \pagebreak[0]
  \begin{tabularx}{\linewidth}{@{}c@{\hspace{\labelsep}}c@{\hspace{\labelsep}}l@{\;}c@{\;}l@{\quad}X@{}}
  \textbullet & \multicolumn{5}{@{}l}{and the attacker can move:} \\
  & \textbf{--} & $(p, \hmlNeg\hmlObs{a}\varphi)$                & $\gameMove$ & $(p^\prime, \hmlNeg\varphi)$ & if $p \step{a} p^\prime$, \\
  & \textbf{--} & $(p, \hmlAnd{i}{I}\varphi_i)\phantom{\hmlNeg}$ & $\gameMove$ & $(p, \varphi_i)$             & with $i \in I$, and \\
  & \textbf{--} & $(p, \hmlNeg\hmlNeg\varphi)$                   & $\gameMove$ & $(p, \varphi)$.
  \end{tabularx}
\end{defi}

\noindent
Like in other logical games in the Ehrenfeucht--Fra{\"{\i}}ss{\'e} tradition, the attacker plays the conjunctions and universal quantifiers, whereas the defender plays the disjunctions and existential quantifiers. For instance, $(p, \hmlObs{a}\varphi)$ is declared as defender position, since $\hmlObs{a}\varphi$ is meant to become true precisely if \emph{there exists} a state $p^\prime$ reachable $p \step{a} p^\prime$ where $\varphi$ is true.

As every move strictly reduces the height of the formula, the game must be finite-depth (and cycle-free) for finite-height formulas, and, for image-finite systems and formulas, also finite. It is determined and the following semantics is total.

\begin{defi}[$\hml$ semantics]%
  \label{def:hml-semantics}
  For a transition system $\system$, 
  the \emph{semantics of $\hml$} is given by defining that $\varphi$ is true at $p$ in $\system$, written $\hmlSemantics{\varphi}{\system}{p}$, iff the defender wins $\game_\hml^\system[(p, \varphi)]$.
\end{defi}

\begin{exa}
  Continuing \refExample{ex:ccs-p1p2}, $\hmlSemantics{\hmlObs{a}\hmlNeg\hmlObs{d}}{\ccs}{\ccsIdentifier{P_2}}$ is false:
  No matter whether the defender plays to
  $(\ccsInm{b}\ccsChoice\ccsInm{d}, \hmlNeg\hmlObs{d})$ or to
  $(\ccsInm{c}\ccsChoice\ccsInm{d}, \hmlNeg\hmlObs{d})$,
  the attacker wins by moving to the stuck defender position $(\ccsStop, \hmlNeg\hmlTrue)$. (Recall that $\hmlTrue$ is the empty conjunction.)
\end{exa}

\subsection{The Spectrum of Behavioral Equivalences\label{subsec:notions-equivalence}}

For different theoretical and practical applications, a universe of notions of behavioral equivalence has been developed. Those equivalences are often defined in terms of relations on transition system state spaces or sets of executions. For the purpose of our paper, we rather focus on the correspondence between $\hml$ observation languages and notions of behavioral equivalence. In this framework, equivalence of processes means that the same observations are true for them.

\begin{figure}[t]
  \begin{center}
\begin{tikzpicture}[auto,node distance=2.1cm]

\node (B){bisimulation};
\node (2S)[below of=B, node distance=1.2cm]{2-nested simulation};
\node (RS)[below left of=2S]{ready simulation};
\node (RT)[below right of=RS]{ready trace};
\node (PF)[right of=RT,node distance=2.6cm]{possible futures};
\node (FT)[below left of=RT]{failure trace\vphantom{p}};
\node (R)[right of=FT]{readiness\vphantom{p}};
\node (IF)[right of=R, node distance=2.8cm]{impossible futures};
\node (S)[left of=FT, node distance=2.6cm]{simulation\vphantom{p}};
\node (F)[below right of=FT]{failures};
\node (T)[below of=F, node distance=1.2cm]{traces};
\node (E)[below of=T, node distance=1.2cm]{enabledness};

\path
(B) edge (2S)
(2S) edge (RS)
(2S) edge (PF)
(RS) edge (RT)
(RS) edge (S)
(RT) edge (FT)
(RT) edge (R)
(PF) edge (R)
(PF) edge (IF)
(S) edge (T)
(FT) edge (F)
(R) edge (F)
(IF) edge (F)
(F) edge (T)
(T) edge (E)
;

\end{tikzpicture}
  \end{center}
\caption{Hierarchy of equivalences/preorders becoming finer towards the top.\label{fig:ltbt-spectrum} Lines mean inclusion of observation languages from top to bottom.}
\end{figure}

\begin{defi}[Distinguishing formula]%
  \label{def:hml-distinguishing}
  A formula $\varphi$ \emph{distinguishes} state $p$ from $q$ iff $\hmlSemantics{\varphi}{}{p}$ is true and $\hmlSemantics{\varphi}{}{q}$ is not.\footnote{Here, and in the following, we usually leave the transition system $\system$ implicit.}
\end{defi}

\begin{exa}%
  \label{exa:distinguish-P1-from-P2}
  The formula $\hmlObs{a}\hmlNeg\hmlObs{d}$ distinguishes $\ccsIdentifier{P_1}$ from $\ccsIdentifier{P_2}$ in \refExample{ex:ccs-p1p2} (but not the other way around).
  The formula $\hmlObs{a}\hmlAndS\{\hmlObs{b}, \hmlObs{d}\}$ distinguishes $\ccsIdentifier{P_2}$ from $\ccsIdentifier{P_1}$.
\end{exa}

\begin{defi}[Observational preorders and equivalences]%
  \label{def:observation-preorder}
  A set of observations, $\obs{\mathit X} \subseteq \hmlA$, \emph{preorders} two states $p,q$, written $p \bPreord{\mathit X} q$, iff no formula $\varphi\in\obs{\mathit X}$ distinguishes $p$ from $q$. If $p \bPreord{\mathit X} q$ and $q \bPreord{\mathit X} p$, then the two are $X$-equivalent, written $p \bEquiv{\mathit X} q$.
\end{defi}

\begin{exa}[Enabledness preorder and equivalence]\label{ex:enabledness}
  $\obs{E} = \{ \hmlObs{a} \mid a \in \actions \} \cup \{ \hmlTrue \}$ defines the preorder on what actions are initially enabled at two compared processes $\bPreord{E}$, respectively the equivalence $\bEquiv{E}$.
  Effectively, it means that $p \bPreord{E} q$ iff, for all $p \step{a} p'$, there is a $q'$ such that $q \step{a} q'$.
  (The $\{ \hmlTrue \}$ is there in order to ensure compatibility with upcoming definitions.)
\end{exa}

The enabledness equivalence $\bEquiv{E}$ is presumably the coarsest equivalence that one may encounter in the literature. There are many noteworthy finer equivalences. A broad overview is given in Figure~\ref{fig:ltbt-spectrum}.

\begin{defi}[Linear-time--branching-time observation languages~\cite{glabbeek2001ltbtsiReport}]%
  \label{def:ltbts}
  The linear-time--branching-time spectrum is a lattice of observation languages (and of entailed process preorders and equivalences). Every observation language $\obs{\mathit X}$ can perform trace observations, that is, $\hmlTrue \in \obs{\mathit X}$ and, if $\varphi \in \obs{\mathit X}$, then $\hmlObs{a}\varphi \in \obs{\mathit X}$. At the more linear-time side of the spectrum we have:

  \medskip

  \noindent
  \begin{tabularx}{\linewidth}{@{}c@{\hspace{\labelsep}}l@{ }l@{ }l@{\;}r@{ }X@{}}
    \textbullet & \emph{trace observations}         & $\obs{T }$: & \multicolumn{3}{@{}l}{Just trace observations,} \\
    \textbullet & \emph{failure observations}       & $\obs{F }$: & $\hmlAnd{i}{I} \hmlNeg\hmlObs{a_i} \in \obs{F}$, \\
    \textbullet & \emph{readiness observations}     & $\obs{R }$: & $\hmlAnd{i}{I} \varphi_i \in \obs{R}$         & if & all $\varphi_i \in \{ \hmlObs{a}, \hmlNeg\hmlObs{a} \mid a \in \actions\}$, \\
    \textbullet & \emph{failure trace observations} & $\obs{FT}$: & $\hmlAnd{i}{I} \varphi_i \in \obs{FT}$        & if & $\varphi_0 \in \obs{FT}$ and $\varphi_i = \hmlNeg\hmlObs{a_i}$ for $i \neq 0$, \\
    \textbullet & \emph{ready trace observations}   & $\obs{RT}$: & $\hmlAnd{i}{I} \varphi_i \in \obs{RT}$        & if & $\varphi_0 \in \obs{RT}$ and $\varphi_i \in \{ \hmlObs{a}, \hmlNeg\hmlObs{a} \mid a \in \actions\}$ for $i \neq 0$, \\
    \textbullet & \emph{impossible futures}         & $\obs{IF}$: & $\hmlAnd{i}{I} \hmlNeg\varphi_i \in \obs{IF}$ & if & all $\varphi_i \in \obs{T}$, and \\
    \textbullet & \emph{possible futures}           & $\obs{PF}$: & $\hmlAnd{i}{I} \varphi_i \in \obs{PF}$        & if & all $\varphi_i \in \{ \psi, \hmlNeg\psi \mid \psi \in \obs{T}\}$.
  \end{tabularx}

  \medskip

  \noindent
  At the more branching-time side, we have simulation observations.
  Every simulation observation language $\obs{\mathit{X}S}$ permits trace observation construction and has full conjunctive capacity,
  that is, if $\varphi_i \in \obs{\mathit{X}S}$ for all $i \in I$,
  then $\hmlAnd{i}{I} \varphi_i \in \obs{\mathit{X}S}$.

  \medskip

  \noindent
  \begin{tabularx}{\linewidth}{@{}c@{\hspace{\labelsep}}l@{ }l@{ }X@{}}
    \textbullet & \emph{simulation observations}            & $\obs{1S}$:          & Just simulation (and trace) observations, \\
    \textbullet & \emph{$n$-nested simulation observations} & $\obs{\mathit{n}S}$: & $\hmlNeg\varphi \in \obs{\mathit{n}S}$ if $\varphi \in \obs{(\mathit{n}-1)S}$, \\
    \textbullet & \emph{ready simulation observations}      & $\obs{RS}$:          & $\hmlNeg\hmlObs{a} \in \obs{RS}$, and \\
    \textbullet & \emph{bisimulation observations}          & $\obs{B}$:           & The same as $\bigcup_{n \in \mathbb{N}} \obs{\mathit{n}S}$, which is exactly $\hmlA$.
  \end{tabularx}
\end{defi}

\noindent
The observation languages of the spectrum differ in how many of the syntactic features of $\hml$ one will encounter when descending into a formula's syntax tree. We will come back to this in \refSubsec{subsec:priced-formulas}.

The languages of \refDef{def:ltbts} and $\obs{E}$ can be ordered by subset relations between them.
The resulting inclusion structure is depicted in Figure~\ref{fig:ltbt-spectrum}.
For two observation languages $\obs{\mathit X}$ and $\obs{\mathit Y}$ with $\obs{\mathit X} \subseteq \obs{\mathit Y}$,
it is clear that $\obs{\mathit X}$ has at most as much distinctive capability as $\obs{\mathit Y}$,
and thus $p \bPreord{\mathit Y} q$ implies $p \bPreord{\mathit X} q$.
(Pay attention to the subset relation and the implication running in opposite directions!)
Thus, the equivalences referred to in Figure~\ref{fig:ltbt-spectrum} imply one-another downwards.
Those implications usually are strict, albeit not for every transition system.

Note that we consider $\hmlAndS \{ \varphi \}$ to be an alias for $\varphi$.
With this aliasing, all the listed observation languages are \emph{closed}
in the sense that all subformulas and partial conjunctions (conjunctions of subsets) within each observation are themselves part of that language.

\begin{defi}[Closed observation language] We say that an observation language $\obs{\mathit X}$ is \emph{closed} if:
\begin{itemize}
  \item $\hmlObs{a}\varphi \in \obs{\mathit X}$ implies $\varphi \in \obs{\mathit X}$,
  \item $\hmlAndS \Phi \in \obs{\mathit X}$ implies $\Phi \subseteq \obs{\mathit X}$ and $\hmlAndS \Phi' \in \obs{\mathit X}$ for every $\Phi' \subseteq \Phi$, and
  \item $(\hmlNeg\varphi) \in \obs{\mathit X}$ implies $\varphi \in \obs{\mathit X}$.
\end{itemize}
\end{defi}
\begin{prop}\label{prop:languages-closed}
  The languages of \refDef{def:ltbts} and $\obs{E}$ of \refExample{ex:enabledness} are closed.
\end{prop}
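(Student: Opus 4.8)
\noindent
The plan is to verify the three closure conditions for each of the listed languages in turn, using that every $\obs{X}$ in \refDef{def:ltbts} is the least set of $\hml$ formulas closed under (i) the trace-observation rules $\hmlTrue \in \obs{X}$ and $\varphi \in \obs{X} \Rightarrow \hmlObs{a}\varphi \in \obs{X}$, and (ii) the language-specific conjunction and negation rules. The first closure condition, $\hmlObs{a}\varphi \in \obs{X} \Rightarrow \varphi \in \obs{X}$, I would dispatch uniformly: because of the aliasing $\hmlAndS\{\psi\} = \psi$, the only generating rule that can produce a formula with an outermost $\hmlObs{\cdot}$ is the trace rule, and that rule carries $\varphi \in \obs{X}$ as its premise; hence every diamond formula of $\obs{X}$ already has its body in $\obs{X}$. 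For $\obs{B} = \hmlA$ all three conditions hold trivially, and $\obs{E}$ is treated separately at the end.

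For the conjunction condition I would split the remaining languages into the simulation family and the linear-time family. The simulation languages $\obs{1S}$, $\obs{\mathit{n}S}$ and $\obs{RS}$ have \emph{full conjunctive capacity}, so the condition is immediate: $\hmlAndS\Phi \in \obs{X}$ forces $\Phi \subseteq \obs{X}$, and then $\Phi' \subseteq \Phi \subseteq \obs{X}$ gives $\hmlAndS\Phi' \in \obs{X}$ by the very same rule. For the linear-time languages the conjunctions in $\obs{X}$ have restricted shape --- a set of (possibly negated) diamond atoms for $\obs{R}$, of negated atoms for $\obs{F}$, of negated traces for $\obs{IF}$, of traces and negated traces for $\obs{PF}$, or, for $\obs{FT}$ and $\obs{RT}$, such a set of (possibly negated) diamond atoms together with one distinguished ``head'' conjunct taken from $\obs{X}$ itself --- and I would check that every subset of such a set is again of an admissible shape. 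Two small points need a line of argument: that every single conjunct lies in $\obs{X}$ (for instance $\hmlNeg\hmlObs{a} \in \obs{F}$ by instantiating the rule with a one-element index set, while $\hmlObs{a} = \hmlObs{a}\hmlTrue$ is a trace observation), and that dropping the head conjunct in the $\obs{FT}/\obs{RT}$ cases still leaves an element of $\obs{X}$, which works because $\hmlTrue \in \obs{X}$ can take the role of a trivial head and is absorbed by flattening.

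For the negation condition I would, for each language, enumerate the formulas $\hmlNeg\psi$ it contains --- by the same least-fixed-point reasoning these are only $\hmlNeg\hmlObs{a}$ for $\obs{F}$, $\obs{FT}$, $\obs{R}$, $\obs{RT}$ and $\obs{RS}$; negated traces $\hmlNeg\psi$ with $\psi \in \obs{T}$ for $\obs{IF}$ and $\obs{PF}$; and $\hmlNeg\varphi$ with $\varphi \in \obs{(\mathit{n}-1)S}$ for $\obs{\mathit{n}S}$ --- and observe that in each case the stripped formula lands back in $\obs{X}$: $\hmlObs{a}\hmlTrue$ is a trace observation and hence in every $\obs{X}$, while $\obs{T} \subseteq \obs{X}$ directly from \refDef{def:ltbts} and $\obs{(\mathit{n}-1)S} \subseteq \obs{\mathit{n}S}$ follows by a short induction on $n$ (each $\obs{\mathit{n}S}$ satisfies all the closure properties defining $\obs{(\mathit{n}-1)S}$, hence contains the least such set). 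Finally, $\obs{E} = \{\hmlObs{a} \mid a \in \actions\} \cup \{\hmlTrue\}$ is finite and contains neither a negation nor a non-trivial conjunction, so all three conditions follow by inspection; I would note in passing that $\obs{E}$ is \emph{not} a trace-observation language in the sense of \refDef{def:ltbts} (because $\hmlObs{a} \in \obs{E}$ but $\hmlObs{a}\hmlObs{a} \notin \obs{E}$), but it still meets the closure conditions.

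Essentially all of this is bookkeeping; I expect the only mildly delicate points to be the interaction of the aliasing and flattening conventions with the least-fixed-point reading of the grammars --- that is, arguing that a formula of a given outermost shape can only have entered $\obs{X}$ through particular rules, so that the closure witnesses are the obvious ones --- and the head-recursive families $\obs{FT}$ and $\obs{RT}$, where the sub-conjunctions that discard the recursive head must still be recognised as admissible.
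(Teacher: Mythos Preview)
Your proposal is correct and follows essentially the same approach as the paper: verifying the closure conditions by inspection of the inductive definitions, relying on the aliasing $\hmlAndS\{\varphi\}=\varphi$ for the negation-under-conjunction cases. The paper's own proof is in fact only a brief sketch that works out the single case of $\obs{F}$ and then declares the remaining languages ``equally immediate,'' so your systematic case split (simulation family versus linear-time family, with the head-recursive $\obs{FT}/\obs{RT}$ cases and the $\obs{(\mathit{n}{-}1)S}\subseteq\obs{\mathit{n}S}$ inclusion flagged as the only non-trivial points) is already more thorough than what the paper offers.
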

\begin{proof}
  This can be seen by examining the definition. The aliasing is necessary for all cases with negations under conjunctions. For instance, for failure observations $\obs{F}$, we have to prove that $\hmlAnd{i}{I} \hmlNeg\hmlObs{a_i} \in \obs{F}$ implies $\hmlObs{a_i} \in \obs{F}$ and $\hmlNeg\hmlObs{a_i} \in \obs{F}$  for all $i \in I$, as well as $\hmlAnd{i}{I'} \hmlNeg\hmlObs{a_i} \in \obs{F}$ with $I' \subseteq I$. We easily see $\hmlObs{a_i} \in \obs{F}$ because $\hmlObs{a_i} \in \obs{T} \subseteq \obs{F}$. The second term is not mentioned by the definition, but has $\hmlAndS \{ \hmlNeg\hmlObs{a_i} \}$ as an alias. The alias is mentioned by the definition, as are all $\hmlAnd{i}{I'} \hmlNeg\hmlObs{a_i} \in \obs{F}$ with $I' \subseteq I$. The other observation languages are equally immediate.
\end{proof}

The languages of \refDef{def:ltbts} and $\obs{E}$ thus are \emph{inductive}
in the sense that all observations with finite syntax tree height must be built from smaller observations of the same language.
This is convenient in proofs by structural induction.

\begin{figure}
  \begin{tikzpicture}[->,auto,node distance=1.6cm, rel/.style={dashed,font=\it, blue}, ext/.style={line width=1pt}]
    \node (P3){$\ccsIdentifier{P_3}$};
    \node (P4) [right of=P3, node distance=6cm] {$\ccsIdentifier{P_4}$};

    \node (B+CD) [below left of=P3] {$\ccsInm{b}\ccsChoice\ccsInm{c}\ccsPrefix\ccsInm{d}$};
    \node (F+CE) [below right of=P3] {$\smash[b]{\ccsInm{f}}\ccsChoice\ccsInm{c}\ccsPrefix\ccsInm{e}$};

    \node (B+CE) [below left of=P4] {$\ccsInm{b}\ccsChoice\ccsInm{c}\ccsPrefix\ccsInm{e}$};
    \node (F+CD) [below right of=P4] {$\smash[b]{\ccsInm{f}}\ccsChoice\ccsInm{c}\ccsPrefix\ccsInm{d}$};

    \node (D1) [below of=B+CD] {$\ccsInm{d}$};
    \node (E1) [below of=F+CE] {$\ccsInm{e\vphantom{d}}$};

    \node (E2) [below of=B+CE] {$\ccsInm{e\vphantom{d}}$};
    \node (D2) [below of=F+CD] {$\ccsInm{d}$};

    \node (Stop1) [circle,below right of=D1] {$\ccsStop$};
    \node (Stop2) [circle,below left of=D2] {$\ccsStop$};

    \path
    (P3) edge[swap] node {$\action{a}$} (B+CD)
    (P3) edge node {$\action{a}$} (F+CE)
    (B+CD) edge[shorten >=0.4pt] node {$\action{b}$} (Stop1.102)
    (B+CD) edge[swap] node {$\action{c}$} (D1)
    (D1) edge[swap] node {$\action{d}$} (Stop1)
    (F+CE) edge[swap,shorten >=0.4pt] node {$\smash[b]{\action{f}}$} (Stop1.78)
    (F+CE) edge node {$\action{c}$} (E1)
    (E1) edge node {$\action{e\vphantom{d}}$} (Stop1)
    ;

    \path
    (P4) edge[swap] node {$\action{a}$} (B+CE)
    (P4) edge node {$\action{a}$} (F+CD)
    (B+CE) edge[shorten >=0.4pt] node {$\action{b}$} (Stop2.102)
    (B+CE) edge[swap] node {$\action{c}$} (E2)
    (E2) edge[swap] node {$\action{e\vphantom{d}}$} (Stop2)
    (F+CD) edge[swap,shorten >=0.4pt] node {$\smash[b]{\action{f}}$} (Stop2.78)
    (F+CD) edge node {$\action{c}$} (D2)
    (D2) edge node {$\action{d}$} (Stop2)
    ;
  \end{tikzpicture}
  \caption{Additional processes for \refExample{exa:distinguishing-formulas}.
      (They differ by the position of $\ccsInm{d}$ and $\ccsInm{e}$.)}%
  \label{fig:ts-example-P3P4}
\end{figure}

\begin{exa}\label{exa:distinguishing-formulas}
  As explained in Examples~\ref{ex:ccs-p1p2} and~\ref{exa:distinguish-P1-from-P2},
  process $\ccsIdentifier{P_1}$ has a run where it executes $\ccsInm{a}$ and then cannot do $\ccsInm{d}$,
  while $\ccsIdentifier{P_2}$ does not have such a run,
  so $\ccsIdentifier{P_1} \not\bPreord{F} \ccsIdentifier{P_2}$.
  This can be expressed by the HML formula $\hmlObs{a}\hmlNeg\hmlObs{d} \in \obs{F}$,
  which distinguishes $\ccsIdentifier{P_1}$ from $\ccsIdentifier{P_2}$.---Moreover,
  $\ccsIdentifier{P_2}$ cannot simulate the transition $\ccsIdentifier{P_1} \step{a} \ccsInm{b}\ccsChoice\ccsInm{c}$,
  so $\ccsIdentifier{P_1} \not\bPreord{1S} \ccsIdentifier{P_2}$.
  This can be expressed by the distinguishing formula $\hmlObs{a}\hmlAndS\{\hmlObs{b}, \hmlObs{c}\} \in \obs{1S}$.
  (This formula happens to be in $\obs{R}$, the readiness observations, as well.)

  As a more complex example from~\cite[p.~21]{glabbeek2001ltbtsiReport}, see \refFig{fig:ts-example-P3P4}.
  $\ccsIdentifier{P_3} = \ccsInm{a}\ccsPrefix(\ccsInm{b}\ccsChoice\ccsInm{c}\ccsPrefix\ccsInm{d}) \ccsChoice \ccsInm{a}\ccsPrefix(\ccsInm{f}\ccsChoice\ccsInm{c}\ccsPrefix\ccsInm{e})$ can be distinguished from $\ccsIdentifier{P_4} = \ccsInm{a}\ccsPrefix(\ccsInm{b}\ccsChoice\ccsInm{c}\ccsPrefix\ccsInm{e}) \ccsChoice \ccsInm{a}\ccsPrefix(\ccsInm{f}\ccsChoice\ccsInm{c}\ccsPrefix\ccsInm{d})$ in several ways:
  \begin{itemize}
  \item	$\ccsIdentifier{P_4}$ cannot simulate the transition $\ccsIdentifier{P_3} \step{a} \ccsInm{b}\ccsChoice\ccsInm{c}\ccsPrefix\ccsInm{d}$.
  This means that $\ccsIdentifier{P_3} \not\bPreord{1S} \ccsIdentifier{P_4}$.
  This corresponds to the HML formula $\hmlObs{a}\hmlAndS\{\hmlObs{b}, \hmlObs{c}\hmlObs{d}\} \in \obs{1S}$.
  \item	$\ccsIdentifier{P_3}$ has the failure trace $\textcolor{red}{\varnothing}\ccsInm{a}\textcolor{red}{\{\ccsInm{f}\}}\ccsInm{c}\textcolor{red}{\varnothing}\ccsInm{d}\textcolor{red}{\varnothing}$,
  which is not a failure trace of $\ccsIdentifier{P_4}$,
  so $\ccsIdentifier{P_3} \not\bPreord{FT} \ccsIdentifier{P_4}$.
  This corresponds to the HML formula $\hmlObs{a}\hmlAndS\{\hmlNeg\hmlObs{f}, \hmlObs{c}\hmlObs{d} \} \in \obs{FT}$.
  \item	$\langle \ccsInm{a}, \textcolor{red}{\{\ccsInm{b}, \ccsInm{c}\ccsPrefix\ccsInm{d}\}}\rangle$ is an impossible future of $\ccsIdentifier{P_3}$,
  meaning that after action sequence $\ccsInm{a}$ it may reach a state where no trace from $\{\ccsInm{b}, \ccsInm{c}\ccsPrefix\ccsInm{d}\}$ can be executed,
  but this is not an impossible future of $\ccsIdentifier{P_4}$,
  so $\ccsIdentifier{P_3} \not\bPreord{IF} \ccsIdentifier{P_4}$.
  This corresponds to the HML formula $\hmlObs{a}\hmlAndS\{\hmlNeg\hmlObs{b}, \hmlNeg\hmlObs{c}\hmlObs{d}\} \in \obs{IF}$.
  \end{itemize}
  The three formulas mentioned (and a variant of each one, using the other branch of $\ccsIdentifier{P_3}$) distinguish $\ccsIdentifier{P_3}$ from $\ccsIdentifier{P_4}$.
  Note that none of the three languages are contained in another;
  they are all minimal ways to tell the processes apart.
\end{exa}

\begin{rem}
  Like Ku{\v{c}}era and Esparza~\cite{ke1999logicalProcessQuotients}, who studied the properties of ``good'' observation languages, we glimpse over completed trace, completed simulation and possible worlds observations in \refDef{def:ltbts}, because these observations need a special exhaustive $\hmlAnd{a}{\actions}\varphi_a$, where the $\varphi_a$ are deactivated actions for completed traces, and more complex trees for possible worlds.
  While it could be provided for with additional operators, it would break the closure property of observation languages, without giving much in return.
  For instance, for $\actions = \{a,b\}$, completed trace and completed simulation observations contain the observation $\hmlAndS\{ \hmlNeg\hmlObs{a}, \hmlNeg\hmlObs{b} \}$, but not its subformula $\hmlNeg\hmlObs{a}$.
\end{rem}

\subsection{\label{subsec:priced-formulas}Pricing Formulas}

In our following quest for the coarsest behavioral preorders distinguishing two states,
we actually are interested in the formulas that are part of the \emph{minimal observation languages} from the spectrum (\refDef{def:ltbts}).
We can think of the amount of $\hml$-expressiveness used by a formula as its \emph{price.}
So, our first contribution is overlaying the spectrum with a price metric.

\begin{defi}[Formula price lattice]\label{def:price-lattice}
The formula price lattice $\prices$ is the (complete) lattice over $(\mathbb{N} \cup \{ \infty \})^6$ with the partial order $\sqsubseteq$ defined by pointwise comparison,
that is, $e \sqsubseteq e'$ iff $e_j \leq e'_j$ and $\bigsqcup E = e'$ iff $e'_j = \sup_{e \in E} e_j$,  for all $j = 1, \ldots, 6$.
\end{defi}

We use the six dimensions to catch the price structure of the spectrum from \refDef{def:ltbts}.
Intuitively, we employ the following metrics (listed in the order of the dimensions):

\begin{description}
  \item[1.\ Observations] How many observations $\hmlObs{a} \dots$ may one pass at most when descending down the syntax tree? (So we count levels of observations, not the total number of observations.)
  This is called the ``depth'' of modal operator nesting for a formula in~\cite{hennessy1985hml,milner1990}.
  \item[2.\ Conjunctions] How often may one run into a conjunction? Negations in the beginning or following an observation are counted as implicit conjunctions.
  \item[3.\ Positive Deep Branches] How many positive deep branches may appear in each conjunction?
  We call conjuncts of the form $\hmlObs{a}$ \emph{positive flat branches,}
  and other positive branches \emph{positive deep branches.}
  \item[4.\ Positive Branches] How many positive branches may appear in each conjunction, regardless of their depth?
  \item[5.\ Negations] How many negations may be visited when descending?
  In other variants of HML, this is sometimes called the number of ``alternations'' between $\Diamond$ and $\Box$.
  \item[6.\ Negated Observations] How many observations can happen under each negation?
  In other words, what is the maximum observation depth of the negative parts of the formula?
\end{description}
Each dimension expresses some aspect of the complexity of formulas in the spectrum of \refDef{def:ltbts}.
Intuitively, the notions of equivalence in the spectrum differ by how linear observations (traces) and branching observations (conjunctions) can be mixed.
At the branching points, they differ in the kinds of lower bounds (positive branches) and upper bounds (negative branches) that may be imposed on follow-up behavior.
This also gives an intuitive explanation of why negations under observations count as implicit conjunctions in our metric:
they form an upper bound on the behavior at the point where the trace observation stops.

More formally, we compute the six dimensions as follows:

\begin{defi}[Formula expressiveness prices]\label{def:formula-expressiveness}
  We define the \emph{(expressiveness) price} of a formula, $\expr \colon \hml \rightarrow \prices$ recursively by:
\begin{align*}
  \expr(\hmlObs{a}\varphi) &=
  \begin{pmatrix}
    1 + \expr_1(\varphi)\\
    0\\
    0\\
    0\\
    0\\
    0
  \end{pmatrix} \sqcup \expr(\widehat{\varphi})
\qquad\qquad
  \expr(\hmlNeg\varphi) =
  \begin{pmatrix}
    0\\
    0\\
    0\\
    0\\
    1 + \expr_5(\varphi)\\
    \expr_1(\varphi)
  \end{pmatrix} \sqcup \expr(\varphi)
\\[0.5\baselineskip]
  \textstyle\expr(\bigwedge\limits_{i \in I}\varphi_i) &=
  \begin{pmatrix}
    0\\
    \sup_{i \in I} 1 + \expr_2(\varphi_i)\\
    \lvert \mathit{pb} \rvert - \lvert \mathit{pf} \rvert\\
    \lvert \mathit{pb} \rvert\\
    0\\
    0
  \end{pmatrix} \sqcup {\textstyle \bigsqcup\limits_{i \in I}\expr(\varphi_i)}
\qquad\quad
  \begin{matrix*}[l]
    \widehat{\varphi} =
      \begin{cases}
        \hmlAndS \{\varphi\}, & \text{if } \exists \varphi' \ldotp \varphi = \neg \varphi'\\
        \varphi, & \text{otherwise}
      \end{cases}\\ \\
    \mathit{pb} =\{\varphi_i \mid \nexists \varphi' \ldotp \varphi_i = \neg\varphi'\}\\
    \mathit{pf} =\{\varphi_i \mid \exists a \in \actions \ldotp \varphi_i = \hmlObs{a}\}
  \end{matrix*}
\end{align*}
In these calculations, $\expr_j(\,\cdot\,)$ stands for the $j$th dimension of the expressiveness price,
$\mathit{pb}$ is the set of positive branches of a conjunction,
and $\mathit{pf}$ is the set of positive flat branches.

The price of a standalone formula $\varphi$ is $\expr(\widehat{\varphi})$.
(We also apply $\widehat{\,\cdot\,}$ in the definition of $\expr(\hmlObs{a}\varphi)$ to ensure
that negations following an observation are counted as implicit conjunctions.)
\end{defi}

\begin{exa}%
  \label{exa:price-formulas}
  Let us calculate the prices of the formulas in \refExample{exa:distinguishing-formulas}.
  For $\hmlObs{a}\hmlNeg\hmlObs{d}$ (that distinguishes $\ccsIdentifier{P_1}$ from $\ccsIdentifier{P_2}$),
  the price of subformula $\hmlNeg\hmlObs{d}$ is calculated with an additional conjunction:
  $\expr(\widehat{\hmlNeg\hmlObs{d}}) = \expr(\hmlAndS\{\hmlNeg\hmlObs{d}\}) = (1,1,0,0,1,1)$.
  This leads to $\expr(\stretchedhat{1.5}{\hmlObs{a}\hmlNeg\hmlObs{d}}) = (2,1,0,0,1,1)$.%
  ---%
  The other distinguishing formula $\hmlObs{a}\hmlAndS\{\hmlObs{b}, \hmlObs{c}\}$ has price $(2,1,0,2,0,0)$.

  The formulas that distinguish $\ccsIdentifier{P_3}$ from $\ccsIdentifier{P_4}$ have the prices:
  \begin{gather*}
  \expr(\stretchedhat{3.7}{\hmlObs{a}\hmlAndS\{\hmlObs{b}, \hmlObs{c}\hmlObs{d}\}}) = (3,1,1,2,0,0)
  \qquad\quad
  \expr(\stretchedhat{4.05}{\hmlObs{a}\hmlAndS\{\hmlNeg\hmlObs{f}, \hmlObs{c}\hmlObs{d}\}}) = (3,1,1,1,1,1) \\
  \expr(\stretchedhat{4.4}{\hmlObs{a}\hmlAndS\{\hmlNeg\hmlObs{b}, \hmlNeg\hmlObs{c}\hmlObs{d}\}}) = (3,1,0,0,1,2)
  \end{gather*}
  For both process pairs, there are multiple minimal-price distinguishing formulas,
  which are incomparable.
  This reflects our earlier observation that there are multiple minimal languages to tell the processes apart.
  We will make this more exact in \refLem{lem:obslang-price-characterization} below.
\end{exa}

We say that a formula $\varphi_1$ \emph{dominates} $\varphi_2$ if $\varphi_1$ has lower or equal values than $\varphi_2$ in each dimension of the metrics with at least one entry strictly lower, for which we write $\expr(\widehat{\varphi_1}) \sqsubset \expr(\widehat{\varphi_2})$.

\begin{table}[tb]
  \renewcommand*\arraystretch{1.25}
  \centering
  \caption{Expressiveness price bounds per observation language.}%
  \label{tab:ltbts-dimensions}
  \begin{tabular}{@{}lccccccc@{}}
  \toprule
  Observation language          & \rotatebox{75}{Observations} & \rotatebox{75}{Conjunctions} & \rotatebox{75}{Positive deep br.} & \rotatebox{75}{Positive branches} & \rotatebox{75}{Negations} & \rotatebox{75}{Negated observations} \\ \midrule
  enabledness $\obs{E}$         & 1        & 0            & 0             & 0             & 0         & 0               \\
  trace $\obs{T}$               & $\infty$ & 0            & 0             & 0             & 0         & 0               \\
  failure $\obs{F}$             & $\infty$ & 1            & 0             & 0             & 1         & 1               \\
  readiness $\obs{R}$           & $\infty$ & 1            & 0             & $\infty$      & 1         & 1               \\
  failure-trace $\obs{FT}$      & $\infty$ & $\infty$     & 1             & 1             & 1         & 1               \\
  ready-trace $\obs{RT}$        & $\infty$ & $\infty$     & 1             & $\infty$      & 1         & 1               \\
  impossible-future $\obs{IF}$  & $\infty$ & 1            & 0             & 0             & 1         & $\infty$        \\
  possible-future $\obs{PF}$    & $\infty$ & 1            & $\infty$      & $\infty$      & 1         & $\infty$        \\
  ready-simulation $\obs{RS}$   & $\infty$ & $\infty$     & $\infty$      & $\infty$      & 1         & 1               \\
  $(n{+}1)$-nested-simulation $\obs{(\mathit{n}+\mathrm{1})S}$%
                                & $\infty$ & $\infty$     & $\infty$      & $\infty$      & $n$       & $\infty$        \\
  bisimulation $\obs{B}$        & $\infty$ & $\infty$     & $\infty$      & $\infty$      & $\infty$  & $\infty$        \\
  \bottomrule
  \end{tabular}
\end{table}

\begin{exa}\label{exa:locally-more-expressive}
  A locally more expensive formula may pay off as part of a bigger global formula. For example, $\hmlNeg\hmlObs{a}$ is dominated by $\hmlObsI{b}$ as (1,1,0,0,1,1) $\sqsupset$ (1,0,0,0,0,0). But inserted into the context $\hmlAndS \{ \hmlNeg\hmlObs{c}, ~\cdot~ \}$, the more expensive negation $\hmlNeg\hmlObs{a}$ leads to price (1,1,0,0,1,1) as opposed to the price (1,1,0,1,1,1) after inserting the positive observation $\hmlObsI{b}$.

Note that this inversion of domination can only happen
for the dimensions $\expr_3$ (positive deep branches) and $\expr_4$ (positive branches),
as all other dimensions in \refDef{def:formula-expressiveness} are clearly monotonic.
Then the context must contain a conjunction $\hmlAnd{i}{I}\varphi_i$
and the local formula must be a conjunct $\varphi_i$,
the cheaper local formula is an observation formula and the other a negation formula.
\end{exa}

Table~\ref{tab:ltbts-dimensions} gives an overview of how many syntactic $\hml$-features the observation languages of the spectrum (\refDef{def:ltbts}) may use at most%
---%
these are the least upper bounds of the prices for the contained observations.
So, we are talking \emph{budgets,} in the price analogy.

\begin{lem}\label{lem:obslang-price-characterization}
  A formula $\varphi$ is in an observation language $\obs{\mathit X}$ with expressiveness price bound $e_X$ from Table~\ref{tab:ltbts-dimensions}
  precisely if its price is within the bound,
  that is $\expr(\widehat{\varphi})\sqsubseteq e_X$.
\end{lem}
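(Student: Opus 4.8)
The plan is to prove both directions of the iff simultaneously by structural induction on $\varphi$, using the inductive (closure) property of the observation languages from \refProp{prop:languages-closed} for the "only if" direction and the recursive structure of $\expr$ together with \refDef{def:ltbts} for the "if" direction. The key observation is that both the membership predicate $\varphi \in \obs{\mathit X}$ and the price bound predicate $\expr(\widehat\varphi) \sqsubseteq e_X$ are characterised by what one encounters when descending through the syntax tree: the "budget" reading of Table~\ref{tab:ltbts-dimensions} matches the "how many features when descending" reading of \refDef{def:ltbts} remarked on right after it. So I would set up, for each observation language $\obs{\mathit X}$ in the spectrum and its bound $e_X$, the equivalence between the grammar-style closure rules defining $\obs{\mathit X}$ and the pointwise inequality $\expr(\widehat\varphi)\sqsubseteq e_X$.

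Concretely, I would proceed case by case on the outermost connective of $\widehat\varphi$. For $\widehat\varphi = \hmlObs{a}\psi$: the first dimension of $\expr$ is $1+\expr_1(\psi)$ joined with $\expr(\widehat\psi)$, and all spectrum languages permit trace prefixing ($\hmlObs{a}\psi\in\obs{\mathit X}$ iff $\psi\in\obs{\mathit X}$), matching the fact that dimension 1 either is $\infty$ (so always within budget) or is $1$ only for $\obs E$ (which forbids nesting observations). For $\widehat\varphi = \hmlAndS_{i\in I}\varphi_i$: I would check that the conjunction-shaped rules of \refDef{def:ltbts} (failure: all conjuncts are $\hmlNeg\hmlObs{a_i}$; readiness: all conjuncts in $\{\hmlObs a,\hmlNeg\hmlObs a\}$; failure trace / ready trace: one distinguished conjunct, rest flat negations / flat literals; impossible/possible futures: negated traces / signed traces; simulation languages: arbitrary conjuncts) correspond exactly to the combination of the conjunction-count bound (dimension 2), the positive-deep-branch bound (dimension 3), the positive-branch bound (dimension 4) and the recursive bounds on the $\varphi_i$. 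Here the definitions of $\mathit{pb}$ and $\mathit{pf}$ in \refDef{def:formula-expressiveness} are precisely what lets the price reproduce the "one deep branch" vs. "arbitrarily many flat branches" distinctions. For $\widehat\varphi = \hmlNeg\psi$: dimension 5 is $1+\expr_5(\psi)$ and dimension 6 is $\expr_1(\psi)$ joined with $\expr(\psi)$, which matches that $\obs{(n{+}1)S}$ allows a negation of an $\obs{nS}$-formula (dimension 5 counts the nesting of negations, i.e.\ the simulation level), while dimension 6 caps the observation depth under the outermost negation — exactly the $\obs F/\obs R/\obs{FT}/\obs{RT}/\obs{RS}$ restriction that negations only wrap single observations $\hmlObs a$ (giving $\expr_6 = 1$) versus the $\obs{IF}/\obs{PF}$ relaxation to negated traces (giving $\expr_6 = \infty$).

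The main obstacle — and the place where the aliasing $\widehat{(\cdot)}$ really earns its keep — is handling negations that sit directly under a conjunction or directly after an observation, where \refDef{def:formula-expressiveness} treats them as introducing an implicit conjunction. I would need to verify that in every such case the price of the wrapped formula $\hmlAndS\{\hmlNeg\psi\}$ (dimensions 2 and 4 both bumped by $1$, and dimension 3 left at $0$ since $\hmlNeg\psi \notin \mathit{pf}$) still lies within $e_X$ exactly when the unaliased $\hmlNeg\psi$ is allowed by the closure rules of $\obs{\mathit X}$ — which is why Table~\ref{tab:ltbts-dimensions} gives $\obs F$ a conjunction budget of $1$ rather than $0$, and why the failure-trace and impossible-future rows keep dimension 4 at $1$ resp.\ $0$. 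Carefully matching these off-by-one bookkeeping adjustments against the rows of the table, together with checking the $n$-nested-simulation hierarchy by an inner induction on $n$, is the technical heart; the remaining cases are the "equally immediate" routine verification already alluded to in the proof of \refProp{prop:languages-closed}.
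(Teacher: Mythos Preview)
Your approach is correct and covers the same ground as the paper, though the organisation differs. The paper's sketch treats the two directions separately: for $\varphi\in\obs{\mathit X}\Rightarrow\expr(\widehat\varphi)\sqsubseteq e_X$ it simply observes that $e_X$ is the least upper bound $\bigsqcup_{\varphi\in\obs{\mathit X}}\expr(\widehat\varphi)$, and for the converse it suggests computing the fiber $\expr^{-1}(e_X)$ and recognising that its recursive description coincides with the grammar clauses of \refDef{def:ltbts}. Your simultaneous structural induction on $\varphi$ unrolls the same verification more explicitly; neither buys a genuine shortcut, but yours is closer to a complete proof while the paper deliberately stays at sketch level.

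One small slip to fix: when you discuss the implicit wrapping $\hmlAndS\{\hmlNeg\psi\}$, you say ``dimensions~2 and~4 both bumped by~$1$''. Dimension~4 counts $\lvert\mathit{pb}\rvert$, the number of \emph{positive} branches, and $\hmlNeg\psi$ is not in $\mathit{pb}$ (it is a negation), so dimension~4 stays at~$0$, not~$1$. Likewise your parenthetical reason for dimension~3 being~$0$ should cite $\hmlNeg\psi\notin\mathit{pb}$ rather than $\notin\mathit{pf}$. This matters precisely for the rows you mention next: $\obs{IF}$ has dimension~4 bound~$0$, which would be violated if the wrapping really bumped it. With that corrected, your bookkeeping goes through.
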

\begin{proof}[Proof Sketch]
  $\varphi\in \obs{\mathit X}$ implies $\expr(\widehat{\varphi})\sqsubseteq e_X$ as the $e_X$ in the table
  exactly are the least upper bounds $\bigsqcup_{\varphi \in \obs{\mathit X}} \expr(\widehat{\varphi})$.
  That $\expr(\widehat{\varphi})\sqsubseteq e_X$ implies $\varphi\in \obs{\mathit X}$, is more involved.
  Basically it amounts to constructing the fiber function $\expr^{-1}$
  and noticing that a partial evaluation of $\expr^{-1}(e_X)$ looks exactly like the cases in \refDef{def:ltbts}.
  For failures for instance, one would get $\expr^{-1}(e_\mathrm{F}) = \{\hmlObs{a}\varphi \mid \varphi \in \expr^{-1}(e_\mathrm{F})\} \cup \{\hmlTrue\} \cup \{ \hmlAnd{a}{A} \hmlNeg\hmlObs{a} \mid A \subseteq \actions \}$.
\end{proof}

Note that no strict subset of these dimensions distinguishes all languages of \refDef{def:ltbts} and $\obs{E}$.

\begin{exa}\label{exa:prices-distinguishing-formulas}
  We can now compare the formula prices calculated in \refExample{exa:price-formulas} with the expressiveness price bounds:
  The price of $\hmlObs{a}\hmlNeg\hmlObs{d} \in \obs{F}$ is $(2,1,0,0,1,1)$,
  which is cheaper than the price bound
  $e_\mathrm{F} = (\infty,1,0,0,1,1)$ of $\obs{F}$.---The
  other distinguishing formula $\hmlObs{a}\hmlAndS\{\hmlObs{b}, \hmlObs{c}\} \in \obs{1S} \cap \obs{R}$ has price $(2,1,0,2,0,0)$,
  which is cheaper than the price bounds of $\obs{1S}$ and $\obs{R}$.

  The formulas that distinguish $\ccsIdentifier{P_3}$ from $\ccsIdentifier{P_4}$ are within the following price bounds:
  \[\begin{array}{@{}r@{}c@{\,}c@{\,}c@{\,}c@{\,}c@{\,}c@{}l@{}}
  \expr(\stretchedhat{3.7}{\hmlObs{a}\hmlAndS\{\hmlObs{b}, \hmlObs{c}\hmlObs{d}\}}) = (3,1,1,2,0,0)
    & {} \sqsubseteq (\infty, & \infty, & \infty, & \infty, & 0, & \infty & )\text{, the price bound of }\obs{1S} \\
    & \textcolor{gray}{{} \sqsubseteq (\infty,} & \textcolor{gray}{1,} & \textcolor{gray}{\infty,} & \textcolor{gray}{\infty,} & \textcolor{gray}{1,} & \textcolor{gray}{\infty} & \textcolor{gray}{)\text{, the price bound of }\obs{PF}} \\
    & \textcolor{gray}{{} \sqsubseteq (\infty,} & \textcolor{gray}{\infty,} & \textcolor{gray}{1,} & \textcolor{gray}{\infty,} & \textcolor{gray}{1,} & \textcolor{gray}{1} & \textcolor{gray}{)\text{, the price bound of }\obs{RT}} \\
  \expr(\stretchedhat{4.05}{\hmlObs{a}\hmlAndS\{\hmlNeg\hmlObs{f}, \hmlObs{c}\hmlObs{d}\}}) = (3,1,1,1,1,1) & {} \sqsubseteq (\infty, & \infty, & 1, & 1, & 1, & 1 & )\text{, the price bound of }\obs{FT} \\
    & \textcolor{gray}{{} \sqsubseteq (\infty,} & \textcolor{gray}{1,} & \textcolor{gray}{\infty,} & \textcolor{gray}{\infty,} & \textcolor{gray}{1,} & \textcolor{gray}{\infty} & \textcolor{gray}{)\text{, the price bound of }\obs{PF}} \\
  \expr(\stretchedhat{4.4}{\hmlObs{a}\hmlAndS\{\hmlNeg\hmlObs{b}, \hmlNeg\hmlObs{c}\hmlObs{d}\}}) = (3,1,0,0,1,2) & {} \sqsubseteq (\infty, & 1, & 0, & 0, & 1, & \infty & )\text{, the price bound of }\obs{IF}
  \end{array}\]
  We can see that these prices are below some further price bounds printed in gray.
  However, these bounds are less relevant,
  as $\obs{FT} \subset \obs{RT}$ and $\obs{IF} \subset \obs{PF}$.
\end{exa}

Not every conceivable observation language can soundly be characterized by our metric.
For instance, the ``two-$\action a$s-may-happen-equivalence'' with observation language $\{\hmlObs{a}\hmlObs{a}, \hmlTrue\}$
would have coordinates $(2,0,0,0,0,0)$
but does not contain the formula $\hmlObs{a}$
even though $\expr(\hmlObs{a})=(1,0,0,0,0,0)$ is below.
However, all common closed observation languages have characteristic coordinates in our price lattice.


\begin{rem}[New prices]%
  \label{rem:new-prices}
  \refDef{def:formula-expressiveness} of formula expressiveness prices and the spectrum characterization differ from the conference version~\cite{bisping2021ltbtsTacas}
  in that the fourth dimension counts positive branches instead of positive \emph{flat} branches
  and in that the last dimension ``negated observations'' measures the observation depth instead of negated syntax tree height.
  We will use these two changes in the proof of \refThm{thm:correctness-general}.

  Not counting flat positive branches resolves some technical problems that existed around readiness and failure trace languages.
  The advantage of this definition of formula price is
  that we have $\expr(\hmlAndS\{\hmlObs{a}\hmlObs{a}, \hmlObs{b}\}) \sqsubseteq \expr(\hmlAndS\{\hmlObs{a}\hmlObs{a}, \hmlObs{b}\hmlObs{b}\})$.
  This would not hold in the definition of~\cite{bisping2021ltbtsTacas},
  because $\hmlAndS\{\hmlObs{a}\hmlObs{a}, \hmlObs{b}\}$ contains the positive flat branch $\hmlObs{b}$
  while the other formula does not contain any positive flat branches.

  The measuring of negated height in terms of observations is desirable for technical reasons.
  It is more stable under the insertion of virtual conjunctions in front of negations.
  This way, $\hmlNeg\hmlObs{a}\hmlAndS\{ \hmlNeg\hmlObs{b}, \hmlNeg\hmlObs{a}\}$,
   $\hmlNeg\hmlObs{a}\hmlAndS \{ \hmlNeg\hmlObs{b} \}$ and
   $\hmlNeg\hmlObs{a}\hmlNeg\hmlObs{b}$
   all have $\expr(\widehat{\,\cdot\,}) = (2, 2, 0, 0, 2, 2)$ as price.
   Using syntax tree height, the sixth dimension would be $3$ for the last formula and $4$ for the other two.
   Also, if we would not include an implicit conjunction before each negation,
   the last formula would have a smaller value in the second dimension.
\end{rem}

\subsection{Double Negations and Negated Conjunctions}%
  \label{subsec:negated-conjunctions}
  When searching distinguishing formulas in the observation languages considered,
  double negations and negated conjunctions are not useful and can be suppressed.

  A \emph{double negation} $\hmlNeg\hmlNeg\varphi$ is more expensive than $\varphi$ in most contexts.
  Only in a conjunction, a double negation can mask a positive branch;
  for example, $\hmlObs{a}\hmlAndS\{\hmlNeg\hmlNeg\hmlObs{b}, \linebreak[0] \hmlNeg\hmlObs{c}\hmlNeg\hmlObs{d}\}$
  is strictly cheaper than $\hmlObs{a}\hmlAndS\{\hmlObs{b}, \linebreak[0] \hmlNeg\hmlObs{c}\hmlNeg\hmlObs{d}\}$.
  Note that a formula containing a double negation is in 3-nested simulation $\obs{3S}$ or a larger language,
  where the number of positive (deep) branches is irrelevant.
  Therefore, double negations are not useful,
  and we forbid them in distinguishing formulas.

  \emph{Negated conjunctions} are disjunctions.
  (In this paragraph, we use the abbreviation $\bigvee_{i \in I}\varphi_i$ for $\hmlNeg\hmlAnd{i}{I}\hmlNeg\varphi_i$.)
  They can be replaced by (almost always cheaper) formulas without disjunction.
  For example, $\hmlObs{a}\hmlNeg\hmlAndS\{\hmlObs{b},\hmlObs{c}\}$ distinguishes $\ccsInm{a}\ccsPrefix\ccsInm{b}$ and $\ccsInm{a}\ccsPrefix\ccsInm{c}$ from $\ccsInm{a}\ccsPrefix(\ccsInm{b}\ccsChoice\ccsInm{c})$.
  This formula is equivalent to $\bigvee\{\hmlObs{a}\hmlNeg\hmlObs{b},\hmlObs{a}\hmlNeg\hmlObs{c}\}$.
  Note that the disjuncts $\hmlObs{a}\hmlNeg\hmlObs{b}$ and $\hmlObs{a}\hmlNeg\hmlObs{c}$
  also distinguish $\ccsInm{a}\ccsPrefix\ccsInm{c}$ and $\ccsInm{a}\ccsPrefix\ccsInm{b}$, respectively, from $\ccsInm{a}\ccsPrefix(\ccsInm{b}\ccsChoice\ccsInm{c})$.

  In general, any formula containing a disjunction can be transformed into a kind of disjunctive normal form $\bigvee_{i \in I}\varphi_i$,
  where the $\varphi_i$ do not contain any disjunctions.
  When a concrete process $p$ needs to be distinguished from some process $q$,
  one of the disjuncts $\varphi_i$ can be used instead of the whole formula.
  The transformation to disjunctive normal form will almost always lead to disjuncts that are cheaper than or as cheap as the original formula.
  The only exception is the case where a disjunction masks a double negation in a conjunction;
  for example, $\hmlObs{a}\hmlAndS\{\hmlNeg\hmlAndS\{\hmlNeg\hmlObs{b}\}, \hmlNeg\hmlObs{c}\hmlNeg\hmlObs{d}\}$
  corresponds to $\hmlObs{a}\hmlAndS\{\hmlNeg\hmlNeg\hmlObs{b}, \linebreak[0] \hmlNeg\hmlObs{c}\hmlNeg\hmlObs{d}\}$.
  As explained above, this only happens in 3-nested simulation $\obs{3S}$ and larger languages,
  where the number of positive (deep) branches is irrelevant.
  Therefore, like double negations, negated conjunctions are not useful,
  and we forbid them in distinguishing formulas as well.

\section{A Game to Find Distinguishing Formulas\label{sec:distinguishing-formulas-game}}

This section introduces our main contribution: the spectroscopy game (\refDef{def:spectroscopy-game}), and how to build all interesting distinguishing $\hml$ formulas from its winning region (\refDef{def:strategy-formulas}).

\subsection{The Abstract Observation Preorder Problem\label{subsec:spectroscopy-problem}}

In what follows, we generalize the problem whether an observation language preorders two states (\refDef{def:observation-preorder}) in two ways:

\begin{enumerate}
  \item We not only consider one observation language $\obs{\mathit X}$ from \refDef{def:ltbts} or $\obs{E}$
  but all of them at the same time.
  \item We do not compare one process $p$ to another $q$, but rather one $p$ to a set $Q \subseteq \proc$.
\end{enumerate}
The first generalization is the main objective of this paper. The second one enables the construction of a uniform algorithm. The abstracted problem thus becomes:

\begin{prob}[Abstract observation preorder problem]\label{prob:abstract-obs-problem}
  Given a process $p$ and a set of processes $Q$,
  what are the observation languages from \refDef{def:ltbts} (including $\obs{E}$)
  for which $p$ is preordered to every $q \in Q$?
\end{prob}

Our approach to solve this problem looks for the set of minimal languages to tell the processes apart.
We characterize these minimal distinguishing languages through a set of coordinates from the price lattice (\refDef{def:price-lattice}),
where every coordinate is justified by a distinguishing formula with this price.
In line with \refSubsec{subsec:negated-conjunctions}, we do not care about formulas with double negations and negated conjunctions.

\begin{prob}[Cheapest distinction problem]\label{prob:cheapest-distinction}
  Given a process $p$ and a set of processes $Q$,
  what is the set of minimal prices (according to \refDef{def:formula-expressiveness}) of formulas with neither double negations nor negated conjunctions
  that distinguish $p$ from every $q \in Q$?
  What are illuminating witness formulas for each such price?
\end{prob}

There is a straightforward way of turning the problem whether an observation language $\obs{\mathit X}$ preorders $p$ and $q$ into a game: Have the attacker pick a supposedly distinguishing formula $\varphi \in \obs{\mathit X}$, and then have the defender choose whether to play the $\hml$ game (\refDef{def:hml-game}) for $\hmlSemantics{\hmlNeg\varphi}{}{p}$ or for $\hmlSemantics{\varphi}{}{q}$. One can examine the attacker winning strategies comprised of price-minimal formulas and solve Problem~\ref{prob:cheapest-distinction}. This direct route will yield infinite games for infinite $\obs{\mathit X}$---and all the languages from \refDef{def:ltbts} are infinite.

To bypass the infinity issue, the next subsection will introduce a variation of this game
\emph{where the attacker gradually chooses their attacking formula implicitly.}
In particular, this means that the attacker decides which observations to play.
In return, the defender does not need to pick a side in the beginning
and may postpone the decision where (on the side of $\hmlSemantics{\varphi}{}{q}$)
an observation leads.
Postponing decisions here means that the defender state is modeled non-deterministically, moving to multiple process states at once.
The mechanics are analogous to the standard powerset construction when transforming non-deterministic finite automata into deterministic ones.
In effect, the attacker tries to show that a formula $\varphi \in \obs{\mathit X}$ distinguishes $p$ from every $q \in Q$, and the defender tries to prove that no such formula exists.

\begin{figure}[t]
  \begin{center}
      \begin{tikzpicture}[>->,shorten <=1pt,shorten >=0.5pt,auto,node distance=2cm, rel/.style={dashed,font=\it},
        posStyle/.style={draw, inner sep=1ex,minimum size=1cm,minimum width=2cm,anchor=center,draw,black,fill=gray!5}]
          \node[posStyle]
            (Att){$\attackerPos{p,Q}$};
          \node[posStyle, dashed]
            (AttObs) [right of=Att, node distance=8.5cm] {$\attackerPos{p^\prime,Q^\prime}$};
          \node[posStyle, dashed]
            (AttConj) [above of=AttObs, node distance=4cm] {$\attackerPos{p,\{q\}}$};
          \node[ellipse, draw, inner sep=1ex, minimum size=1cm,minimum width=2cm,fill=gray!5]
            (Def) [left of=AttConj, node distance=5.5cm] {$\defenderPos{p,\partition{Q}}$};
          \node[posStyle]
            (AttConj2) [below right of=Def, node distance=3cm] {$\attackerPos{p,Q^*}^\notand$};
          \node[posStyle, dashed]
            (AttSwap) [below of=AttObs] {$\attackerPos{q,\{p\}}$};

          \path
            (Att) edge[] node[align=center, label={[label distance=0.15cm]below:$\textcolor{gray}{\hmlObs{a}}$}] {$p\step{a}p^\prime \land$\\ $Q'=\{q^\prime \mid \exists q \in Q\ldotp q \step a q^\prime\}$} (AttObs)
            (Att) edge[bend left=20] node[label={[label distance=0.25cm]-10:$\textcolor{gray}{\land}$}] {$\partition{Q}$ partitions $Q$} (Def)
            (Def) edge[] node[label=below:$\textcolor{gray}{*}$] {$\{q\} \in \partition{Q}$} (AttConj)
            (Def) edge[bend right=20] node[align=center, pos=0.7, label={[label distance=0.3cm]-170:$\textcolor{gray}{*}$}] {$Q^* \in \partition{Q}$\\ not a singleton} (AttConj2)
            (AttConj2) edge[bend left=20] node[align=center, label=-170:$\textcolor{gray}{\hmlObs{a}}$] {$p\step{a}p^\prime \land$\\ $Q'=\{q^\prime \mid \exists q \in Q^*\ldotp q \step a q^\prime\}$} (AttObs)
            (Att) edge[bend right=15] node[label={[label distance=0.3cm]-100:$\textcolor{gray}{\hmlNeg}$}] {$Q = \{q\}$} (AttSwap);

      \end{tikzpicture}
  \end{center}
  \caption{
    Schematic spectroscopy game $\gameSpectroscopy$ of \refDef{def:spectroscopy-game}.}%
  \label{fig:spec-game}
\end{figure}

\subsection{The Spectroscopy Game\label{subsec:spectroscopy-game}}

Let us now look at the ``spectroscopy game.'' It forms the heart of this paper. Figure~\ref{fig:spec-game} gives a graphical representation.

\begin{defi}[Spectroscopy game]%
  \label{def:spectroscopy-game}
  For a transition system $\system=(\proc,\actions,\step{})$,
  the $L$-labeled \emph{spectroscopy game} $\gameSpectroscopy^\system[g_0]=(G,G_\defenderSubscript,\gameMoveX{\cdot},g_{0})$
  with $L = \{\neg, \land, *, \hmlObs{a} \mid a \in \actions \} $
  consists of
  \begin{itemize}
      \item \emph{attacker positions} $\attackerPos{p,Q} \in G_\attackerSubscript$ with $p \in \proc$, $Q \in \powerSet{\proc}$,
      \item \emph{post-conjunction attacker positions} $\attackerPos{p,Q}^\notand \in G_\attackerSubscript$,
      where $Q \in \powerSet{\proc}$ contains at least two elements; the symbol ``$\notand$'' indicates that conjunct challenge moves are not allowed there,
        \item \emph{defender positions} $\defenderPos{p,\partition{Q}} \in G_\defenderSubscript$ where $\partition{Q} \subseteq \powerSet{\proc}$ is a partition of a subset of $\proc$,
  \end{itemize}
  and four kinds of moves:

  \medskip

  \noindent
  \begin{tabularx}{\linewidth}{@{}l@{\;}l@{\;}c@{\;}l@{\;}X@{}}
    \textbullet\hspace{\labelsep}\emph{observation moves}
    & $\attackerPos{p,Q}$ & $\gameMoveX{\hmlObs{a}}$ & $\attackerPos{p^\prime,\{q^\prime \mid \exists q \in Q\ldotp q \step{\ccsInm{a}} q^\prime\}}$ & if $p \step{\ccsInm{a}} p^\prime$,\\
    & $\attackerPos{p,Q}^\notand$ & $\gameMoveX{\hmlObs{a}}$ & $\attackerPos{p^\prime,\{q^\prime \mid \exists q \in Q\ldotp q \step{\ccsInm{a}} q^\prime\}}$ & if $p \step{\ccsInm{a}} p^\prime$,\\
    \textbullet\hspace{\labelsep}\emph{conjunct challenges}
    & $\attackerPos{p,Q}$ & $\gameMoveX{\land}$ & $\defenderPos{p,\partition{Q}}$ &
    if $\partition{Q}$ is a partition of $Q$ that is not trivial (\ie $\partition{Q} \not= \{ Q \}$),
    \\
    \textbullet\hspace{\labelsep}\emph{conjunct answers}
    & $\defenderPos{p,\partition{Q}}$ & $\gameMoveX{*}$ & $\attackerPos{p,Q}^\notand$ & if $Q \in \partition{Q}$ is not a singleton,
    \\
    & $\defenderPos{p,\partition{Q}}$ & $\gameMoveX{*}$ & $\attackerPos{p,\{q\}}$ & if $\{q\} \in \partition{Q}$, and
    \\
    \textbullet\hspace{\labelsep}\emph{negation moves}
    & $\attackerPos{p,\{q\}}$ & $\gameMoveX{\hmlNeg}$ & $\attackerPos{q,\{p\}}$.
  \end{tabularx}
\end{defi}

\noindent
Attacker moves are labeled with the syntactic $\hml$ constructs from which they originate. This does not change expressive power but is helpful for formula reconstruction in the next section.
Accordingly, attacker strategies for spectroscopy games are subsets of labeled moves.


\begin{exa}%
  \label{exa:spec-game-1}
  Let us say we want to compare the processes $\ccsInm{a}\ccsPrefix\ccsInm{b}$ and $\ccsInm{a}\ccsPrefix(\ccsInm{a}\ccsChoice \ccsInm{b}) \ccsChoice \ccsInm{a}\ccsPrefix\ccsInm{b}\ccsPrefix\ccsInm{b} \ccsChoice \ccsInm{a}$.
  \refFig{fig:spectro-game-1} shows the game graph.
  Recall from \refDef{def:plays-wins} that the attacker wins in defender positions
  that are stuck, concretely in position $\defenderPos{\ccsStop,\varnothing}$.
  The defender wins infinite plays, concretely from position $\attackerPos{\ccsStop, \{ \ccsStop \}}$,
  where the only play loops infinitely in place.
  Elsewhere, the player that can force the play into one of these two positions wins.
  The black part represents the attacker winning region,
  which corresponds to processes that can be distinguished by HML formulas.
  The edge labels already hint at how to construct such formulas.
  The (thin, small) red part corresponds to processes that cannot be distinguished and thus are won by the defender.
  We will make this more precise in the next subsections.
\end{exa}

\begin{figure}[tb]
  \begin{center}
  \begin{tikzpicture}[shorten <=1pt,shorten >=0.5pt]

    \path[use as bounding box,draw=none] (7.56,-4.3) rectangle (-7.34,9.37);


    \begin{scope}[every node/.style={inner sep=1ex,anchor=center,draw,black,fill=gray!5}]

    \node(s0) at(-0.5,9) {$\attackerPos{\ccsInm{a}\ccsPrefix\ccsInm{b}, \{ \ccsInm{a}\ccsPrefix(\ccsInm{a}\ccsChoice \ccsInm{b}) \ccsChoice \ccsInm{a}\ccsPrefix\ccsInm{b}\ccsPrefix\ccsInm{b} \ccsChoice \ccsInm{a} \}}$};

    \node(s0neg) at(3.5,7.75) {$\attackerPos{\ccsInm{a}\ccsPrefix(\ccsInm{a}\ccsChoice \ccsInm{b}) \ccsChoice \ccsInm{a}\ccsPrefix\ccsInm{b}\ccsPrefix\ccsInm{b} \ccsChoice \ccsInm{a}, \{ \ccsInm{a}\ccsPrefix\ccsInm{b} \}}$};

    \node(s1) at(-0.5,6.5) {$\attackerPos{\ccsInm{b}, \{ \ccsInm{a}\ccsChoice \ccsInm{b}, \ccsInm{b}\ccsPrefix\ccsInm{b}, \ccsStop \}}$};

    \node[ellipse,minimum width=12em](s2a) at(-3.1,4.75) {\makebox[0pt]{$\defenderPos{\ccsInm{b}, \{ \{\ccsInm{a}\ccsChoice \ccsInm{b}\}, \{\ccsInm{b}\ccsPrefix\ccsInm{b}, \ccsStop\} \}}$}};

    \node[ellipse,minimum width=13em](s2b) at(2.1,4.75) {\makebox[0pt]{$\defenderPos{\ccsInm{b}, \{ \{\ccsInm{a}\ccsChoice \ccsInm{b}\}, \{\ccsInm{b}\ccsPrefix\ccsInm{b}\}, \{\ccsStop\} \}}$}};

    \node(s3b) at(-1.75,2) {$\attackerPos{\ccsInm{b}, \{ \ccsInm{b}\ccsPrefix\ccsInm{b}, \ccsStop \}}^\notand$};

    \node(s3c) at(1.75,0.5) {$\attackerPos{\ccsInm{b}, \{ \ccsStop \}}$};

    \node(s3d) at(-1.5,0.5) {$\attackerPos{\ccsInm{b}, \{ \ccsInm{b}\ccsPrefix\ccsInm{b} \}}$};

    \node(s3a) at(-4.75,0.5) {$\attackerPos{\ccsInm{b}, \{ \ccsInm{a}\ccsChoice \ccsInm{b} \}}$};

    \node(s4c) at(1.75,-1) {$\attackerPos{\ccsStop, \{ \ccsInm{b} \}}$};

    \node(s4d) at(-1.5,-1) {$\attackerPos{\ccsInm{b}\ccsPrefix\ccsInm{b}, \{ \ccsInm{b} \}}$};

    \node(s4a) at(-4.75,-1) {$\attackerPos{\ccsInm{a}\ccsChoice \ccsInm{b}, \{ \ccsInm{b} \}}$};

    \node(s5) at(-1.5,-3) {$\attackerPos{\ccsStop, \varnothing}$};

    \node[ellipse,inner sep=2pt,minimum width=5.6em](bott) at(1.75,-3) {\makebox[0pt]{$\defenderPos{\ccsStop, \varnothing}$}};

    \end{scope}


    \begin{scope}[every node/.style={line width=0.25pt,inner sep=1ex,anchor=center,draw,darkred!60,fill=darkred!3,node font=\scriptsize}]

    \node[ellipse,minimum width=12.3em](s2a0) at(4.5,3.5) {\makebox[0pt]{$\defenderPos{\ccsInm{b}, \{ \{\ccsInm{a}\ccsChoice \ccsInm{b}, \ccsInm{b}\ccsPrefix\ccsInm{b}\}, \{\ccsStop\} \}}$}};

    \node[ellipse,minimum width=12.3em](s2b0) at(-5.5,3.5) {\makebox[0pt]{$\defenderPos{\ccsInm{b}, \{ \{\ccsInm{a}\ccsChoice \ccsInm{b}, \ccsStop\}, \{\ccsInm{b}\ccsPrefix\ccsInm{b}\} \}}$}};

    \node(s3a0) at(5,2) {$\attackerPos{\ccsInm{b}, \{ \ccsInm{a}\ccsChoice \ccsInm{b}, \ccsInm{b}\ccsPrefix\ccsInm{b} \}}^\notand$};

    \node(s3d0) at(-6,2) {$\attackerPos{\ccsInm{b}, \{ \ccsInm{a}\ccsChoice \ccsInm{b}, \ccsStop \}}^\notand$};

    \node(s50b) at(5,0.5) {$\attackerPos{\ccsStop, \{ \ccsStop, \ccsInm{b} \}}$};

    \node[ellipse,minimum width=8.5em](s50bdef) at(5,-1) {\makebox[0pt]{$\defenderPos{\ccsStop, \{ \{\ccsStop\}, \{\ccsInm{b}\} \}}$}};

    \node(s50) at(-4.75,-3) {$\attackerPos{\ccsStop, \{ \ccsStop \}}$};

    \end{scope}


    \begin{scope}[>->,line width=0.25pt,darkred!60,every node/.style={node font=\scriptsize}]

    \draw (s1.348) .. controls (4.7,5.5) and (5.45,5) .. node[above,pos=.35]{$~\land$} (s2a0);
    \draw (s1) .. controls (-5.5,5.5) and (-6.25,5) .. node[above,pos=.35]{$\land~$} (s2b0);
    \draw (s1.357) .. controls (7.25,6) and (7.625,2) .. node[above,pos=.25]{$\hmlObs{b}$} (s50b);
    \draw (s2a0) -- node[right]{$*$} (s3a0);
    \draw (s2b0) -- node[left]{$*$} (s3d0);
    \draw (s3a) .. controls (-6.5,-0.5) and (-6.5,-1.5) .. node[left,pos=.25]{$\hmlObs{b}$} (s50);
    \draw (s3a0) -- node[left]{$\hmlObs{b}$} (s50b);
    \draw (s3d0) .. controls (-7,0.5) and (-6.75,-1.75) .. node[left,pos=.465]{$\hmlObs{b}$} (s50);
    \draw (s4a) -- node[left]{$\hmlObs{b}$} (s50);
    \draw (s50) to[loop below] node[below]{$\hmlNeg$} ();
    \draw (s50b) -- node[left]{$\land$} (s50bdef);
    \draw (s50bdef) .. controls (3.5,-4.75) and (-0.5,-4) .. node[left,pos=.18]{$*$} (s50);

    \end{scope}


    \begin{scope}[>->,line width=0.25pt,black!60,every node/.style={node font=\scriptsize}]

    \draw (s2a0) -- node[left,pos=.192]{$*~$} (s3c);
    \draw (s2b0) to[bend right=5] node[right,pos=.16]{$~*$} (s3d); 
    \draw (s50bdef) -- node[below,pos=.4]{$*$} (s4c);

    \end{scope}


    \begin{scope}[>->,black] 

    \draw (s0) -- node[left]{$\hmlObs{a}$} (s1);
    \draw ([xshift=-15mm]s0) -- node[left]{$\hmlNeg~~$} (s0neg);
    \draw ([xshift=15mm]s0neg) -- node[right]{$~~\hmlNeg$} (s0);

    \draw (s0neg.333) .. controls (13.75,-1.25) and (-0.35,-8.55) .. node[right,pos=.48]{$~\hmlObs{a}$} (s4a);
    \draw (s0neg.325) .. controls (9.25,3.4) and (8.65,-5.5) .. node[below,pos=.75]{$\hmlObs{a}$} (s4d);
    \draw (s0neg.310) .. controls (8.5,4.25) and (8.5,-4.125) .. node[below,pos=.94,inner sep=2pt]{$\hmlObs{a}$} (s4c);

    \draw (s1) -- node[right]{$~\land$} (s2b);
    \draw (s1) -- node[left]{$\land~$} (s2a);

    \draw (s2b) .. controls (-3.5,2.375) and (-2.625,3.25) .. node[left,pos=.53]{$*~$} (s3a);
    \draw (s2b) -- node[right,pos=.458]{$*$} (s3c);
    \draw (s2b) .. controls (-.75,1.125) and (0,1.875) .. node[left,pos=.219]{$*$} (s3d);
    \draw (s2a) .. controls (-3.325,3.5) and (-3.5,3.25) .. node[right,pos=.368]{$*$} (s3a);
    \draw (s2a) -- node[right,pos=.4]{$*$} (s3b);

    \draw (s4a.70) -- node[right]{$\hmlNeg$} (s3a.290);
    \draw (s3c.250) -- node[left]{$\hmlNeg$} (s4c.110);
    \draw (s3d.250) -- node[left]{$\hmlNeg$} (s4d.110);
    \draw (s4d.70) -- node[right]{$\hmlNeg$} (s3d.290);
    \draw (s3a.250) -- node[left]{$\hmlNeg$} (s4a.110);
    \draw (s4c.70) -- node[right]{$\hmlNeg$} (s3c.290);

    \draw (s3d) -- node[below,pos=.65]{$\hmlObs{b}~$} (s4c);
    \draw (s4d) -- node[above,pos=.65]{$\hmlObs{b}~$} (s3c);

    \draw (s3b) .. controls (-4.25,-0.875) and (-2.75,-2.875) .. node[left,pos=.13]{$\hmlObs{b}$} (s4c);
    \draw (s4a) -- node[below,pos=.45]{$\hmlObs{a}$} (s5);
    \draw (s3c) .. controls (4.5,-1.9) and (1.5,-2) .. node[right,pos=.08]{$~\hmlObs{b}$} (s5);
    \draw (s5) -- node[below]{$\land$} (bott);

  \end{scope}

  \end{tikzpicture}
  \end{center}
  \caption{\label{fig:spectro-game-1}%
  The spectroscopy game that distinguishes process $\ccsInm{a}\ccsPrefix\ccsInm{b}$ from $\ccsInm{a}\ccsPrefix(\ccsInm{a}\ccsChoice \ccsInm{b}) \ccsChoice \ccsInm{a}\ccsPrefix\ccsInm{b}\ccsPrefix\ccsInm{b} \ccsChoice \ccsInm{a}$.
  Parts where the defender wins are drawn thin, small, and \textcolor{darkred!60}{red.}}
\end{figure}

\subsection{Relationship to Bisimulation}%
\label{subsec:relationship-bisimulation}

Comparing $\gameSpectroscopy$ to the standard bisimulation game from the literature (amending games originating from~\cite{stirling1996modal}
with symmetry moves, see e.g.\@~\cite{bnp2020coupledsim30}),
we can easily transfer attacker strategies from there.
In the standard bisimulation game, the attacker will play $(p,q) \gameMove (a, p', q)$ with $p\step{a}p'$
and the defender has to answer by $(a, p', q) \gameMove (p', q')$ with $q \step{a} q'$.
In the spectroscopy game, the attacker can enforce analogous moves by playing
\[
\attackerPos{p,\{q\}} \gameMoveX{\hmlObs{a}} \attackerPos{p', Q'} \gameMoveX{\wedge} \defenderPos{p', \{\{q^*\} \mid q^* \in Q'\}},
\]
which will make the defender pick $\defenderPos{p', \{\{q^*\} \mid q^* \in Q'\}} \gameMoveX{*} \attackerPos{p', \{q'\}}$.

The opposite direction of transfer is not so easy, as the attacker has more ways of winning in $\gameSpectroscopy$. But this asymmetry is precisely why we have to use the spectroscopy game instead of the standard bisimulation game if we want to learn about, for example, interesting failure-trace attacks.

Indeed, the spectroscopy game does characterize bisimilarity. Bisimilarity denotes the existence of a bisimulation relation $\rel R$ that has the properties that $\rel R \subseteq \rel{R}^{-1}$ (symmetry) and that $(\rel{R}^{-1}){\cdot}(\mathord{\step{a}}) \,\subseteq\, (\mathord{\step{a}}){\cdot}(\rel{R}^{-1})$ for all $a\in \actions$ (simulation).

\begin{lem}\label{lem:spectroscopy-game-bisimulation}
  The spectroscopy game $\gameSpectroscopy[\attackerPos{p_0,\{q_0\}}]$ is won by the defender precisely if $p_0$ and $q_0$ are bisimilar.
\end{lem}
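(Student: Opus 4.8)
The plan is to prove the two implications separately, exploiting that the spectroscopy game is positionally determined (see the discussion before \refDef{def:winning-region}), so that one may argue purely in terms of the winning regions: if $g \in W_\defenderSubscript$ and $g$ is an attacker position, then every successor of $g$ lies in $W_\defenderSubscript$; and if $g \in W_\defenderSubscript$ and $g$ is a defender position, then $g$ has at least one successor and at least one successor lies in $W_\defenderSubscript$.

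For the direction ``bisimilar $\Rightarrow$ defender wins'', fix a bisimulation $\rel R$ with $p_0 \mathrel{\rel R} q_0$. I would describe a defender strategy that, along every consistent play, carries a witness $q^* \in Q$ with $p \mathrel{\rel R} q^*$ at each attacker position $\attackerPos{p,Q}$ or $\attackerPos{p,Q}^\notand$ (note this forces $Q \neq \varnothing$). This holds initially. On an observation move $\attackerPos{p,Q} \gameMoveX{\hmlObs{a}} \attackerPos{p',Q'}$ with $p \step{a} p'$, the simulation property of $\rel R$ yields $q^* \step{a} q^{*\prime}$ with $p' \mathrel{\rel R} q^{*\prime}$, and $q^{*\prime} \in Q'$, so $q^{*\prime}$ is the new witness (and $Q' \neq \varnothing$, so the attacker can never force a position $\attackerPos{p',\varnothing}$). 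On a negation move $\attackerPos{p,\{q\}} \gameMoveX{\hmlNeg} \attackerPos{q,\{p\}}$, symmetry of $\rel R$ makes $p$ the new witness. On a conjunct challenge $\attackerPos{p,Q} \gameMoveX{\land} \defenderPos{p,\partition{Q}}$, the defender answers towards the unique block $B \in \partition{Q}$ with $q^* \in B$: via $\gameMoveX{*}$ to $\attackerPos{p,B}^\notand$ if $\lvert B \rvert \geq 2$, or to $\attackerPos{p,\{q^*\}}$ if $B = \{q^*\}$; in either case $q^*$ remains a witness. Since $Q \neq \varnothing$ throughout, no defender position reached under this strategy is stuck, so every consistent play is infinite or stuck at an attacker position, and by \refDef{def:plays-wins} the defender wins.

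For ``defender wins $\Rightarrow$ bisimilar'', set $\rel R \defEq \{(p,q) \mid \text{the defender wins } \gameSpectroscopy[\attackerPos{p,\{q\}}]\}$; since $(p_0,q_0) \in \rel R$ it suffices to show $\rel R$ is a bisimulation. \emph{Symmetry:} if $(p,q) \in \rel R$, then $\attackerPos{p,\{q\}} \in W_\defenderSubscript$ is an attacker position with successor $\attackerPos{q,\{p\}}$ via the negation move, so $\attackerPos{q,\{p\}} \in W_\defenderSubscript$, i.e.\ $(q,p) \in \rel R$. \emph{Simulation:} assume $(p,q) \in \rel R$ and $p \step{a} p'$; consider the attacker's observation move $\attackerPos{p,\{q\}} \gameMoveX{\hmlObs{a}} \attackerPos{p',Q'}$ with $Q' = \{q' \mid q \step{a} q'\}$, so $\attackerPos{p',Q'} \in W_\defenderSubscript$. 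If $\lvert Q' \rvert = 1$, say $Q' = \{q'\}$, then $(p',q') \in \rel R$ directly and $q \step{a} q'$. Otherwise ($Q' = \varnothing$ or $\lvert Q' \rvert \geq 2$) the split-into-singletons partition $\{\{q'\} \mid q' \in Q'\}$ is non-trivial, so the attacker may move to $\defenderPos{p',\{\{q'\} \mid q' \in Q'\}}$, which is therefore in $W_\defenderSubscript$; being a defender position in $W_\defenderSubscript$ it is not stuck (this already excludes $Q' = \varnothing$) and has a successor in $W_\defenderSubscript$, necessarily of the form $\attackerPos{p',\{q'\}}$ with $q' \in Q'$, whence $(p',q') \in \rel R$ and $q \step{a} q'$. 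Thus $\rel R$ is a bisimulation with $(p_0,q_0) \in \rel R$, so $p_0$ and $q_0$ are bisimilar.

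The main obstacle is the bookkeeping around the empty and singleton cases: in the first direction one must ensure the defender's strategy never lets the process set collapse to $\varnothing$ (which the simulation property of $\rel R$ precisely prevents), and in the second direction one must extract a bisimulation pair from a defender-won position $\attackerPos{p',Q'}$ even though the attacker's ``split into singletons'' conjunct challenge is a legal move only when $\lvert Q' \rvert \neq 1$, which forces the case $\lvert Q' \rvert = 1$ to be treated on its own. Beyond that, everything is the routine winning-region argument for positionally determined reachability games.
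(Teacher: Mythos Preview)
Your proof is correct and follows essentially the same approach as the paper's: both directions use the same relation $\rel R = \{(p,q) \mid \attackerPos{p,\{q\}} \in W_\defenderSubscript\}$ for extracting a bisimulation, and the same witness/invariant tracking (a related $q^* \in Q$) to build the defender strategy from a given bisimulation. The only cosmetic differences are the order of the two implications and that you fold the $Q' = \varnothing$ case into the ``otherwise'' branch, whereas the paper dismisses it one sentence earlier.
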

\begin{proof}
  We use that bisimilarity is equivalent to the existence of a bisimulation relation, and that winning is equivalent to the existence of a positional strategy.
  \begin{itemize}
    \item Construct the relation $\rel R = \{ (p,q) \mid \attackerPos{p,\{q\}}\in W_\defenderSubscript\}$.
    This must be a symmetric simulation (and thus a bisimulation).
    \begin{itemize}
      \item Symmetry: Assume $(p,q)\in \rel R$. As the attacker can play from $\attackerPos{p,\{q\}}$ to $\attackerPos{q,\{p\}}$, the two can only be in the defender winning region together. Thus $(q,p)\in \rel R$.
      \item Simulation: Assume $(p,q)\in \rel R$ and $p\step{a}p'$. Then there is an attacker move $\attackerPos{p,\{q\}}$ to $\attackerPos{p',Q'}$ for $Q' = \{q'\mid q\step{a}q'\}$.
        We must have $\attackerPos{p',Q'} \in W_\defenderSubscript$
        (otherwise, $\attackerPos{p,\{q\}}$ would be winning for the attacker);
this excludes in particular $Q' = \varnothing$.
If $Q'$ is a singleton, say $Q' = \{q'\}$, we get $(p',q') \in \rel R$ and the simulation is proven.
Otherwise, the attacker can take a conjunct challenge to $\defenderPos{p',\{\{q'\}\mid q' \in Q'\}}$.
There must be a move from this position to some $\attackerPos{p',\{q'_0\}}$ that is still in the defender winning region $W_\defenderSubscript$. As $q'_0\in Q'$, we can be sure that $q \step{a} q'_0$, so that $(p',q'_0)\in \rel R$ completes this case.
    \end{itemize}
    If $\gameSpectroscopy[\attackerPos{p_0,\{q_0\}}]$ is won by the defender, this means $\attackerPos{p_0,\{q_0\}}\in W_\defenderSubscript$ and thus $(p_0,q_0)\in\rel R$.
    \item Assume $\rel R$ is a bisimulation relation and $(p_0,q_0) \in \rel R$.
Construct the defender strategy $F(\defenderPos{p,\partition{Q}}) = \{ \attackerPos{p,Q} \mid Q \in \partition{Q} \land \exists q\in Q\ldotp (p, q) \in \rel R\}$.
Starting with $\attackerPos{p_0,\{q_0\}}$ whatever the attacker plays and whatever moves the defender chooses from $F$,
we prove that the play will maintain the invariant that for attacker positions $\attackerPos{p,Q}$ or $\attackerPos{p,Q}^\notand$ there will always be $q\in Q$ such that $(p,q) \in \rel R$,
and similarly for defender positions $\defenderPos{p,\partition{Q}}$ there will always be $Q \in \partition{Q}$ and $q \in Q$ such that $(p,q) \in \rel R$.
    \begin{itemize}
      \item Observation moves: At $\attackerPos{p,Q}$, the attacker moves to $\attackerPos{p',Q'}$ with $p \step a p'$ and $Q'$ reachable by $q\step{a}$. By the invariant, there was $q\in Q$ such that $(p,q) \in \rel R$. As $\rel R$ is a simulation, there must be $q' \in Q'$ with $(p', q') \in \rel R$.
      \item Conjunct challenges: At $\attackerPos{p,Q}$, the attacker moves to $\defenderPos{p, \partition{Q}}$. As $\partition{Q}$ covers $Q$, the invariant is maintained.
      \item Conjunct answers: At $\defenderPos{p,\partition{Q}}$, the defender moves to some $g' \in F(\defenderPos{p, \partition{Q}})$. Because of the invariant, the set cannot be empty. Because of the definition of $F$, the new position must keep the invariant.
      \item Negation moves: At $\attackerPos{p,\{q\}}$, the attacker moves to $\attackerPos{q, \{p\}}$. Due to $\rel R$ being symmetric, the invariant holds.
    \end{itemize}
    Because of the invariant, the defender cannot get stuck following $F$, and thus wins.\qedhere
  \end{itemize}
\end{proof}

\subsection{Deciding the Spectroscopy Game\label{subsec:deciding-game}}

Due to the partition construction over subsets of $\proc$,
the worst-case game size is proportional to the Bell number $\bellNumber{1 + \relSize{\proc}}$,
which is somewhat worse than exponential.
Going at least exponential is necessary, as we want to also characterize weaker preorders like the trace preorder, where exponential $\proc$-subset or $\actions^*$-word constructions cannot be circumvented. However, for moderate real-world systems, such constructions will not necessarily show their full exponential blow-up (cf.~\cite{ch1993}).
In our case, we went beyond exponential complexity in order to correctly address failure trace and ready trace languages, as will be explained in \refSubsec{subsec:difference-tacas}.

To be more precise, the game size is asymptotically described by the number of conjunct answers, $\relSize{\proc}^2 \cdot \bellNumber{1 + \relSize{\proc}}$. In detail: There are $\relSize{\proc} \cdot 2^{\relSize{\proc}}$ potential attacker positions. For every such position, there are Bell-number many conjunct challenges and corresponding defender positions (so $\relSize{\proc} \cdot \bellNumber{1 + \relSize{\proc}}$ in total, each) with up to $\relSize{\proc}^2 \cdot \bellNumber{1 + \relSize{\proc}}$ conjunct answers. Moreover, there are up to $\relSize{p \step{\cdot} \cdot}$ observation moves for each attacker position, totalling to $\relSize{\step{\cdot}} \cdot 2^{\relSize{\proc}}$. Finally, there are the $\relSize{\proc}^2$ negation moves, which clearly are dominated by the other complexities.

For our approach in the next section, we will need to know (and process) the whole attacker winning region of size $\bigo(\relSize{\proc} \cdot \bellNumber{1 + \relSize{\proc}})$.
We use Algorithm~\ref{alg:game-algorithm} to compute the winning region of reachability games (derived from~\cite{graedel2007finite}).
It starts by assuming that every position is won by the defender,
and then proceeds by visiting positions where the defender does not have any options or where the attacker has an option.
Every move in the game will lead to at most one invocation of the inner-most operations,
which renders the algorithm linear-time with respect to game moves.
The algorithm keeps information on every game position,
making the space complexity linear in the number of game positions and thus Bell-number super-exponential with respect to $\relSize{\proc}$.

\begin{algorithm}[t]
  \Fn{$\varname{compute\_winning\_region}(\game=(G,G_\defenderSubscript,\gameMove))$}{

    $\varname{defender\_options}\, := [g\mapsto n\mid g\in G\land n=\relSize{\{p\mid g\gameMove p\}}]$

    $\varname{attacker\_win}\, := \{\}$

    \Fn{$\varname{propagate\_attacker\_win}(g)$}{

      \If(){$g \notin \varname{attacker\_win}$}{

        $\varname{attacker\_win} := \varname{attacker\_win} \cup \{g\}$

        \For(){$g_{p}\in (\cdot\gameMove g)$}{

          $\varname{defender\_options}[g_{p}]:=\varname{defender\_options}[g_{p}]-1$

          \If(){$g_{p}\in G_\attackerSubscript\lor\varname{defender\_options}[g_{p}]=0$}{

            $\varname{propagate\_attacker\_win}(g_{p})$

          }
        }
      }
    }

    \For(){$g\in G_\defenderSubscript$}{

      \If(){$\varname{defender\_options}[g]=0$}{

        $\varname{propagate\_attacker\_win}(g)$

      }
    }

    \KwRet{$\varname{attacker\_win}$}

  }
  \caption{\label{alg:game-algorithm}Algorithm for deciding the attacker winning region $W_\attackerSubscript$ of a reachability game $\game$ in linear time of $\relSize{\gameMove}$ and linear space of $\relSize{G}$.}
\end{algorithm}

\subsection{Building Distinguishing Formulas from Attacker Strategies\label{subsec:building-formulas}}

Attacker strategies in $\gameSpectroscopy$ correspond to sets of $\hml$ formulas. If the strategies are winning, the formulas are distinguishing.

\begin{defi}[Strategy formulas]%
  \label{def:strategy-formulas}
  Given a (positional, non-deterministic, labeled) attacker strategy $F \subseteq (G_\attackerSubscript \times L \times G) \cap \mathord{\gameMove}$ for the spectroscopy game $\gameSpectroscopy$,
  the set of \emph{strategy formulas,} $\hmlStrategies_F$, is inductively defined by:

  \begin{itemize}
      \item If $\varphi \in \hmlStrategies_F(g_\attackerSubscript^\prime)$ and $(g_\attackerSubscript, \hmlObs{b}, g_\attackerSubscript^\prime) \in F$, then $\hmlObs{b}\varphi \in \hmlStrategies_F(g_\attackerSubscript)$,

      \item if $\varphi \in \hmlStrategies_F(g_\attackerSubscript^\prime)$ and $(g_\attackerSubscript, \hmlNeg, g_\attackerSubscript^\prime) \in F$, then $\hmlNeg\varphi \in \hmlStrategies_F(g_\attackerSubscript)$,

      \item if $\varphi \in \hmlStrategies_F(g_\defenderSubscript)$ and $(g_\attackerSubscript, \wedge, g_\defenderSubscript) \in F$, then $\varphi \in \hmlStrategies_F(g_\attackerSubscript)$, and

      \item if $\varphi_{g_\attackerSubscript'} \in \hmlStrategies_F(g_\attackerSubscript')$ for all $g_\attackerSubscript' \in I = \{g_\attackerSubscript' \mid g_\defenderSubscript \gameMoveX{*}_\triangle g_\attackerSubscript'\}$ then $\hmlAnd{g_\attackerSubscript'}{I}\varphi_{g_\attackerSubscript'}\in \hmlStrategies_F(g_\defenderSubscript)$.
  \end{itemize}
\end{defi}

\begin{exa}
  The attacks $\attackerPos{\ccsIdentifier{P_1},\{\ccsIdentifier{P_2}\}}
  \gameMoveX{\hmlObsI{a}} \attackerPos{\ccsInm{b}\ccsChoice\ccsInm{c},
      \{\ccsInm{b}\ccsChoice\ccsInm{d}, \ccsInm{c}\ccsChoice\ccsInm{d}\}}
  \gameMoveX{\land} \gameMoveX{*}
  \gameMoveX{\hmlNeg} \gameMoveX{\hmlObsI{d}}
  \attackerPos{\ccsStop,\varnothing}\gameMoveX{\land}$ on the system of \refExample{ex:ccs-p1p2} give rise to the formula
  $\hmlObsI{a}\hmlAndS\{\hmlNeg\hmlObsI{d}\hmlTrue\}$, which can be written as $\hmlObsI{a}\hmlNeg\hmlObsI{d}$.
\end{exa}

The definition will never generate disjunctive subformulas:
The target of a negation move is always a position $\attackerPos{p,\{q\}}$ with singleton $\{q\}$,
and only observation moves and negation moves are generated in such positions.

\begin{defi}[Winning strategy graph]%
  \label{def:strategy-graph}
  Given the attacker winning region $W_\attackerSubscript$ and a starting position $g_0 \in W_\attackerSubscript$,
  the \emph{attacker winning strategy graph} $F_\attackerSubscript$ is the subset of the $\gameMove\:\!$-graph
  that can be visited from $g_0$ when following all $\gameMove\:\!$-edges unless they lead out of $W_\attackerSubscript$.
\end{defi}

This graph can be cyclic;
cycles are of the form $\attackerPos{p,\{q\}} \gameMoveX{\hmlNeg} \attackerPos{q,\{p\}} \gameMoveX{\hmlNeg} \attackerPos{p,\{q\}}$,
or introduced by recursive processes.
(Remember that our examples are acyclic, but more complex LTSs could contain cycles.)
However, if the attacker plays inside their winning region according to $F_\attackerSubscript$,
they will always have paths to their final winning positions.
So even though the attacker could loop (and thus introduce double negations or lose),
they can always end the game and \emph{win} in the sense of \refDef{def:strategies}
without generating double negations.

\newsavebox{\canceledbox}\sbox{\canceledbox}{$\cancel{\text{\normalsize\color{black}pruned away}}$}

\begin{figure}[tph!]
  \vspace*{0.8cm} 
  \begin{center}
  \begin{tikzpicture}[shorten <=1pt,shorten >=0.5pt]

    \path[use as bounding box,draw=none] (8.1,-5.44) rectangle (-7.05,10.91);

    \begin{scope}[every node/.style={inner sep=1ex,anchor=center,draw,black,fill=gray!5,thick}]

    \node(s0) at(-2.5,10) {\begin{tabular}{@{\;}c@{\;}}$
    \attackerPos{\ccsInm{a}\ccsPrefix\ccsInm{b},
      \{ \ccsInm{a}\ccsPrefix(\ccsInm{a}\ccsChoice \ccsInm{b}) \ccsChoice
        \ccsInm{a}\ccsPrefix\ccsInm{b}\ccsPrefix\ccsInm{b} \ccsChoice
        \ccsInm{a} \}}
  $\\[.1333em]\hline\raisebox{-.1333em}{$
    \hmlObs{a}\hmlAndS\{\hmlNeg\hmlObs{a}, \hmlObs{b}\hmlNeg\hmlObs{b}\};
    \text{\color{black!75}\small$\hmlObs{a}\hmlAndS\{\hmlNeg\hmlObs{a}, \hmlNeg\hmlObs{b}\hmlObs{b}, \hmlObs{b}\}$};
  $}\\[.1333em]$
    \hmlNeg\hmlObs{a}\hmlObs{a};
    \cancel{\hmlNeg\hmlNeg\hmlObs{a}\hmlAndS\{\hmlNeg\hmlObs{a}, \hmlNeg\hmlObs{b}\hmlObs{b}, \hmlObs{b}\}}
  $\end{tabular}};

    \node(s0neg) at(5.1,8.5) {\begin{tabular}{@{\;}c@{\;}}$
    \attackerPos{\ccsInm{a}\ccsPrefix(\ccsInm{a}\ccsChoice
      \ccsInm{b}) \ccsChoice \ccsInm{a}\ccsPrefix\ccsInm{b}\ccsPrefix\ccsInm{b} \ccsChoice
      \ccsInm{a},
      \{ \ccsInm{a}\ccsPrefix\ccsInm{b} \}}
  $\\[.1333em]\hline\raisebox{-.1333em}{$
    \hmlObs{a}\hmlObs{a};
    \cancel{\hmlObs{a}\hmlObs{b}\hmlObs{b}};
    \cancel{\hmlObs{a}\hmlNeg\hmlObs{b}\hmlNeg\hmlObs{b}};
  $}\\[.1333em]$
    \cancel{\hmlObs{a}\hmlNeg\hmlObs{b}};
    \cancel{\hmlNeg\hmlObs{a}\hmlAndS\{\hmlNeg\hmlObs{a}, \hmlObs{b}\hmlNeg\hmlObs{b}\}};
  $\\$
    \text{\color{black!75}\small$\hmlNeg\hmlObs{a}\hmlAndS\{\hmlNeg\hmlObs{a}, \hmlNeg\hmlObs{b}\hmlObs{b}, \hmlObs{b}\}$};
    \cancel{\hmlNeg\hmlNeg\hmlObs{a}\hmlObs{a}}
  $\end{tabular}};

    \node(s1) at(-2.5,7.25) {\begin{tabular}{@{\;}c@{\;}}$
    \attackerPos{\ccsInm{b},
      \{ \ccsInm{a}\ccsChoice \ccsInm{b},
         \ccsInm{b}\ccsPrefix\ccsInm{b},
         \ccsStop \}}
  $\\[.1333em]\hline\raisebox{-.1333em}{$
    \hmlAndS\{\hmlNeg\hmlObs{a}, \hmlObs{b}\hmlNeg\hmlObs{b}\};
  $}\\[.1333em]$
    \cancel{\hmlAndS\{\hmlNeg\hmlObs{a}, \hmlObs{b}\hmlNeg\hmlObs{b}, \hmlObs{b}\}};
    \text{\color{black!75}\small$\hmlAndS\{\hmlNeg\hmlObs{a}, \hmlNeg\hmlObs{b}\hmlObs{b}, \hmlObs{b}\}$}
  $\end{tabular}};

    \node[ellipse,minimum height=10ex,minimum width=12em,text height=2ex](s2a) at(-4.5,4.75) {\smash{\makebox[0pt]{\begin{tabular}{c}$
    \defenderPos{\ccsInm{b},
      \{ \{\ccsInm{a}\ccsChoice \ccsInm{b}\},
         \{\ccsInm{b}\ccsPrefix\ccsInm{b}, \ccsStop\} \}}
  $\\[.1333em]\hline\raisebox{-.1333em}{$
    \hmlAndS\{\hmlNeg\hmlObs{a}, \hmlObs{b}\hmlNeg\hmlObs{b}\}
  $}\end{tabular}}}};

    \node(s3b) at(-2.75,2.25) {\begin{tabular}{@{\;}c@{\;}}$
    \attackerPos{\ccsInm{b}, \{ \ccsInm{b}\ccsPrefix\ccsInm{b}, \ccsStop \}}^\notand
  $\\[.1333em]\hline\raisebox{-.1333em}{$
    \hmlObs{b}\hmlNeg\hmlObs{b}
  $}\end{tabular}};

    \node(s3c) at(4,0.5) {\begin{tabular}{@{\;}c@{\;}}$
    \attackerPos{\ccsInm{b}, \{ \ccsStop \}}
  $\\[.1333em]\hline\raisebox{-.1333em}{$
    \hmlObs{b};
    \cancel{\hmlNeg\hmlNeg\hmlObs{b}}
  $}\end{tabular}};

    \node(s3a) at(-5.75,0.5) {\begin{tabular}{@{\;}c@{\;}}$
    \attackerPos{\ccsInm{b}, \{ \ccsInm{a}\ccsChoice \ccsInm{b} \}}
  $\\[.1333em]\hline\raisebox{-.1333em}{$
    \hmlNeg\hmlObs{a}
  $}\end{tabular}};

    \node(s4c) at(4,-1.5) {\begin{tabular}{@{\;}c@{\;}}$
    \attackerPos{\ccsStop, \{ \ccsInm{b} \}}
  $\\[.1333em]\hline\raisebox{-.1333em}{$
    \hmlNeg\hmlObs{b}
  $}\end{tabular}};

    \node(s4a) at(-5.75,-1.5) {\begin{tabular}{@{\;}c@{\;}}$
    \attackerPos{\ccsInm{a}\ccsChoice \ccsInm{b}, \{ \ccsInm{b} \}}
  $\\[.1333em]\hline\raisebox{-.1333em}{$
    \hmlObs{a};
    \cancel{\hmlNeg\hmlNeg\hmlObs{a}}
  $}\end{tabular}};

    \node(s5) at(-0.75,-4) {\begin{tabular}{@{\;}c@{\;}}$
    \attackerPos{\ccsStop, \varnothing}
  $\\[.1333em]\hline\raisebox{-.1333em}{$
    \hmlTrue
  $}\end{tabular}};

    \node[ellipse,minimum height=8ex,inner sep=2pt,minimum width=5.6em,text height=1.5ex](bott) at(2.75,-4) {\smash{\makebox[0pt]{\begin{tabular}{c}$
    \defenderPos{\ccsStop, \varnothing}
  $\\[.1333em]\hline\raisebox{-.1333em}{$
    \hmlTrue
  $}\end{tabular}}}};

    \end{scope}


    \begin{scope}[every node/.style={inner sep=1ex,anchor=center,draw,black!75,fill=gray!4,node font=\small}]

    \node[ellipse,minimum height=15ex,minimum width=13em,text height=2.5ex](s2b) at(2,4.75) {\smash{\makebox[0pt]{\begin{tabular}{c}$
    \defenderPos{\ccsInm{b},
      \{ \{\ccsInm{a}\ccsChoice \ccsInm{b}\},
         \{\ccsInm{b}\ccsPrefix\ccsInm{b}\},
         \{\ccsStop\} \}}
  $\\[.1333em]\hline\raisebox{-.1333em}{$
    {\phantom{;}}\hmlAndS\{\hmlNeg\hmlObs{a}, \hmlObs{b}\hmlNeg\hmlObs{b}, \hmlObs{b}\}{;}
  $}\\[.1333em]$
    \hmlAndS\{\hmlNeg\hmlObs{a}, \hmlNeg\hmlObs{b}\hmlObs{b}, \hmlObs{b}\}
  $\end{tabular}}}};

    \node(s3d) at(-0.5,0.5) {\begin{tabular}{@{\;}c@{\;}}$
    \attackerPos{\ccsInm{b}, \{ \ccsInm{b}\ccsPrefix\ccsInm{b} \}}
  $\\[.1333em]\hline\raisebox{-.1333em}{$
    \hmlObs{b}\hmlNeg\hmlObs{b};
    \hmlNeg\hmlObs{b}\hmlObs{b};
    \cancel{\hmlNeg\hmlNeg\hmlObs{b}\hmlNeg\hmlObs{b}}
  $}\end{tabular}};

    \node(s4d) at(-0.5,-1.5) {\begin{tabular}{@{\;}c@{\;}}$
    \attackerPos{\ccsInm{b}\ccsPrefix\ccsInm{b}, \{ \ccsInm{b} \}}
  $\\[.1333em]\hline\raisebox{-.1333em}{$
    \hmlObs{b}\hmlObs{b};
    \hmlNeg\hmlObs{b}\hmlNeg\hmlObs{b};
    \cancel{\hmlNeg\hmlNeg\hmlObs{b}\hmlObs{b}}
  $}\end{tabular}};

    \end{scope}


    \begin{scope}[>->,black!75,every node/.style={node font=\small}]

    \draw ([yshift=5mm]s0neg) -- node[above]{$~~\hmlNeg$} (s0);

    \draw (s0neg) .. controls (6.75,3) and (8.25,-5.375) .. node[right,pos=.45]{$\hmlObs{a}$} (s4d);
    \draw (s0neg.275) .. controls (6,3) and (6.75,0.75) .. node[left,pos=.56]{$\hmlObs{a}$} (s4c);

    \draw (s1) -- node[right,pos=.4]{$~~\land$} (s2b);
    \draw (s2b) .. controls (-4.625,3.25) and (-4,3.875) .. node[above,pos=.1]{$*~$} (s3a);
    \draw (s2b) -- node[right]{$*$} (s3c);
    \draw (s2b) -- node[left]{$*$} (s3d);

    \draw (s4a.77) -- node[right]{$\hmlNeg$} (s3a.283);
    \draw (s3c.257) -- node[left]{$\hmlNeg$} (s4c.103);
    \draw (s3d.257) -- node[left]{$\hmlNeg$} (s4d.103);
    \draw (s4d.77) -- node[right]{$\hmlNeg$} (s3d.283);

    \draw (s3d.345) -- node[below,pos=.65]{$\hmlObs{b}~$} (s4c);
    \draw (s4d.16) -- node[above,pos=.725]{$\hmlObs{b}~$} (s3c);

    \end{scope}


    \begin{scope}[>->,black,thick,every node/.style={node font=\boldmath}]

    \draw (s0) -- node[left]{$\hmlObs{a}$} (s1);
    \draw[shorten <=1.5pt] ([yshift=-5mm]s0) -- node[below]{$\hmlNeg~~$} (s0neg);

    \draw (s0neg.297) .. controls (11.75,-7.5) and (-2.75,-7.5) .. node[right,pos=.22]{$\hmlObs{a}$} (s4a);

    \draw (s1) -- node[left,pos=.35]{$\land~$} (s2a);
    \draw (s2a) -- node[left,pos=.23]{$*$} (s3a);
    \draw (s2a) -- node[right,pos=.65]{$*$} (s3b);

    \draw (s4c.77) -- node[right]{$\hmlNeg$} (s3c.283);
    \draw (s3a.257) -- node[left]{$\hmlNeg$} (s4a.103);

    \draw (s3b) .. controls (-4.5,-2.75) and (-2.25,-3.75) .. node[left,pos=.06]{$\hmlObs{b}$} (s4c);
    \draw (s4a) -- node[below,pos=.45]{$\hmlObs{a}~~$} (s5);
    \draw (s3c) .. controls (7.5,-3.25) and (1.75,-2.75) .. node[right,pos=.1]{$\hmlObs{b}$} (s5.20);
    \draw (s5) -- node[below]{$\land$} (bott);

    \end{scope}

  \end{tikzpicture}
  \end{center}
  \vspace*{0.5cm}
  \caption{\label{fig:spectro-game-formulas}%
  The winning region of the spectroscopy game of \refFig{fig:spectro-game-1},
  together with distinguishing formulas.
  Every node displays in the top half the game position
  and in the bottom half the relevant cheap strategy formulas in $\hmlStrategies_{F_\attackerSubscript}$ for that position.
  Formulas that are recognized as too expensive to be interesting are \usebox{\canceledbox},
  as will be explained more formally in \refDef{def:formula-pruning}.
  Edges that are used to construct the main distinguishing formulas are {\boldmath $\mathit{thick.}$}}
  \vspace*{0.5cm}
\end{figure}

\begin{exa}\label{exa:game-distinguishing-formulas}
  \refFig{fig:spectro-game-formulas} continues the comparison of $\ccsInm{a}\ccsPrefix\ccsInm{b}$ with $\ccsInm{a}\ccsPrefix(\ccsInm{a}\ccsChoice \ccsInm{b}) \ccsChoice \ccsInm{a}\ccsPrefix\ccsInm{b}\ccsPrefix\ccsInm{b} \ccsChoice \ccsInm{a}$
  from \refExample{exa:spec-game-1}.
  The game constructs the following three distinguishing formulas:

  \begin{itemize}
  \item	$\hmlObs{a}\hmlAndS\{\hmlNeg\hmlObs{a}, \hmlObs{b}\hmlNeg\hmlObs{b}\}$.
    This formula has price $(3,2,1,1,1,1)$ and shows that the processes can be distinguished by a failure trace,
    namely $\textcolor{red}{\varnothing} \ccsInm{a} \textcolor{red}{\{ \ccsInm{a} \}} \ccsInm{b} \textcolor{red}{\{ \ccsInm{b} \}}$:
    this is a failure trace of $\ccsInm{a}\ccsPrefix\ccsInm{b}$ but not of $\ccsInm{a} \ccsChoice \ccsInm{a}\ccsPrefix\ccsInm{b}\ccsPrefix\ccsInm{b} \ccsChoice \ccsInm{a}\ccsPrefix(\ccsInm{a} \ccsChoice \ccsInm{b})$.
  \item	$\hmlNeg \hmlObs{a}\hmlObs{a}$,
    with price $\expr(\widehat{\,\cdot\,}) = (2,1,0,0,1,2)$.
    This formula shows that the processes are not in the impossible-futures preorder.
  \item $\hmlObs{a}\hmlAndS\{\hmlNeg\hmlObs{a}, \hmlNeg\hmlObs{b}\hmlObs{b}, \hmlObs{b}\}$.
    This formula has price $(3,1,0,1,1,2)$ and shows that the processes can be distinguished by the possible-futures preorder.
    While this formula is strictly more expensive than $\hmlNeg\hmlObs{a}\hmlObs{a}$
    (and therefore will be suppressed in the final output),
    it cannot be deleted altogether
    because it may be needed as part of a larger formula in other states.
    In particular, its negation is strictly cheaper than the negation of the first constructed formula,
    similar to \refExample{exa:locally-more-expressive}.
  \end{itemize}
  The construction of other formulas is aborted
  as soon as they are recognized as more expensive in every relevant context,
  as we will discuss in the next subsection.
\end{exa}

Let us quickly prove the soundness of the strategy formulas obtained from winning strategies.

\begin{thm}%
  \label{thm:spectroscopy-hml}
  If $W_\attackerSubscript$ is the attacker winning region of the spectroscopy game $\gameSpectroscopy$ and $F_\attackerSubscript$ the derived strategy graph, every $\varphi \in \hmlStrategies_{F_\attackerSubscript}(\attackerPos{p, Q})$ distinguishes $p$ from every $q \in Q$.
\end{thm}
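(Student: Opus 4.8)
Our plan is to establish a strengthened statement by induction on the finite derivation of the membership $\varphi\in\hmlStrategies_{F_\attackerSubscript}(g)$ according to \refDef{def:strategy-formulas}; this derivation is finite even though $F_\attackerSubscript$ may be cyclic, since the defining rules are read inductively. The invariant also speaks about defender positions: for an attacker position $g=\attackerPos{p,Q}$ or $g=\attackerPos{p,Q}^\notand$, every $\varphi\in\hmlStrategies_{F_\attackerSubscript}(g)$ has $\hmlSemantics{\varphi}{}{p}$ true and $\hmlSemantics{\varphi}{}{q}$ false for every $q\in Q$; for a defender position $g=\defenderPos{p,\partition{Q}}$, every $\varphi\in\hmlStrategies_{F_\attackerSubscript}(g)$ has $\hmlSemantics{\varphi}{}{p}$ true and $\hmlSemantics{\varphi}{}{q}$ false for every $q\in\bigcup\partition{Q}$. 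The theorem is the attacker-position half. The only $\hml$ facts needed are the standard ones obtained from the game semantics (\refDef{def:hml-game}) and its determinacy: $\hmlSemantics{\hmlObs{a}\psi}{}{p}$ is true iff $p\step{a}p'$ for some $p'$ with $\hmlSemantics{\psi}{}{p'}$; $\hmlSemantics{\hmlAndS\Phi}{}{p}$ is true iff $\hmlSemantics{\psi}{}{p}$ is true for all $\psi\in\Phi$; and $\hmlSemantics{\hmlNeg\psi}{}{p}$ is true iff $\hmlSemantics{\psi}{}{p}$ is not.

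We would then work through the four rules of \refDef{def:strategy-formulas}. The one premise-free rule produces the empty conjunction $\hmlTrue\in\hmlStrategies_{F_\attackerSubscript}(\defenderPos{p,\varnothing})$: here $\hmlSemantics{\hmlTrue}{}{p}$ is true and $\bigcup\varnothing=\varnothing$ makes the falsity part vacuous, and a subsequent conjunct challenge transports this to $\attackerPos{p,\varnothing}$. For an \emph{observation move} yielding $\hmlObs{b}\psi$ at $\attackerPos{p,Q}$ (or $\attackerPos{p,Q}^\notand$) with edge to $\attackerPos{p',Q'}$, where $p\step{b}p'$ and $Q'=\{q'\mid\exists q\in Q\ldotp q\step{b}q'\}$, the induction hypothesis gives $\psi$ true at $p'$, hence $\hmlObs{b}\psi$ true at $p$, and $\psi$ false throughout $Q'$; since $Q'$ collects every $b$-successor of every $q\in Q$, no $q\in Q$ reaches a $\psi$-state via $b$, so $\hmlObs{b}\psi$ is false at each $q$. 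For a \emph{negation move} yielding $\hmlNeg\psi$ at $\attackerPos{p,\{q\}}$ with edge to $\attackerPos{q,\{p\}}$, the induction hypothesis gives $\psi$ true at $q$ and false at $p$, which by negation duality is exactly $\hmlNeg\psi$ true at $p$ and false at $q$. For a \emph{conjunct challenge}, $\varphi$ at $\attackerPos{p,Q}$ is inherited unchanged from $\defenderPos{p,\partition{Q}}$ for some partition $\partition{Q}$ of $Q$, so the defender-position hypothesis together with $\bigcup\partition{Q}=Q$ give the claim. Finally, for \emph{conjunct answers} yielding $\hmlAnd{g'}{I}\varphi_{g'}$ at $\defenderPos{p,\partition{Q}}$, the index set $I$ ranges over \emph{all} $*$-successors, which is exactly one per block of $\partition{Q}$ (a singleton $\{q\}$, or a non-singleton reached via $\attackerPos{p,Q^*}^\notand$); each $\varphi_{g'}$ is, by the induction hypothesis, true at $p$ and false on its block, so the conjunction is true at $p$ and, for any $q\in\bigcup\partition{Q}$, is refuted by the conjunct belonging to $q$'s block.

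I do not expect a genuine obstacle here; the conceptual heart is the observation-move and conjunct-answer cases, where soundness relies on the powerset/partition bookkeeping inside the game positions matching the universal reading of $\hmlObs{a}$ and $\hmlAndS$ --- that is, on the attacker's implicitly built formula excluding \emph{every} process the defender might move to. The remaining cases are routine once the invariant is fixed; the points to watch are the empty-conjunction base case with its vacuous quantifications, and the verification that the conjunction in \refDef{def:strategy-formulas} really does cover all blocks of the partition. Specializing the invariant to $g=\attackerPos{p,Q}$ then gives the theorem.
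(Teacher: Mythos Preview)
Your proposal is correct and follows essentially the same approach as the paper: a structural induction on the derivation of $\varphi\in\hmlStrategies_{F_\attackerSubscript}(g)$, with the observation, negation, and conjunction cases argued exactly as the paper does. The only cosmetic difference is that you make the defender-position invariant and the empty-conjunction base case explicit, whereas the paper folds the conjunct challenge and conjunct answer into a single case and leaves the base case implicit.
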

\begin{proof}
  We proceed by induction on the structure of $\hmlStrategies_{F_\attackerSubscript}$ with arbitrary $p, Q$.
  \begin{itemize}
    \item Assume $\varphi \in \hmlStrategies_{F_\attackerSubscript}(\attackerPos{p', Q'})$, and $(\attackerPos{p, Q},\hmlObs{b}, \attackerPos{p', Q'}) \in F_\attackerSubscript$.
    By induction hypothesis, $\varphi$ must be true for $p'$ and false for all $q' \in Q'$.
    Due to the structure of $\gameSpectroscopy$, we know that $p \step{b} p'$ and that $Q'$ contains all $b$-reachable states for $Q$.
    Thus, $\hmlObs{b}\varphi \in \hmlStrategies_{F_\attackerSubscript}(\attackerPos{p, Q})$ must be true for $p$ and false for all $q \in Q$. Likewise, in the case of $(\attackerPos{p, Q}^\notand,\hmlObs{b}, \attackerPos{p', Q'}) \in F_\attackerSubscript$, we find that $\hmlObs{b}\varphi \in \hmlStrategies_{F_\attackerSubscript}(\attackerPos{p, Q}^\notand)$ is distinguishing.
    \item Assume $\varphi \in \hmlStrategies_{F_\attackerSubscript}(\attackerPos{p', Q'})$ and $(\attackerPos{p, Q}, \hmlNeg, \attackerPos{p', Q'}) \in F_\attackerSubscript$.
    By the construction of $\gameSpectroscopy$, $Q=\{p'\}$ and $Q'=\{p\}$.
    By induction hypothesis, $\varphi$ must be true for $p'$ and false for $p$.
    So, $\hmlNeg\varphi \in \hmlStrategies_{F_\attackerSubscript}(\attackerPos{p, Q})$ must be true for $p$ and false for all elements of $\{p'\} = Q$.
    \item Assume $\varphi_{g_\attackerSubscript'} \in \hmlStrategies_{F_\attackerSubscript}(g_\attackerSubscript')$ for all $g_\attackerSubscript' \in I = \{g_\attackerSubscript' \mid g_\defenderSubscript \gameMoveX{*}_\triangle g_\attackerSubscript'\}$, and also $(\attackerPos{p, Q},\wedge, g_\defenderSubscript) \in F_\attackerSubscript$.
    Due to the construction of $\gameSpectroscopy$, $Q=\{ q' \mid \attackerPos{p', \{q'\}} \in I \} \cup \{q' \mid \exists Q'\ldotp q' \in Q' \land \attackerPos{p', Q'}^\notand  \in I\}$ and $p'=p$.
    By induction hypothesis, every $\varphi_{g_\attackerSubscript'}$ is true for $p$ and false for all its respective $q' \in Q'$.
    So, the conjunction $\hmlAnd{g_\attackerSubscript'}{I}\varphi_{g_\attackerSubscript'} \in \hmlStrategies_{F_\attackerSubscript}(\attackerPos{p, Q})$ must be distinguishing for $p$ and~$Q$.
    \qedhere
  \end{itemize}
\end{proof}

\noindent
Note that the theorem is only one-way, as every distinguishing formula can neutrally be extended by stating that some additional clause that is true for \emph{both} processes does hold. \refDef{def:strategy-formulas} will not find such bloated formulas.

We prove that a neatly constructed subset of the cheapest strategy formulas suffices for reaching completeness in \refThm{thm:correctness-general}.

Due to the cycles in the game graph, $\hmlStrategies_{F_\attackerSubscript}$ usually yields infinitely many formulas, unbounded in length and price.
The next section will discuss how to find a finite subset of $\hmlStrategies_{F_\attackerSubscript}$ that solves Problem~\ref{prob:cheapest-distinction}.

\subsection{\label{subsec:cheapest-formulas}Retrieving Cheapest Distinguishing Formulas}

We are now at the point where the recharting of the spectrum from \refSubsec{subsec:priced-formulas} pays off.
With the pricing metric on formulas, the coarsest ways of telling two states $p,q$ apart are precisely given by the formulas $\Phi_\Delta \subseteq \hmlStrategies_{F_\attackerSubscript}(\attackerPos{p,\{q\}})$ that are not dominated by any other formula in the set.

There might well be multiple such minimal-price formulas $\varphi_\Delta \in \Phi_\Delta$
with differing $\expr$ prices for the same pair of processes.
Due to \refLem{lem:obslang-price-characterization}, the processes then are distinguished with respect to every equivalence $X$
where the budget vector $e_X$ from Table~\ref{tab:ltbts-dimensions} is above the price of one of the formulas,
that is, where $\expr(\widehat{\varphi_\Delta}) \sqsubseteq e_X$ for a $\varphi_\Delta \in \Phi_\Delta$.
At the same time, the processes are preordered with respect to all the other notions of equivalence from the table,
because these do not have sufficient distinguishing capabilities.

\begin{algorithm}[t]
  \Fn{\upshape $\varname{game\_spectroscopy}$($\system,p_0,q_0$)}{

    $\gameSpectroscopy^{\system}=(G,G_\attackerSubscript,\gameMove) := \varname{construct\_spectroscopy\_game}(\system)$%
    \label{algo:line:construct-spectroscopy-game}

    $W_\attackerSubscript:=\varname{compute\_winning\_region}(\gameSpectroscopy^{\system})$

    \If(){$\attackerPos{p_0,\{q_0\}} \in W_\attackerSubscript$}{

      $F_\attackerSubscript := \varname{winning\_graph}(\gameSpectroscopy^{\system}, W_\attackerSubscript, \attackerPos{p_0,\{q_0\}})$%
      \label{algo:line:begin-distinguish-formula}

      $\varname{strats}{[]} := \varnothing$

      $\varname{todo} := [\attackerPos{p_0,\{q_0\}}]$

      \While(){
        $\varname{todo} \neq []$
      }{
        $\varname{g} := \varname{todo.dequeue}()$

        $\varname{sg} :=  \varname{strats}{[\varname{g}]}$

        \If(){$\varname{sg} = \textit{undefined}$}{
          $\varname{strats}{[\varname{g}]} := \varnothing$
        }

        $\varname{gg'} := \{g' \mid (\varname{g},\cdot, g') \in F_\attackerSubscript \land \varname{strats}[g'] = \textit{undefined}\}$

        \If(){$\varname{gg'} = \varnothing$}{
          $\varname{sg'} = \hmlPrune(\hmlStrategies'_{F_\attackerSubscript,\varname{strats}}(\varname{g}))$

          \If(){$\varname{sg} \neq \varname{sg'}$}{
            $\varname{strats}[\varname{g}] := \varname{sg'}$

            $\varname{todo.enqueue\_each\_end}( \{g^* \mid (g^*,\cdot, \varname{g}) \in F_\attackerSubscript \land g^* \notin \varname{todo} \} )$
          }

        }
        \Else(){
          $\varname{todo.enqueue\_each\_front}(\varname{gg'})$%
          \label{algo:line:end-distinguish-formula}
        }
      }
      \KwRet{$\varname{strats}[\attackerPos{p_0,\{q_0\}}]$}
    }

    \Else(){

      $\rel R :=\{(p,q)\mid \attackerPos{p,\{q\}} \in G_\attackerSubscript \setminus W_\attackerSubscript \}$

      \KwRet{$\rel R$}

    }

  }\caption{\label{alg:spectroscopy-algo}Spectroscopy procedure.}
\end{algorithm}

The extraction of cheapest formulas from $\hmlStrategies_{F_\attackerSubscript}$ presents itself
as a very special kind of ``shortest-distance'' problem (cf.~\cite{mohri2002semiringShortest})
between the positions immediately won by the attacker and the initial game position.
It is just that the ``distances'' we have to keep track of are no numbers but sets of formulas.
Those are peculiar as we have seen in \refExample{exa:locally-more-expressive}.
We cannot really calculate with them but only compare their $\expr$ values and construct more complex formula sets.

The rest of this subsection is about designing a \emph{fixed point algorithm} around this strange problem space.
In order to be efficient, the algorithm should discard distinguishing formulas
that are already recognized as not-cheapest as early as possible.

Algorithm~\ref{alg:spectroscopy-algo} gives an overview of how the results of previous subsections and two following definitions play together.
It constructs the spectroscopy game $\gameSpectroscopy^\system$ (\refDef{def:spectroscopy-game}),
finds the attacker winning region using Algorithm~\ref{alg:game-algorithm}
and computes its attacker winning strategy graph $F_\attackerSubscript$ (\refDef{def:strategy-graph}).
If the attacker cannot win, the algorithm returns a bisimulation relation derived from the defender winning region (Lemma~\ref{lem:spectroscopy-game-bisimulation}).
Otherwise, it constructs the distinguishing formulas:
It keeps a map $\varname{strats}$ of strategy formulas that have been \emph{found so far}
and a list of game positions $\varname{todo}$ that have to be updated.
In every round, we take a game position $\varname{g}$ from $\varname{todo}$.
If some of its successors have not been visited yet, we add them to the top of the work list.
Otherwise we call $\hmlStrategies'_{F_\attackerSubscript,\varname{strats}}(\varname{g})$ (to be introduced in \refDef{def:partial-strategy-formulas})
to compute distinguishing formulas using the follow-up formulas $\varname{strats}$.
From the found formulas, $\hmlPrune$ removes formulas that are recognized as too expensive to be interesting
(to be introduced in \refDef{def:formula-pruning}).
If the result changes $\varname{strats}[\varname g]$,
we enqueue each game predecessor to propagate the update there.

The fixed point approach needs $\hmlStrategies_{F}$ from \refDef{def:strategy-formulas} in a more stepwise formulation that is not recursive.
The function $\hmlStrategies'_{F,\mathit{strats}}$ mostly corresponds to \refDef{def:strategy-formulas}
with the twist that tentative follow-ups are used instead of recursion.

\begin{defi}[Tentative strategy formulas]%
  \label{def:partial-strategy-formulas}
  Given a labeled attacker strategy $F \subseteq (G_\attackerSubscript \times L \times G)$
  for the spectroscopy game $\gameSpectroscopy$
  and an approximation of strategy formulas $\mathit{strats} \colon G \rightarrow \hmlA$,
  the \emph{next tentative strategy formulas,} $\hmlStrategies'_{F,\mathit{strats}}$, are defined for $g_\attackerSubscript \in G_\attackerSubscript$ and $g_\defenderSubscript \in G_\defenderSubscript$ as:
  \begin{align*}
    \hmlStrategies'_{F,\mathit{strats}}(g_\attackerSubscript) =
    & \phantom{{} \cup {}}\{ \hmlObs{b}\varphi \mid (g_\attackerSubscript, \hmlObs{b}, g_\attackerSubscript^\prime) \in F \land \varphi \in \mathit{strats}(g_\attackerSubscript^\prime)\}\\
    & \cup \{ \hmlNeg\varphi \mid (g_\attackerSubscript, \hmlNeg, g_\attackerSubscript^\prime) \in F \land \varphi \in \mathit{strats}(g_\attackerSubscript^\prime)\}\\
    & \cup \{\varphi \mid (g_\attackerSubscript, \wedge, g_\defenderSubscript) \in F \land \varphi \in \mathit{strats}(g_\defenderSubscript)\}\\
    \hmlStrategies'_{F,\mathit{strats}}(g_\defenderSubscript) =
    & \phantom{{} \cup {}}\{ \textstyle\hmlAnd{g_\attackerSubscript}{I}\varphi_{g_\attackerSubscript} \mid I = \{g_\attackerSubscript \mid g_\defenderSubscript \gameMoveX{*}_\triangle g_\attackerSubscript \} \land \forall g_\attackerSubscript \in I \ldotp \varphi_{g_\attackerSubscript} \in \mathit{strats}(g_\attackerSubscript)\}
  \end{align*}
\end{defi}

Intuitively, $\mathit{strats}$ corresponds to intermediate candidates for reaching the target region in a shortest-distance problem.
$\hmlStrategies'_{F,\mathit{strats}}$ corresponds to the addition of outgoing edge weights to the candidates when updating nodes.
For a shortest-distance algorithm, we need one more ingredient: the criterion to select best candidates.
In conventional shortest-distance problems, the minimum function ($\min$) plays this role.
$\hmlStrategies'$ already implicitly does something comparable to $\min$: It forms the union of all the formula sets that are implied by outgoing game moves.
This union on its own has the problem that formulas may be unfolded beyond all bounds,
accumulating more and more information in arbitrarily expensive distinguishing formulas.
For the fixed point algorithm to terminate, there must be a pointwise bound on the growth of $\mathit{strats}$ with only finitely many possible sets of formulas below.
For instance, one could restrict the formulas to have expressiveness prices below $(\relSize{S},\relSize{S},\relSize{S},\relSize{S},\relSize{S},\relSize{S})$.
This would guarantee a safe solution, but it can result in insane running times.
Instead, we use the following function:

\begin{defi}[Pruning of dominated formulas]\label{def:formula-pruning}
  $\hmlPrune$ is defined as
  \begin{align*}
    \hmlPrune(\obs{})  = \{ \varphi \in \obs{} \mid
       & \phantom{{}\land{}}(\nexists \varphi' \ldotp \varphi = \hmlNeg\hmlNeg \varphi') \\
       & \land (\nexists \psi \in \obs{} \ldotp \expr(\widehat{\psi}) \sqsubset \expr(\widehat{\vphantom{\psi}\varphi}) \\
       & \qquad \land ((\exists a, \varphi'\ldotp \varphi = \hmlObs{a}\varphi')
          \rightarrow (\exists a, \psi'\ldotp \psi = \hmlObs{a}\psi')) \\
      & \qquad \land ((\exists \varphi'\ldotp \varphi = \hmlNeg\varphi')
          \rightarrow (\exists \psi'\ldotp \psi = \hmlNeg\psi'))
      )\}.
  \end{align*}
\end{defi}

$\hmlPrune$ removes all formulas that are dominated with respect to the metrics by any other formula in this set,
where observations may only be dominated by observations, and negations only by negations
(but double negations are always dropped because we forbid them in distinguishing formulas).
We need to keep minimum-price negations because of inversion of dominance in cases like \refExample{exa:locally-more-expressive}.
We also need to keep minimum-price observation formulas,
at least in game positions $\attackerPos{p,\{q\}}$,
because only such formulas are allowed under a negation.
Other dominated formulas will not be important later on to find the cheapest options.

\begin{exa}
  If we look back at \refFig{fig:spectro-game-formulas} (used for \refExample{exa:game-distinguishing-formulas}),
  we can now see why it is important that dominated formulas are pruned away \emph{at the right steps.}
  In the figure, pruned formulas are $\cancel{\text{\normalsize\color{black}striked out}}$.

  If a game position $\attackerPos{p,\{q\}}$ allows a positive and a negative distinguishing formula,
  we need to maintain the cheapest ones of either kind.
  For example, we kept both $\hmlObs{b}\hmlNeg\hmlObs{b}$ and the (more expensive) $\hmlNeg\hmlObs{b}\hmlObs{b}$ in $\attackerPos{\ccsInm{b}, \{ \ccsInm{b}\ccsPrefix\ccsInm{b} \}}$,
  which allowed to construct two minimal-price distinguishing formulas in $\defenderPos{\ccsInm{b}, \{ \{ \ccsInm{a}\ccsChoice\ccsInm{b} \}, \{ \ccsInm{b}\ccsPrefix\ccsInm{b} \}, \{ \ccsStop \} \}}$:
  one formula with two positive branches and a small depth of negated observations, the other with one positive branch but deeper negated observations.
  However, one of them is pruned away in $\attackerPos{\ccsInm{b}, \{ \ccsInm{a}\ccsChoice\ccsInm{b}, \ccsInm{b}\ccsPrefix\ccsInm{b}, \ccsStop \}}$
  because the latter position has another successor that generates a strictly cheaper formula.

\end{exa}

All in all, the algorithm structure in Algorithm~\ref{alg:spectroscopy-algo} is mostly usual fixed point machinery. It terminates because, for each state in a finite transition system, there must be a bound on the distinguishing mechanisms necessary with respect to our metrics. $\hmlStrategies'$ will only generate finitely many formulas under this bound. $\hmlPrune$ ensures that not too many more formulas are generated.

\subsection{\label{subsec:algo-correctness}Correctness of the Algorithm}

We will now prove that Algorithm~\ref{alg:spectroscopy-algo}---in spite of the pruning---generates enough formulas to solve Problem~\ref{prob:cheapest-distinction}.

\begin{thm}\label{thm:correctness-general}
  Assume a finite formula $\varphi$
  that distinguishes $p$ from every $q \in Q$.
  Assume also that $\varphi$ does not contain double negations or negated conjunctions
  (i.e.\@ does not have subformulas of the form $\hmlNeg\hmlNeg\psi$ or $\hmlNeg\hmlAnd{i}{I}\psi_i$).
  Then we claim about the sets $\varname{strats}[\attackerPos{p,Q}]$ and $\varname{strats}[\attackerPos{p,Q}^\notand]$ of distinguishing formulas produced by Algorithm~\ref{alg:spectroscopy-algo}:
  \begin{enumerate}
  \item\label{thm:correctness-general-generalclaim}
    $\varname{strats}[\attackerPos{p,Q}]$ contains a formula
    that is cheaper than or as cheap as $\varphi$.
  \item\label{thm:correctness-general-postconjunctionclaim}
    If $\varphi$ is an observation formula,
    $\varname{strats}[\attackerPos{p,Q}^\notand]$ contains a formula
    that is cheaper than or as cheap as $\varphi$.
    (This necessarily is an observation formula.)
  \item\label{thm:correctness-general-singletonclaim}
    If $\varphi$ is an observation formula or a negation formula and $Q$ is a singleton,
    $\varname{strats}[\attackerPos{p,Q}]$ contains a formula of the same kind
    that is cheaper than or as cheap as $\varphi$.
  \end{enumerate}
\end{thm}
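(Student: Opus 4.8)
The plan is to prove all three claims simultaneously by structural induction on the distinguishing formula $\varphi$, using that the absence of double negations and negated conjunctions forces $\varphi$ (after flattening) to be an observation formula $\hmlObs{a}\psi$, an at-least-binary conjunction $\hmlAnd{i}{I}\varphi_i$ all of whose conjuncts are themselves observation or negation formulas, or a negation $\hmlNeg\psi$ with $\psi$ an observation formula. The induction hypothesis then supplies all three claims for every strict subformula, which is exactly what is needed: the three claims record precisely the information — cheapness, plus, in the post-conjunction and singleton cases, the \emph{kind} (observation vs.\ negation) of the cheapest formula — that is consumed when $\varphi$ is reassembled from its parts. Throughout I will use, at termination of Algorithm~\ref{alg:spectroscopy-algo}, the fixed-point equation $\varname{strats}[\varname{g}]=\hmlPrune(\hmlStrategies'_{F_\attackerSubscript,\varname{strats}}(\varname{g}))$ for every processed position, together with the fact that $\hmlPrune$ discards a formula only when the set already contains a strictly cheaper one (of the same kind, when the discarded formula is an observation or a negation), so that whatever formula the reconstruction produces, the pruned set still contains one at least as cheap — and of the required kind for claims~\ref{thm:correctness-general-postconjunctionclaim} and~\ref{thm:correctness-general-singletonclaim}. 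One also notes, case by case, that the positions we route through are in the attacker winning region (hence processed by the algorithm), since $\varphi$ and its subformulas witness distinguishability at each of them.

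For $\varphi=\hmlObs{a}\psi$: truth at $p$ and falsity on $Q$ give $p\step{a}p'$ with $\psi$ true at $p'$ and $\psi$ false on $Q'=\{q'\mid\exists q\in Q\ldotp q\step{a}q'\}$, so $\psi$ distinguishes $p'$ from every element of $Q'$. The observation move $\attackerPos{p,Q}\gameMoveX{\hmlObs{a}}\attackerPos{p',Q'}$ (and the analogous one out of $\attackerPos{p,Q}^\notand$) lets me prepend $\hmlObs{a}$ to the formula obtained from claim~\ref{thm:correctness-general-generalclaim} of the hypothesis at $\attackerPos{p',Q'}$; a short computation — using $\expr_1(\widehat{\chi})=\expr_1(\chi)$ and that $\expr(\widehat{\chi'})\sqsubseteq\expr(\widehat{\chi})$ implies $\expr(\hmlObs{a}\chi')\sqsubseteq\expr(\hmlObs{a}\chi)$ — shows the result is no more expensive than $\varphi$, and it is again an observation formula, so claims~\ref{thm:correctness-general-postconjunctionclaim} and~\ref{thm:correctness-general-singletonclaim} follow as well. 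For $\varphi=\hmlNeg\psi$ (with $\psi$ an observation formula), $\psi$ distinguishes each $q\in Q$ from $\{p\}$. If $Q=\{q\}$, the negation move $\attackerPos{p,\{q\}}\gameMoveX{\hmlNeg}\attackerPos{q,\{p\}}$ applies and claim~\ref{thm:correctness-general-singletonclaim} of the hypothesis (observation subcase) yields an \emph{observation} formula $\psi''\sqsubseteq\psi$ there, so $\hmlNeg\psi''$ is a genuine (non-double) negation of price $\sqsubseteq\expr(\widehat{\hmlNeg\psi})$, proving claims~\ref{thm:correctness-general-generalclaim} and~\ref{thm:correctness-general-singletonclaim} (negation subcase). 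If $Q$ is not a singleton, the attacker instead takes a conjunct challenge to $\defenderPos{p,\partition{Q}}$ with $\partition{Q}$ the (non-trivial) partition of $Q$ into singletons; the singleton argument just given (now invoking the hypothesis on $\psi$, a strict subformula) supplies a negation formula $\sqsubseteq\expr(\widehat{\hmlNeg\psi})$ at each $\attackerPos{p,\{q\}}$, and reassembling the conjunction is harmless because $\widehat{\hmlNeg\psi}=\hmlAndS\{\hmlNeg\psi\}$ already carries an implicit conjunction level and has no positive branches, so dimensions $3$ and $4$ of the reassembled conjunction are $0$, matching $\widehat{\varphi}$. This step relies essentially on the design choice recorded in \refRem{rem:new-prices} of always counting an implicit conjunction in front of negations.

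The conjunction case $\varphi=\hmlAnd{i}{I}\varphi_i$ is the main obstacle, because this is where dimensions $3$ (positive deep branches) and $4$ (positive branches) must be tracked carefully, precisely because of the non-monotonicity of \refExample{exa:locally-more-expressive}. Each $\varphi_i$ is true at $p$, and for every $q\in Q$ some $\varphi_i$ is false at $q$. I choose $\partition{Q}$ by grouping, under one block per \emph{observation} conjunct, the states $q$ at which that (first such) observation conjunct fails, and placing every remaining $q$ — for which only negation conjuncts fail — into its own singleton block; thus every block $B$ has a justifying conjunct $\varphi_{i(B)}$ false on all of $B$, and negation conjuncts only ever justify singleton blocks. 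After the conjunct challenge to $\defenderPos{p,\partition{Q}}$, claim~\ref{thm:correctness-general-postconjunctionclaim} of the hypothesis handles the non-singleton blocks at $\attackerPos{p,B}^\notand$ (returning an observation formula $\sqsubseteq\varphi_{i(B)}$) and claim~\ref{thm:correctness-general-singletonclaim} the singleton blocks at $\attackerPos{p,\{q\}}$ (returning a same-kind formula $\sqsubseteq\varphi_{i(B)}$). Reassembling the conjunction: dimension $2$ is bounded since $\sup_i\bigl(1+\expr_2(\widehat{\varphi_i})\bigr)$ dominates the corresponding supremum over blocks; dimensions $1,5,6$ because $\bigsqcup_B\expr(\chi_B)\sqsubseteq\bigsqcup_i\expr(\widehat{\varphi_i})\sqsubseteq\expr(\widehat{\varphi})$; and dimensions $3,4$ because the positive branches of the reassembled conjunction are in bijection with the observation conjuncts actually used (so at most $\lvert\mathit{pb}\rvert$ of them) while its positive \emph{deep} branches inject into the deep observation conjuncts of $\varphi$ (using that an observation formula $\sqsubseteq\hmlObs{a}$ is itself flat, so a cheaper substitute for a flat conjunct stays flat). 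The degenerate cases $\partition{Q}=\{Q\}$ (one observation conjunct false on all of $Q$) and $Q=\varnothing$ are handled directly — by applying the hypothesis to that conjunct, respectively by returning $\hmlTrue$ via $\attackerPos{p,\varnothing}\gameMoveX{\land}\defenderPos{p,\varnothing}$ — and in every case the closing appeal to pruning is as in the first paragraph.
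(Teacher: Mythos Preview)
Your proof is correct and follows essentially the same approach as the paper's: simultaneous structural induction on $\varphi$, the same partition scheme in the conjunction case (observation conjuncts collect their witnesses into one block, negation conjuncts only justify singletons), the same routing through the finest partition in the non-singleton negation case, and the same appeal to \refDef{def:formula-pruning} to recover the required kind (observation or negation) after pruning. You are somewhat more explicit than the paper about the fixed-point equation for $\varname{strats}$ at termination and about the price monotonicity under prepending $\hmlObs{a}$ or $\hmlNeg$, which is helpful; one small inaccuracy is your parenthetical ``(one observation conjunct false on all of $Q$)'' for the degenerate case $\partition{Q}=\{Q\}$ --- it also arises when $Q$ is a singleton justified only by a negation conjunct --- but your remedy ``apply the hypothesis to that conjunct'' covers both situations, matching the paper's treatment.
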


\begin{proof}
First, $\varname{strats}[\attackerPos{p,\varnothing}] = \{\hmlTrue\}$,
and $\hmlTrue$ is the cheapest of all formulas, so the claims hold if $Q = \varnothing$.
In the rest of the proof we assume that $Q \not= \varnothing$
(and therefore $\varphi \not= \hmlTrue$).

We prove the three claims of the theorem simultaneously
by induction over the structure of $\varphi$ with arbitrary $p$ and $Q$.
\begin{description}
\item[Case $\varphi = \hmlObs{a} \psi$]
  This means that there exists $p'$ such that $p \step{\ccsInm{a}} p'$ and
  $\psi$ distinguishes $p'$ from $Q' = \{ q' \mid \exists q \in Q \ldotp q \step{\ccsInm{a}} q' \}$.
  The game graph contains an observation move $\attackerPos{p,Q} \gameMoveX{\hmlObs{a}\vphantom{a}} \attackerPos{p',Q'}$
  or $\attackerPos{p,Q}^\notand \gameMoveX{\hmlObs{a}} \attackerPos{p',Q'}$.

  Then we apply the induction hypothesis to $\attackerPos{p',Q'}$ and $\psi$,
  and we conclude that a formula $\psi' \in \varname{strats}[\attackerPos{p',Q'}]$ is found
  such that $\expr(\widehat{\psi'}) \sqsubseteq \expr(\widehat{\psi\vphantom{\psi}})$.
  Therefore, either $\hmlObs{a}\psi' \in \varname{strats}[\attackerPos{p,Q}]$
  (respectively $\hmlObs{a}\psi' \in \varname{strats}[\attackerPos{p,Q}^\notand]$);
  or, if $\hmlObs{a}\psi'$ is pruned away,
  this must have been justified by a formula strictly cheaper than $\hmlObs{a}\psi'$ in $\varname{strats}[\attackerPos{p,Q}]$
  (or $\hmlObs{a}\psi' \in \varname{strats}[\attackerPos{p,Q}^\notand]$),
  and this cheaper formula must have been an observation formula
  according to \refDef{def:formula-pruning}
  (this is needed to prove \refClaim{thm:correctness-general-singletonclaim} if $\relSize{Q} = 1$).
  This proves all three claims.

\item[Case $\varphi = \hmlAnd{i}{I}\psi_i$]
  (Only \refClaim{thm:correctness-general-generalclaim} needs to be proven.)
  Note that $I \not= \varnothing$.
  Assign to every $q \in Q$ a conjunct $\psi_i$ that distinguishes $p$ from $q$;
  we denote this conjunct with $\psi^q$.
  If every $q \in Q$ is assigned the same formula $\psi_{i_0}$
  (e.g.\@ if $Q$ is a singleton),
  apply the induction hypothesis to $\attackerPos{p,Q}$ and $\psi_{i_0}$.

  Otherwise, define a partition of $Q$ as follows:
  \begin{align*}
    \hspace*{\leftmargin}\partition{Q}^- & = \{ \{ q \} \mid \psi^q \text{ is a negation formula} \} \\
    \partition{Q}^+ & = \{ \{ q \mid \psi^q = \psi_i \} \mid \psi_i \text{ is an observation formula, for some } i \in I \} \setminus \{ \varnothing \} \\
    \partition{Q}^{\phantom{+}} & = \partition{Q}^- \cup \partition{Q}^+.
  \end{align*}
  The set $\partition{Q}^-$ contains singleton blocks for every process in $Q$
  that is associated with a negation formula.
  Every block in $\partition{Q}^+$ is associated with an observation formula.
  As we excluded nested conjunctions,
  each $\psi_i$ can only be a negation or an observation formula.
  The game graph contains a conjunct challenge
  $\attackerPos{p, Q} \gameMoveX{\land} \defenderPos{p, \partition{Q}}$.
  We apply the induction hypothesis to the targets of the corresponding conjunct answers:
  \smallskip
  \begin{itemize}
  \item	If some $\{q'\} \in \partition{Q}$ is a singleton,
    we apply the induction hypothesis \refClaim{thm:correctness-general-singletonclaim}
    to $\attackerPos{p,\{q'\}}$ and formula $\psi^{q'}$.
    Then there will be a formula of the same kind as $\psi^{q'}$,
    denoted $\psi'_{\{q'\}} \in \varname{strats}[\attackerPos{p,\{q'\}}]$,
    with $\expr(\widehat{\psi'_{\{q'\}}}) \sqsubseteq \expr(\widehat{\psi^{q'}})$.
  \item	If some $Q' \in \partition{Q}$ is not a singleton,
    then $Q' \in \partition{Q}^+$ and
    $\psi^{q'}$ (for $q' \in Q'$) is an observation formula.
    We apply the induction hypothesis \refClaim{thm:correctness-general-postconjunctionclaim}
    to $\attackerPos{p,Q'}^\notand$ and $\psi^{q'}$.
    Then there will be an observation formula,
    denoted $\psi'_{Q'} \in \varname{strats}[\attackerPos{p,Q'}^\notand]$,
    with $\expr(\widehat{\psi'_{Q'}}) \sqsubseteq \expr(\widehat{\psi^{q'}})$.
  \end{itemize}
  Further note that if $\psi^{q'} = \hmlObs{a}$ is a positive flat branch,
  then the formula found $\psi'_{Q'}$ is also a positive flat branch
  (otherwise it would be more expensive than $\psi^{q'}$).
  Then, the conjunction $\hmlAnd{Q'}{\partition{Q}}\psi'_{Q'}$ satisfies:
  \begin{itemize}
  \item	It contains no more positive branches than $\hmlAnd{i}{I}\psi_i$;%
    \footnote{But the number of positive \emph{flat} branches (with observation height $1$) could have been increased. 
    This is why the first change described in \refRem{rem:new-prices} is necessary.}
  \item	It contains no more positive deep branches than $\hmlAnd{i}{I}\psi_i$;
  \item	For every $\psi'_{Q'}$, there is a $\psi_i$ such that $\smash{\expr(\widehat{\psi'_{Q'}}) \sqsubseteq \expr(\widehat{\psi_i})}$.
  \end{itemize}
  From this, we can conclude that $\expr(\stretchedhat{2.3}{\hmlAnd{Q'}{\partition{Q}}\psi'_{Q'}}) \sqsubseteq \expr(\stretchedhat{1.5}{\hmlAnd{i}{I}\psi_i})$.

  Now either $\hmlAnd{Q'}{\partition{Q}}\psi'_{Q'} \in \varname{strats}[\attackerPos{p,Q}]$,
  or a formula that is strictly cheaper is in $\varname{strats}[\attackerPos{p,Q}]$.
  \refClaim{thm:correctness-general-generalclaim} holds in both cases.

\item[Case $\varphi = \hmlNeg \psi$]
  Note that $\psi$ is an observation formula.
  Assume for now that $Q$ is a singleton $\{ q \}$.
  The game graph contains a negation move $\attackerPos{p,\{q\}} \gameMoveX{\hmlNeg} \attackerPos{q,\{p\}}$.

  We apply the induction hypothesis \refClaim{thm:correctness-general-singletonclaim} to $\attackerPos{q,\{p\}}$ and $\psi$
  and conclude that there is some observation formula $\psi' \in \varname{strats}[\attackerPos{q,\{p\}}]$
  with $\expr(\widehat{\psi'}) \sqsubseteq \expr(\widehat{\psi\vphantom{\psi}})$.
  But then, either $\hmlNeg\psi' \in \varname{strats}[\attackerPos{p,\{q\}}]$,
  or a negation formula strictly cheaper than $\hmlNeg\psi'$ is in $\varname{strats}[\attackerPos{p,\{q\}}]$.
  This proves Claims~\eqref{thm:correctness-general-generalclaim} and~\eqref{thm:correctness-general-singletonclaim} for singleton $Q$.

  Now if $Q$ contains more than one element,
  only \refClaim{thm:correctness-general-generalclaim} needs to be proven.
  The game graph contains a conjunct challenge $\attackerPos{p,Q} \gameMoveX{\land} \defenderPos{p,\{\{q\}\mid q\in Q\}}$,
  and we can apply the above argumentation to the targets $\attackerPos{p,\{q\}}$ of the conjunct answers.
  So, for every $q \in Q$,
  there is some negation formula $\hmlNeg\psi^q \in \varname{strats}[\attackerPos{p,\{q\}}]$
  with $\expr(\widehat{\hmlNeg\psi^q}) \sqsubseteq \expr(\widehat{\hmlNeg\psi})$.
  Consequently, either the conjunction $\hmlAnd{q}{Q}\hmlNeg\psi^q \in \varname{strats}[\attackerPos{p,Q}]$,
  or a formula that is strictly cheaper is in $\varname{strats}[\attackerPos{p,Q}]$.
  As formula prices do not count negative branches in a conjunction,
  $\expr(\stretchedhat{2.1}{\hmlAnd{q}{Q}\hmlNeg\psi^q}) = \expr(\hmlAnd{q}{Q}\hmlNeg\psi^q) \sqsubseteq \expr(\hmlAndS\{\hmlNeg\psi\}) = \expr(\widehat{\hmlNeg\psi})$.
  So the claim on formula prices holds.\footnote{If
    $\varphi = \hmlNeg\hmlObs{a}\hmlNeg\hmlObs{b}$,
    the algorithm might find a formula of the form $\hmlNeg\hmlObs{a}\hmlAndS\{\hmlNeg\hmlObs{b},\hmlNeg\hmlObs{a}\}$.
    The latter formula should not be more expensive than the former.
    This is why the second change in \refRem{rem:new-prices} is necessary.}
  \qedhere
\end{description}
\end{proof}

\noindent
In \refSubsec{subsec:priced-formulas}, we have chosen to describe limits on upper bounds by the depth of negated observations (dimension 6).
An alternative choice could have been to restrict the number of negative deep branches,
but that would make the correctness proof of \refThm{thm:correctness-general} more difficult.
The proof uses the technical property that a formula with more negative branches
is not more expensive than a similar formula with fewer (but more than zero) negative branches.
A formula with one negative branch gives another, technical, reason
why negations under observations should count as implicit conjunctions.

Using the theorem, we can state the correctness of the algorithm easily.

\begin{cor}\label{cor:correctness}
  Assume a spectroscopy game position $\attackerPos{p,Q}$.
  For every language in \refDef{def:ltbts} (and, in relevant cases, $\obs{E}$)
  that distinguishes $p$ from every $q \in Q$,
  Algorithm~\ref{alg:spectroscopy-algo} will find a minimal-price formula in the distinguishing formulas (without double negations or negated conjunctions) in this language.
\end{cor}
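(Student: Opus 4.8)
The plan is to obtain this corollary as a bookkeeping consequence of \refThm{thm:correctness-general}, the soundness theorem \refThm{thm:spectroscopy-hml}, and the price characterization \refLem{lem:obslang-price-characterization}. Fix a language $\obs{X}$ among those of \refDef{def:ltbts} and $\obs{E}$, and read ``$\obs{X}$ distinguishes $p$ from every $q\in Q$'' as: some $\varphi_0 \in \obs{X}$ is true at $p$ and false at every $q\in Q$. Let $D$ be the set of \emph{all} formulas in $\obs{X}$ that distinguish $p$ from every $q\in Q$ and additionally contain neither double negations nor negated conjunctions; the claim ``a minimal-price formula in the distinguishing formulas (without double negations or negated conjunctions) in this language'' then asks that $\varname{strats}[\attackerPos{p,Q}]$ contain a $\sqsubseteq$-minimal element of $\{\expr(\widehat{\psi}) \mid \psi \in D\}$. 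The three steps would be: (i) show $D \neq \varnothing$; (ii) pick a $\sqsubseteq$-minimal price $e$ with a witness $\varphi \in D$; (iii) feed $\varphi$ into \refThm{thm:correctness-general}.

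For (i), I would start from $\varphi_0$, bring it to disjunctive normal form $\bigvee_{i} \psi_i$ as in \refSubsec{subsec:negated-conjunctions}, and note that truth of $\varphi_0$ at $p$ forces some disjunct $\psi_{i_0}$ to be true at $p$ while falsity of $\varphi_0$ at every $q\in Q$ forces every $\psi_i$, hence $\psi_{i_0}$, to be false there; such a $\psi_{i_0}$ contains no negated conjunctions. Eliminating all double negations inside $\psi_{i_0}$ (a semantics-preserving rewrite that neither creates negated conjunctions nor disturbs distinguishing power) yields $\varphi_1$ with no double negations and no negated conjunctions. By the analysis of \refSubsec{subsec:negated-conjunctions}, $\expr(\widehat{\varphi_1})$ is below $\expr(\widehat{\varphi_0})$ in every dimension except possibly $\expr_3$ and $\expr_4$, and an increase there can occur only when the formula already lies in $\obs{3S}$ or a larger language --- exactly the languages whose price bound is $\infty$ in those two dimensions. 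Hence $\expr(\widehat{\varphi_1}) \sqsubseteq e_X$ in all cases, so $\varphi_1 \in \obs{X}$ by \refLem{lem:obslang-price-characterization}, i.e.\@ $\varphi_1 \in D$.

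For (ii)--(iii): since $\sqsubseteq$ is well-founded on $\prices$, a $\sqsubseteq$-minimal price $e$ and a witness $\varphi \in D$ with $\expr(\widehat{\varphi}) = e$ exist. Applying \refClaim{thm:correctness-general-generalclaim} of \refThm{thm:correctness-general} to $\varphi$, $p$, $Q$ yields $\psi \in \varname{strats}[\attackerPos{p,Q}]$ with $\expr(\widehat{\psi}) \sqsubseteq \expr(\widehat{\varphi}) = e$. By \refDef{def:strategy-formulas} and the double-negation clause of $\hmlPrune$ in \refDef{def:formula-pruning}, $\psi$ carries no negated conjunction and no double negation; by \refThm{thm:spectroscopy-hml} it distinguishes $p$ from every $q\in Q$; and $\expr(\widehat{\psi}) \sqsubseteq e \sqsubseteq e_X$ gives $\psi \in \obs{X}$, so $\psi \in D$. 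Minimality of $e$ forces $\expr(\widehat{\psi}) = e$, so $\psi$ is a minimal-price member of $D$; running the argument once per $\sqsubseteq$-minimal price yields the full statement. The case $\obs{X} = \obs{E}$ degenerates: a distinguishing $\hmlObs{a}$ survives pruning (an observation formula is dropped only in favour of a cheaper observation formula, and none below it distinguishes) and is reproduced verbatim.

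I expect the only genuinely delicate point to be step (i): making the informal normal-form discussion of \refSubsec{subsec:negated-conjunctions} precise, in particular verifying that every price increase incurred by the disjunctive-normal-form transform and by double-negation elimination is confined to dimensions in which $\obs{X}$ imposes no bound, so that membership in $\obs{X}$ is preserved. Everything downstream is a direct invocation of \refThm{thm:correctness-general} together with \refThm{thm:spectroscopy-hml} and \refLem{lem:obslang-price-characterization}.
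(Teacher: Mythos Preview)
Your proposal is correct and follows essentially the same route as the paper: reduce to a distinguishing formula in $\obs{X}$ without double negations or negated conjunctions (the paper does this by a one-line appeal to \refSubsec{subsec:negated-conjunctions}, you spell it out), pick a price-minimal such formula, feed it into \refThm{thm:correctness-general}, and conclude that the algorithm's output has the same price and hence lies in $\obs{X}$. Your version is more explicit than the paper's in making the soundness (\refThm{thm:spectroscopy-hml}) and price-characterization (\refLem{lem:obslang-price-characterization}) invocations visible, and you rightly flag step~(i) as the place where the informal discussion of \refSubsec{subsec:negated-conjunctions} would need to be made precise; the paper's own proof simply takes that subsection at face value.
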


\begin{proof}
  Assume that some observation language $\obs{\mathit X}$ distinguishes $p$ from every $q \in Q$.
  In line with \refSubsec{subsec:negated-conjunctions}, we can limit our attention to formulas without double negations or negated conjunctions.
  Then there exists some formula $\varphi \in \obs{\mathit X}$ with minimal formula price
  that distinguishes $p$ from every $q \in Q$.

  Now apply \refThm{thm:correctness-general} to $\varphi$.
  Algorithm~\ref{alg:spectroscopy-algo} will therefore find a formula with the same formula price
  (price-minimality implies that there is no cheaper distinguishing formula than $\varphi$),
  which consequently is in the same distinguishing language $\obs{\mathit X}$.
\end{proof}

\begin{exa}
  Now that we have proven the main correctness theorem,
  we can conclude even more about the minimal-price formulas found in \refExample{exa:game-distinguishing-formulas}.
  As stated there, the processes are not preordered by failure traces nor by impossible futures
  ($\ccsInm{a}.\ccsInm{b} \not\bPreord{FT} \ccsInm{a}\ccsPrefix(\ccsInm{a}\ccsChoice \ccsInm{b}) \ccsChoice \ccsInm{a}\ccsPrefix\ccsInm{b}\ccsPrefix\ccsInm{b} \ccsChoice \ccsInm{a}$ and $\ccsInm{a}.\ccsInm{b} \not\bPreord{IF} \ccsInm{a}\ccsPrefix(\ccsInm{a}\ccsChoice \ccsInm{b}) \ccsChoice \ccsInm{a}\ccsPrefix\ccsInm{b}\ccsPrefix\ccsInm{b} \ccsChoice \ccsInm{a}$),
  but \refCor{cor:correctness} also implies that these are the coarsest ways of telling the processes apart.
  The processes are in every preorder in \refFig{fig:ltbt-spectrum} that is not above one of these two.
  So, we can conclude that they are simulation-preordered and readiness-preordered.
\end{exa}

\subsection{Differences to the Conference Version\label{subsec:difference-tacas}}

The game and algorithm as presented here have an important change when compared to the conference version from TACAS'21~\cite{bisping2021ltbtsTacas}.

In the conference version, each attacker position had only one outgoing conjunction move. \refDef{def:spectroscopy-game} now has one conjunction move from $\attackerPos{p,Q}$ \emph{per partition} of $Q$. This adds a lot of possible game positions and moves. However, this has been necessary to fix an error in the algorithm.

In order to understand the problem, let us reexamine the graph in \refFig{fig:spectro-game-formulas}. The TACAS'21 version would only contain the move from $\attackerPos{b, \{ a \ccsChoice b, b \ccsPrefix b, \ccsStop \}}$ to the finest partition $\defenderPos{b, \{ \{ a \ccsChoice b \}, \{ b \ccsPrefix b \}, \{\ccsStop \} \}}$ but not the one to $\defenderPos{b, \{ \{ a \ccsChoice b \}, \{ b \ccsPrefix b, \ccsStop \} \}}$.
This move, however, was instrumental in finding the failure-trace formula $\hmlObs{a}\hmlAndS\{\hmlObs{b}\hmlNeg\hmlObs{b}, \hmlNeg\hmlObs{a}\}$ in \refExample{exa:game-distinguishing-formulas}.

So, without the correction, the algorithm published originally picks the ready-trace formula $\hmlObs{a}\hmlAndS\{\hmlObs{b}, \hmlObs{b}\hmlNeg\hmlObs{b}, \hmlNeg\hmlObs{a}\}$,
showing that the processes can be distinguished by a ready trace,
namely $\textcolor{darkgreen}{\{ \ccsInm{a} \}} \ccsInm{a} \textcolor{darkgreen}{\{ \ccsInm{b} \}} \ccsInm{b} \textcolor{darkgreen}{\varnothing}$.
While this formula does distinguish $\ccsInm{a}\ccsPrefix\ccsInm{b}$ from $\ccsInm{a}\ccsPrefix(\ccsInm{a}\ccsChoice \ccsInm{b}) \ccsChoice \ccsInm{a}\ccsPrefix\ccsInm{b}\ccsPrefix\ccsInm{b} \ccsChoice \ccsInm{a}$,
and even is semantically equivalent to the failure-trace formula above,
it is not in the cheapest possible language.
Erroneously assuming ready traces to be a minimal language to tell the processes apart leads to the wrong conclusion that the processes would be failure-trace-preordered, which they are not.

The TACAS'21 version lacked a theorem like \refThm{thm:correctness-general}.
The paper, however, was confident that a similar result could be established.
In light of the problem presented here, it is clear that the original algorithm was not complete in the sense of the theorem.

There are several ways of correcting the problem.
We have chosen the least invasive fix by considering all partitions and not just the finest ones.
Together with some slight simplifications of metric and pruning, this allowed us to  provide a completeness proof while staying close to the conference version.

\begin{figure}[p]
\vspace*{6mm} 
  \vspace*{-2mm} 
  \includegraphics[width=\textwidth]{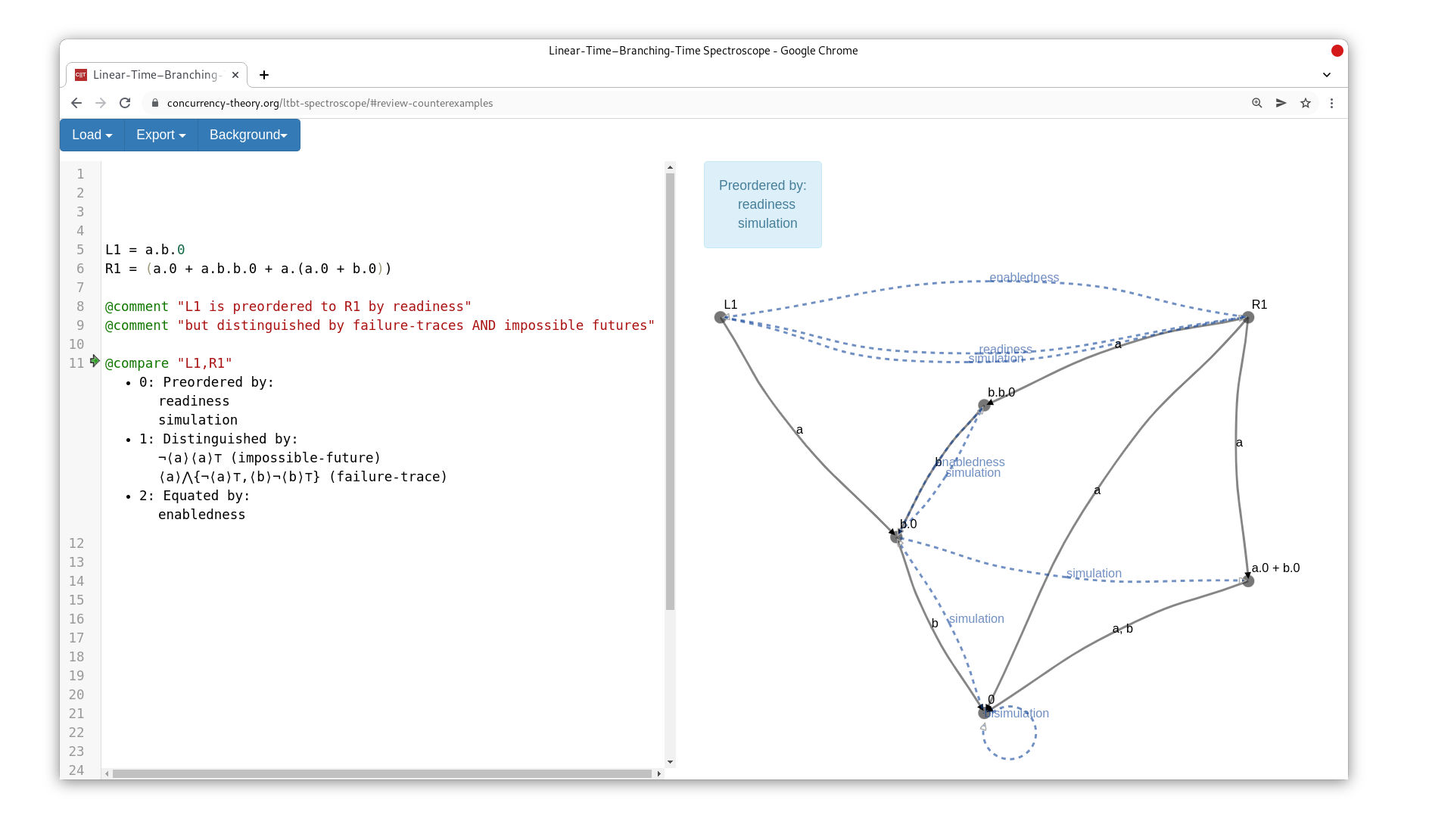}
  \vspace*{-9mm} 
  \caption{Screenshot of a linear-time--branching-time spectroscopy of the processes from \refExample{exa:spec-game-1}.}%
  \label{fig:screenshot}
\vspace*{3mm} 
\end{figure}

\section{Linear-Time--Branching-Time Spectroscopy in the Browser}

There are two implementations of the algorithm described in this article. In \refSubsec{subsec:webtool}, we describe the analysis of transition systems in a web tool built around the algorithm. In \refSubsec{subsec:spectro-browser-game}, we give a short account of a computer game implementation.

\subsection{A Webtool for Equivalence Spectroscopy\label{subsec:webtool}}

We have implemented the game and the generation of minimal distinguishing formulas in the ``Linear-Time--Branching-Time Spectroscope,''
a Scala.js program that can be run in the browser on \url{https://concurrency-theory.org/ltbt-spectroscope/}.

The tool (screenshot in Figure~\ref{fig:screenshot}) consists of a text editor to input finite-state $\ccs$-style processes
(including recursively defined processes)
and a view of the transition system graph.
When queried to compare two processes, the tool yields the cheapest distinguishing $\hml$-formulas it can find for both directions.
From the found formulas, the tool infers the finest fitting preorders for the relevant pairs of processes.
In the process, distinguishing formulas that do not contribute to this question are discarded in order to reduce the noise of the output.

\renewcommand*\arraystretch{1.5}
\begin{table}[p]
  \centering
  \vspace*{3mm} 
  \caption{Formulas found by our implementation for some interesting processes from~\cite{glabbeek2001ltbtsiReport}.}%
  \label{tab:ltbts-formulas}
  \begin{tabular}%
    {>{\raggedright\arraybackslash}p{2.9cm}|%
     >{\raggedright\arraybackslash}p{2.9cm}|%
     >{\raggedright\arraybackslash}p{6.8cm}|%
     >{\raggedright\arraybackslash}l%
  }
  $p$                           & $q$                           & Cheapest distinguishing formulas found & From\\ \midrule
  %
  %
  $\ccsInm{a}\ccsPrefix\ccsInm{b} \ccsChoice \ccsInm{a}$         & $\ccsInm{a}\ccsPrefix\ccsInm{b}$ & $\hmlObsI{a}\hmlNeg\hmlObsI{b} \in \obs{F}$ & p.~13\\
  $\ccsInm{a}\ccsPrefix\ccsInm{b} \ccsChoice \ccsInm{a}\ccsPrefix(\ccsInm{b} \ccsChoice \ccsInm{c})$  &
  $\ccsInm{a}\ccsPrefix(\ccsInm{b}\ccsChoice\ccsInm{c})$                 &
  $\hmlObsI{a}\hmlNeg\hmlObsI{c} \in \obs{F}$ & p.~16                      \\
  $\ccsIdentifier{P_3}$ &
  $\ccsIdentifier{P_4}$ &
  The formulas explained in \refExample{exa:distinguishing-formulas} &
  p.~21
  \\
  $\ccsInm{a}\ccsPrefix\ccsInm{b}\ccsChoice\ccsInm{a}\ccsPrefix(\ccsInm{b}\ccsChoice\ccsInm{c})\ccsChoice\ccsInm{a}\ccsPrefix\ccsInm{c}$ &
  $\ccsInm{a}\ccsPrefix\ccsInm{b}\ccsChoice\ccsInm{a}\ccsPrefix\ccsInm{c}$ &
  $\hmlObsI{a}\hmlAndS \{ \hmlObsI{c},\hmlObsI{b}\} \in \obs{R}\cap\obs{S}$ & p.~24
  \\
  $\ccsInm{a}\ccsPrefix(\ccsInm{b}\ccsChoice\ccsInm{a}\ccsPrefix(\ccsInm{b}\ccsChoice\ccsInm{c}\ccsPrefix\ccsInm{d})\ccsChoice \ccsInm{a}\ccsPrefix\ccsInm{c}\ccsPrefix\ccsInm{e}) \ccsChoice \ccsInm{a}\ccsPrefix(\ccsInm{a}\ccsPrefix\ccsInm{c}\ccsPrefix\ccsInm{d}\ccsChoice\ccsInm{a}\ccsPrefix(\ccsInm{c}\ccsPrefix\ccsInm{e}\ccsChoice\ccsInm{b}))$ &
  $\ccsInm{a}\ccsPrefix(\ccsInm{a}\ccsPrefix(\ccsInm{b}\ccsChoice\ccsInm{c}\ccsPrefix\ccsInm{d})\ccsChoice \ccsInm{a}\ccsPrefix\ccsInm{c}\ccsPrefix\ccsInm{e}) \ccsChoice \ccsInm{a}\ccsPrefix(\ccsInm{a}\ccsPrefix\ccsInm{c}\ccsPrefix\ccsInm{d}\ccsChoice\ccsInm{a}\ccsPrefix(\ccsInm{c}\ccsPrefix\ccsInm{e}\ccsChoice\ccsInm{b}) \ccsChoice \ccsInm{b})$ &
  $\hmlObsI{a}\hmlAndS \{ \hmlObsI{b},\hmlObsI{a} \hmlAndS \{ \hmlObsI{c} \hmlObsI{d},\hmlObsI{b} \} \} \in \obs{RT}\cap\obs{S}$,\newline
  $\hmlObsI{a}\hmlAndS \{ \hmlNeg\hmlObsI{b}, \hmlObsI{a}\hmlAndS \{ \hmlObsI{c} \hmlObsI{d},\hmlNeg\hmlObsI{b} \} \} \in\obs{FT}$&
  p.~27
  \\
  $\ccsInm{a}\ccsPrefix(\ccsInm{b}\ccsPrefix\ccsInm{c}\ccsChoice\ccsInm{b}\ccsPrefix\ccsInm{d})$ &
  $\ccsInm{a}\ccsPrefix\ccsInm{b}\ccsPrefix\ccsInm{c}\ccsChoice\ccsInm{a}\ccsPrefix\ccsInm{b}\ccsPrefix\ccsInm{d}$ &
  $\hmlObsI{a}\hmlAndS \{ \hmlObsI{b}\hmlObsI{c}, \hmlObsI{b}\hmlObsI{d} \} \in\obs{PF}\cap\obs{S}$&
  p.~31
  \\
  $\ccsInm{a}\ccsPrefix\ccsInm{b}\ccsPrefix\ccsInm{c}\ccsChoice\ccsInm{a}\ccsPrefix(\ccsInm{b}\ccsPrefix\ccsInm{c}\ccsChoice\ccsInm{b}\ccsPrefix\ccsInm{d})$ &
  $\ccsInm{a}\ccsPrefix(\ccsInm{b}\ccsPrefix\ccsInm{c}\ccsChoice\ccsInm{b}\ccsPrefix\ccsInm{d})$ &
  $\hmlObsI{a}\hmlNeg\hmlObsI{b}\hmlObsI{d} \in \obs{IF}$ &
  p.~34
  \\
  $\ccsInm{a}\ccsPrefix\ccsInm{b} \ccsChoice \ccsInm{a} \ccsChoice \ccsInm{a}\ccsPrefix\ccsInm{c}$ &
  $\ccsInm{a}\ccsPrefix\ccsInm{b} \ccsChoice \ccsInm{a}\ccsPrefix(\ccsInm{b}\ccsChoice\ccsInm{c}) \ccsChoice \ccsInm{a}\ccsPrefix\ccsInm{c}$ &
  $\hmlObsI{a}\hmlAndS\{\hmlNeg\hmlObsI{b}, \hmlNeg\hmlObsI{c}\} \in \obs{F}$ &
  p.~38
  \\
  $\ccsInm{a}\ccsPrefix\ccsInm{b}\ccsPrefix\ccsInm{c}\ccsChoice\ccsInm{a}\ccsPrefix(\ccsInm{b}\ccsPrefix\ccsInm{c}\ccsChoice\ccsInm{b})$ &
  $\ccsInm{a}\ccsPrefix(\ccsInm{b}\ccsPrefix\ccsInm{c}\ccsChoice\ccsInm{b})$ &
  $\hmlObsI{a}\hmlNeg\hmlObsI{b}\hmlNeg\hmlObsI{c} \in \obs{B}$ &
  p.~42
  \end{tabular}
\end{table}

To ``benchmark'' the quality of the distinguishing formulas,
we have run the algorithm on all the finitary counterexample processes from the report version of ``The Linear-Time--Branching-Time Spectrum''~\cite{glabbeek2001ltbtsiReport}.
Table~\ref{tab:ltbts-formulas} reports the output of our tool, on how to distinguish certain processes.
The results match the (in)equivalences given in~\cite{glabbeek2001ltbtsiReport}.
In some cases, the tool finds slightly better ways of distinction using impossible futures equivalence,
which was not known at the time of the original paper.
All the computed formulas are reasonably small.

For each of the examples (from papers) we have considered,
the browser's capacities sufficed to run the algorithm with pruning in 30 to 250~milliseconds.
This does not mean that one should expect the algorithm to work for systems with thousands of states.
There, the exponentialities of game and formula construction would hit.
However, such big instances would usually stem from preexisting models
where one would very much hope for the designers to already know under which semantics to interpret their model.
The practical applications of our browser tool are more on the research side:
When devising compiler optimizations, encodings, or distributed algorithms,
it can be very handy to fully grasp the equivalence structure of isolated instances.
The Linear-Time--Branching-Time Spectroscope supports this process.

\subsection{A Spectroscopy Browser Game}\label{subsec:spectro-browser-game}

\begin{figure}[t]
  \includegraphics[width=.8\textwidth]{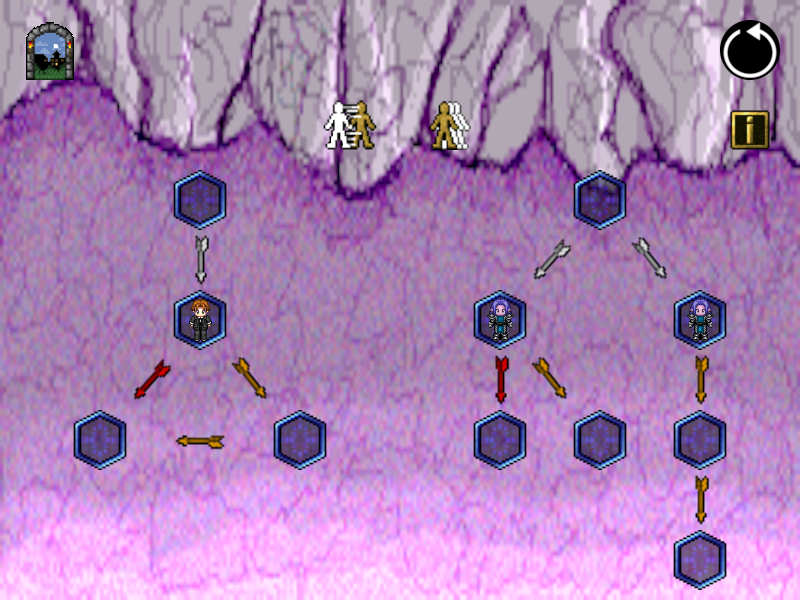}

  \caption{Screenshot of the browser game ``The Spectroscopy Invaders.''}%
  \label{fig:screenshot-game}
\end{figure}

In order to make the spectroscopy game more explainable, Trzeciakiewicz~\cite{trzeciakiewicz2021ltbtBrowserGame} implemented the computer game ``The Spectroscopy Invaders'' where one plays the attacker. A play of the game corresponds to constructing a distinguishing formula. To reach higher scores, one has to construct minimal formulas in the sense of this paper. Under the hood, a TypeScript implementation of Algorithm~\ref{alg:spectroscopy-algo} is used. The game can be played in the browser at \url{https://concurrency-theory.org/ltbt-game/}.

A screenshot of the game is given in \refFig{fig:screenshot-game}.
The player's task is to distinguish $\ccsInm{w}\ccsPrefix(\ccsInm{r} \ccsChoice \ccsInm{y}\ccsPrefix\ccsInm{y})$ (left) from $\ccsInm{w}\ccsPrefix(\ccsInm{r}\ccsChoice\ccsInm{y}) \ccsChoice \ccsInm{w}\ccsPrefix\ccsInm{y}\ccsPrefix\ccsInm{y}$ (right).
The current game position is $\attackerPos{\ccsInm{r} \ccsChoice \ccsInm{y}\ccsPrefix\ccsInm{y}, \{\ccsInm{r}\ccsChoice\ccsInm{y}, \ccsInm{y}\ccsPrefix\ccsInm{y}\}}$,
as indicated by the human figure in the left LTS and the two elves in the right LTS\@.
The player can click the ``conjunction'' button
  \raisebox{-0.3ex}{\begin{tikzpicture}[scale=0.45]
  \draw[very thin,black!70,fill=white] (0.16,0.01) -- ++(-0.2,-0.25) -- ++(0.05,-0.05) --
  ++(0.16,0.2) -- ++(-0.1125,-0.45) -- ++(-0.05,0) -- ++(0,-0.05) --
  ++(0.15,0) -- ++(0.07,0.28) -- ++(0.07,-0.28) -- ++(0.15,0) --
  ++(0,0.05) -- ++(-0.05,0) -- ++(-0.1125,0.45) -- ++(0.16,-0.2) --
  ++(0.05,0.05) -- ++(-0.2,0.25) -- ++(-0.04,0.02) arc[start angle=-74.0380,end angle=200,radius=1mm] -- cycle;
  \draw[very thin,black!85,fill=white] (0.08,0.005) -- ++(-0.2,-0.25) -- ++(0.05,-0.05) --
  ++(0.16,0.2) -- ++(-0.1125,-0.45) -- ++(-0.05,0) -- ++(0,-0.05) --
  ++(0.15,0) -- ++(0.07,0.28) -- ++(0.07,-0.28) -- ++(0.15,0) --
  ++(0,0.05) -- ++(-0.05,0) -- ++(-0.1125,0.45) -- ++(0.16,-0.2) --
  ++(0.05,0.05) -- ++(-0.2,0.25) -- ++(-0.04,0.02) arc[start angle=-74.0380,end angle=200,radius=1mm] -- cycle;
  \draw[very thin,black,fill=yellow!60!black] (0,0) -- ++(-0.2,-0.25) -- ++(0.05,-0.05) --
  ++(0.16,0.2) -- ++(-0.1125,-0.45) -- ++(-0.05,0) -- ++(0,-0.05) --
  ++(0.15,0) -- ++(0.07,0.28) -- ++(0.07,-0.28) -- ++(0.15,0) --
  ++(0,0.05) -- ++(-0.05,0) -- ++(-0.1125,0.45) -- ++(0.16,-0.2) --
  ++(0.05,0.05) -- ++(-0.2,0.25) -- ++(-0.04,0.02) arc[start angle=-74.0380,end angle=254.0380,radius=1mm] -- cycle;
  \end{tikzpicture}
  },
which will split the position into $\attackerPos{\ccsInm{r} \ccsChoice \ccsInm{y}\ccsPrefix\ccsInm{y}, \{\ccsInm{r}\ccsChoice\ccsInm{y}\}}$
and $\attackerPos{\ccsInm{r} \ccsChoice \ccsInm{y}\ccsPrefix\ccsInm{y}, \{\ccsInm{y}\ccsPrefix\ccsInm{y}\}}$.
After that, the player can click on a successor state in the left LTS to indicate an observation move.

Trzeciakiewicz~\cite{trzeciakiewicz2021ltbtBrowserGame} limits the scope to trace, failure, possible-future, simulation, and bisimulation equivalences,
excluding readiness, ready traces, and failure traces.
In effect, the spectrum can be relieved of the issues that come from situations where positive deep branches in a conjunction have to be counted.
This is another way of circumventing the problems discussed in \refSubsec{subsec:difference-tacas}.
In particular, this allows the game mechanics to stay clear of non-trivial partitions,
as they did originally in~\cite{bisping2021ltbtsTacas}.

Moreover, the game is single-player: there is no defender picking a conjunction answer.
Instead, the attacker has to name attacks for every right-hand state. Due to nested conjunctions, the game positions thus actually are sets of $\attackerPos{p, Q}$ tuples.

\section{Related Work and Alternatives}\label{sec:related}

The game and the algorithm presented fill a blank spot in between the following previous directions of work:

\subsection*{Distinguishing Formulas in General.}

Cleaveland~\cite{cleaveland1991} showed how to restore (small but non-minimal) distinguishing formulas for bisimulation equivalence from the execution of a bisimilarity checker based on the splitting of blocks.
He mentioned as possible future work to extend the construction to other notions of the spectrum.
We are not aware of any place where this has previously been done completely.
Korver~\cite{korver1991} extended this work to branching bisimulation;
he needed to extend HML with an until modality that restricts intermediate internal steps.
For example, $\varphi_1 \hmlObs{a} \varphi_2$ means ``the current state can take an $a$-step to a state that satisfies $\varphi_2$, while every intermediate state visited before the $a$-step satisfies $\varphi_1$.''
However, the algorithm in~\cite{korver1991} would not always produce a minimal formula in any sense.
There are related islands like the encoding between CTL and failure traces by Bruda and Zhang~\cite{bz2010modelcheckingCtlFt}.
There is also more recent work like Jasper et al.~\cite{jss2020characteristicInvariantsHML} extending to the generation of characteristic invariant formulas for bisimulation classes,
and like Wi{\ss}mann et al.~\cite{wms2021explainingBehavioralInequivalence} about generating distinguishing formulas for bisimulation in a general coalgebraic setting.
Previous algorithms for bisimulation inequivalence tend to generate formulas that alternate $\hmlObs{a}$ and $[b] \equiv \hmlNeg\hmlObs{b}\hmlNeg$ observations.
Such formulas can not as easily be linked to the spectrum as ours.

\subsection*{Game Characterizations of the Spectrum.}

After Shukla et al.~\cite{shr1995hornsatGames} had shown how to characterize many notions of equivalence by HORNSAT games,
Chen and Deng~\cite{cd2008gameCharacetrizations} presented a hierarchy of games characterizing all the equivalences of the linear-time--branching-time spectrum.
The games from~\cite{cd2008gameCharacetrizations} allow word moves and thus are infinite already for finite transition systems with cycles.
Therefore, they cannot be applied as easily as ours in algorithms.
Constructing distinguishing formulas from attacker strategies of these games would be less convenient than in our solution.
Their parametric approach is comparable to fixing maximal price budgets \emph{ex ante.}
Our on-the-fly picking of minimal prices provides more flexibility.

\subsection*{Using Game Characterizations for Distinguishing Formulas.}

There is recent work by Mika-Michalski et al.~\cite{km2020nonBisimCoalgebraic} on constructing distinguishing formulas using games in a more abstract coalgebraic setting focussed on the absence of bisimulation. The game and formula generation there, however, cannot easily be adapted for our purpose of performing a \emph{spectroscopy} also for weaker notions.

\subsection*{Generalizations of the Spectrum.}

As a by-product of our work, \refSubsec{subsec:priced-formulas} has recharted the linear-time--branching-time spectrum in a way that corresponds nicely with our algorithm. Each coordinate in the price lattice can be seen as characterizing a conceivable notion of equivalence. Thus, this characterization generalizes (a part of) the~\cite{glabbeek2001ltbtsiReport}-spectrum. For instance, we have not considered the revivals semantics~\cite{roscoe2009revivalsHierarchy}, but it can easily be characterized by $e_\mathrm{R} \sqcap e_\mathrm{FT} = (\infty, 1, 0, 1, 1, 1)$. So let us briefly mention important generalizations of the spectrum.

For instance, Lange et al.~\cite{lange2014modelCheckingProcEqs} introduce a \emph{higher-order dyadic $\mu$-calculus} where each single formula characterizes a notion of equivalence. This, too, allows a more unified computational treatment of the spectrum's members and many conceivable siblings.

The most complete \emph{observational generalization} of~\cite{glabbeek2001ltbtsiReport} has been provided by de Frutos Escrig et al.~\cite{erph2013unfyingLTBTS}. Their main dimensions are five layers of simulation, which correspond to classes of localizable annotations on an observaion tree (viz.:\@ no local observations, deadlock detection, enabled actions, enabled traces, enabled trees), and in which way to compare these annotations. The second main dimension actually is a combination of 5 to 9 sub-dimensions: (1) whether to compare only along one linear path, deterministic branching paths or all branching paths, whether to check for (2) superset and/or (3) subset relationship between annotations, and whether to do (each of) this (4) along the paths and/or (5) at the leafs of observation trees. Their lattice cannot only cover interesting notions of equivalence unreachable by ours (for instance impossible-futures-along-a-trace), but is also nicely linked to a unified system of axiomatic characterizations. Still, their characterization is less general than our price lattice in other regards. For instance, it cannot cover counting equivalences like $i$-step bisimilarity, $(i,\infty,\infty,\infty,\infty,\infty)$ or $n$-nested similarity.

\subsection*{Apartness and Simulation Distances}

Published around the same time as the conference version of the present paper, Geuvers and Jacobs~\cite{geuversJacobs2021apartness} discussed \emph{apartness} as the inductively defined dual of bisimulation. This is closely related to the approach of the present paper. We also tackle the hierarchy of equivalence problems by rather addressing the dual question of how easy it is to tell two processes apart. In this view, the expressiveness prices output by our algorithm provide information on ``how far apart'' two models are.

Another view on ``how far apart'' models are is provided by \emph{(bi)simulation distances} as researched by \v{C}ern{\'y} et al.~\cite{cerny2012simulationDistances}, and by Romero Hern{\'a}ndez and de Frutos Escrig~\cite{hernandezEscrig2012definingDistancesForAll}.
In this line of work, distances quantify how many and how relevant changes need to be made to a process in order to make it equivalent to another process with respect to a fixed semantics.
This is orthogonal to what we are doing:
Bisimulation distances are about ``how much cheating'' in a (bi)simulation game the defender would need to win.
Our multidimensional discrete expressiveness prices are about ``how much semantics'' the attacker needs to win in a bisimulation game without cheating.

\subsection*{Alternatives}

One can also find the finest notion of equivalence between two states
by \emph{gradually minimizing} the transition system with ever coarser equivalences from bisimulation to trace equivalence
until the states are conflated (possibly also trying branches of the spectrum).
Within a big tool suite of highly optimized algorithms this should be quite efficient.
We preferred the game approach, because it can uniformly be extended to the whole spectrum
and also has the big upside of explaining the in-equivalences by distinguishing formulas.

A variation of our approach, which we have already tried,
is to run the formula search on a \emph{directed acyclic subgraph} of the winning strategy graph.
For our purpose of finding most fitting equivalences,
DAG-ification may preclude the algorithm from finding the right formulas.
On the other hand, if one for instance is mainly interested in a short distinguishing formula with low depth,
one can speed up the process with DAG-ification by the order of remaining game rounds.

\subsection*{Optimizations}

The algorithm in this article has more than exponential time complexity.
To improve on that, one could reduce the number of conjunct challenges generated as follows.
First one only explores the finest conjunct challenge move $\attackerPos{p,Q} \gameMoveX{\land} \defenderPos{p,\{\{q\}\mid q \in Q\}}$,
and if that succeeds and leads to a formula with multiple positive branches,
one only needs to explore one or two additional conjunct challenges to find out whether there is a formula with a single positive (deep) branch.
Only formulas with one positive (deep) branch may be in a smaller language than the previously found formula.
This faster variant would still generate distinguishing formulas from the cheapest languages
(i.e.\@ they solve Problem~\ref{prob:abstract-obs-problem})
but these formulas are not always the cheapest ones in themselves
(i.e.\@ they do not solve Problem~\ref{prob:cheapest-distinction}).
This would require to interleave the game graph construction
(line~\ref{algo:line:construct-spectroscopy-game} of our Algorithm~\ref{alg:spectroscopy-algo})
with the generation of distinguishing formulas (lines~\ref{algo:line:begin-distinguish-formula}--\ref{algo:line:end-distinguish-formula}).

There are some other avenues for small optimizations:
For instance, the pruning of \refDef{def:formula-pruning} can be changed to take game positions into account.
If the right-hand state set $Q$ of $\attackerPos{p,Q}$ has more than one element, dominated formulas can be pruned regardless of their kind.
Also, the game construction can be sped up a bit by stopping the construction of attacker moves in positions that cannot be winning for the attacker.
In particular, this is the case for $\attackerPos{p,Q}$ if $p \in Q$.
The downside of this is that it becomes a little more difficult to retrieve a proper bisimulation relation in positions where the attacker does not win.

\section{Conclusion}%
\label{sec:conclusion}

In this paper, we have established a convenient way of finding distinguishing formulas that use a minimal amount of expressiveness.

System analysis tools can employ the algorithm to tell their users in more detail \emph{how equivalent} two process models are. While the generic approach is costly, instantiations to more specific, symbolic, compositional, on-the-fly or depth-bounded settings may enable wider applications. There are also some algorithmic tricks (like building the concrete formulas only after having found the price bounds and heuristics in handling the game graph) we have not explored in this paper.

More clever ways of characterizing the spectrum, the formula prices, and instances of admissible pruning might further improve the applicability of the approach.

So far, we have only looked at \emph{strong} notions of equivalence~\cite{glabbeek1990ltbt1}.
It is an interesting question how to extend our algorithm,
so it also deals with \emph{weak} notions of equivalence~\cite{glabbeek1993ltbt}.
These equivalences abstract over $\tau$-actions representing ``internal activity''
and correspond to observation languages with a special temporal $\hmlObs{\epsilon}$-observation (cf.~\cite{gfm2020congruenceOperator}).
Such an extension would generalize work on weak game characterizations such as de Frutos Escrig et al.'s~\cite{ekw2017gamesBisimAbstraction}
and our own~\cite{bn2019coupledsimTacas,bnp2020coupledsim30,bm2021contrasimilarity,SmetsersMJ16}.

A roadblock to expanding the approach to the weak spectrum lies in the fact that,
with the conventional modal characterizations of the weak equivalences, the observation languages are not closed.
Subformulas would need to be more expensive than their parent formulas.
For instance, $\hmlObs{\epsilon}\hmlObs{a}\hmlObs{\epsilon}\hmlTrue$ is a valid weak trace observation
but its subformula $\hmlObs{a}\hmlObs{\epsilon}\hmlTrue$ is not.
In order to apply our algorithm, one would need a different modal language, perhaps similar to Korver's until operator~\cite{korver1991}.

The vision is to arrive at \emph{one} certifying algorithm that can yield finest equivalences and cheapest distinguishing formulas as witnesses for the whole spectrum.

\subsection*{Acknowledgments}

We are thankful to members of the MTV research group and the reviewers of TACAS'21 and LMCS for lots of helpful comments.

\subsection*{Data availability}

The source code git repository of our implementation is on GitHub and can be accessed via \url{https://concurrency-theory.org/ltbt-spectroscope/code/}. The accompanying source code is archived on Zenodo~\cite{bisping2022ltbtsZenodo2}. The deprecated TACAS'21 version remains available there as well~\cite{bisping2021ltbtsZenodo}.

\bibliographystyle{alphaurl}

\bibliography{similarities}

\end{document}